\theoremstyle{plain}
\newtheorem{theorem}{Theorem}[section]	
\newtheorem{lemma}{Lemma}[section]
\theoremstyle{definition}
\newtheorem{remark}{Remark}[section]
\DeclareMathOperator{\rank}{rank}
\DeclareMathOperator{\tr}{tr}
\DeclareMathOperator{\PR}{\mathsf{P}}
\DeclareMathOperator{\K}{\mathsf{K}}
\DeclareMathOperator{\Q}{\mathsf{Q}}
\renewcommand{\qed}{\hfill{\tiny \ensuremath{\blacksquare} }}%
\newcommand{\Ep}{{\mathrm{E}}}
\renewcommand{\Pr}{{\mathrm{P}}}
\newcommand{\df}{\mathrm{df}}
\newcommand{\sh}{{d}}
\newcommand{\lasso}{\mathrm{lasso}}
\newcommand{\ridge}{\mathrm{ridge}}
\newcommand{\lava}{\mathrm{lava}}
\numberwithin{equation}{section}
\begin{document}


\title{A lava attack on the recovery of sums of dense and sparse signals}\thanks{We are
grateful to Garry Chamberlain, Guido Imbens, Anna Mikusheva, Philippe Rigollet  for helpful discussions.}


\author{Victor Chernozhukov}
\address{Department of Economics, MIT,  Cambridge, MA 02139}
\email{vchern@mit.edu}

\author{Christian Hansen}
\address{Booth School of Business, University of Chicago, Chicago, IL 60637}
\email{Christian.Hansen@chicagobooth.edu}
 
\author{Yuan Liao}
\address{Department of Mathematics, University of Maryland, College Park, MD 20741}
\email{yuanliao@umd.edu}





\begin{abstract}
Common high-dimensional methods for prediction rely on having either a sparse signal model, a model in which most parameters are zero and there are a small number of non-zero parameters that are large in magnitude, or a dense signal model, a model with no large parameters and very many small non-zero parameters.  We consider a generalization of these two basic models, termed here a ``sparse+dense" model, in which the signal is given by the sum of a sparse signal and a dense signal.  Such a structure poses problems for traditional sparse estimators, such as the lasso, and for traditional dense estimation methods, such as ridge estimation.  We propose a new penalization-based  method, called lava, which is computationally efficient. With suitable choices of  penalty parameters, the proposed method strictly dominates both lasso and ridge.  We derive analytic expressions for the finite-sample risk function of the lava estimator in the Gaussian sequence model.  We also provide an  deviation bound for the prediction risk in the Gaussian regression model with fixed design. In both cases, we provide Stein's unbiased estimator for lava's prediction risk. A simulation example compares the performance of lava to lasso, ridge, and elastic net in a regression example using feasible, data-dependent penalty parameters and illustrates lava's improved performance relative to these benchmarks.
 


\end{abstract}

  \maketitle




\textbf{Key words:} high-dimensional models, penalization, shrinkage, non-sparse signal recovery
 
 \section{Introduction}
 
Many recently proposed high-dimensional modeling techniques  build upon the fundamental assumption of sparsity.  Under sparsity, we can approximate a high-dimensional signal or parameter by a sparse vector that has a relatively small number of non-zero components.  Various $\ell_1$-based penalization methods, such as the lasso and soft-thresholding, have been proposed for signal recovery,  prediction, and parameter estimation within a sparse signal framwork. See \cite{frank1993statistical},  \cite{donoho1995adapting}, \cite{tibshirani96}, \cite{fan2001variable}, \cite{efron2004least}, \cite{zou2005regularization}, \cite{zhao2006model}, \cite{yuan2006model},  \cite{bunea2007sparsity}, \cite{candes2007dantzig}, \cite{fan2008sure}, \cite{Bickeletal}, \cite{meinshausen2009lasso}, \cite{wainwright2009sharp}, \cite{bunea2010spades}, \cite{zhang2010nearly}, \cite{loh2013regularized}, and others.   By virtue of being based on $\ell_1$-penalized optimization problems, these methods produce sparse solutions in which many estimated model parameters are set exactly to zero.  

Another commonly used shrinkage method is ridge estimation. Ridge estimation differs from the aforementioned $\ell_1$-penalized approaches in that it does not produce a sparse solution but instead provides a solution in which all model parameters are estimated to be non-zero. Ridge estimation is thus particularly suitable when the model's parameters or unknown signals contain many very small components, i.e. when the model is dense.  See, e.g., \cite{ hsu2014random}.  Ridge estimation tends to work better than sparse methods whenever a signal is dense in such a way that it can not be well-approximated by a sparse signal. 


In practice, we may face environments that have signals or parameters which are neither dense nor sparse.  The main results of this paper provide a model that is appropriate for this environment and a corresponding estimation method with good estimation and prediction properties. Specifically, we consider models  where the signal or parameter, $\theta$, is given by the superposition of sparse and dense signals:
\begin{equation}\label{model}
\theta=\underbrace{\beta}_{\text{dense part}}+\underbrace{\delta}_{\text{sparse part}}.
\end{equation}
Here, $\delta$ is a sparse vector that has a relatively small number of large entries, and $\beta$ is a dense vector having possibly very many small, non-zero entries.  Traditional sparse estimation methods, such as lasso, and traditional dense estimation methods, such as ridge, are tailor-made to handle respectively sparse signals and dense signals.  However, the model for $\theta$ given above is ``sparse+dense" and cannot be well-approximated by either a ``dense only" or ``sparse only" model.  Thus, traditional methods designed for either sparse or dense settings are not optimal within the present context.

Motivated by this signal structure,  we propose a new estimation method, called ``lava".   Let $\ell(\mathrm{data}, \theta)$ be a general statistical loss function that depends on unknown parameter $\theta$, and let $p$ be the dimension of $\theta$.  To estimate $\theta$, we propose the ``lava" estimator given by
\begin{equation}\label{define lava 1a}
\widehat\theta_{\lava}= \widehat\beta + \widehat\delta 
\end{equation}
where  $\widehat\beta $ and $ \widehat\delta$ solve the following penalized optimization problem:
\begin{equation}\label{define lava 1b}
(\widehat\beta,\widehat\delta)=\arg\min_{(\beta',\delta')'\in\mathbb{R}^{2p} } \Big \{ \ell(\mathrm{data}, \beta+\delta)+\lambda_2\|\beta\|_2^2+\lambda_1\|\delta\|_1 \Big \}.
\end{equation}
In the formulation of the problem, $\lambda_2$ and  $\lambda_1$ are tuning parameters corresponding to the $\ell_2$- and $\ell_1$- penalties which are respectively applied to the dense part of the parameter, $\beta$, and the sparse part of the parameter, $\delta$.  The resulting estimator is then the sum of a dense and a sparse estimator.   
Note that the separate identification of $\beta$ and $\delta$ is not required in (\ref{model}), and the lava estimator  is designed to  automatically recover the combination $\widehat\beta+\widehat\delta$ that leads to the optimal prediction of $\beta + \delta$. Moreover, under  standard conditions for $\ell_1$-optimization, the lava solution exists and is unique. In naming the proposal  ``lava", we emphasize that it is able, or at least aims, to capture or wipe out both sparse and dense signals.

The lava estimator admits the lasso and ridge shrinkage methods as two extreme cases by respectively  setting either $\lambda_2=\infty$ or $\lambda_1=\infty$.\footnote{With $\lambda_1 = \infty$ or $\lambda_2 = \infty$, we set $\lambda_1\|\delta\|_1 = 0$ when $\delta = 0$ or $\lambda_2\|\beta\|_2^2 = 0$ when $\beta = 0$ so the problem is well-defined.} In fact, it continuously connects the two shrinkage functions in a way that guarantees it will never produce a sparse solution when $\lambda_2<\infty$.  Of course, sparsity is not a requirement for making good predictions. By construction, lava's prediction risk is less than or equal to the prediction risk of the lasso and ridge methods with oracle choice of penalty levels for ridge, lasso, and lava; see Figure \ref{f1}.  
Lava also tends to perform no worse than, and often performs significantly better than, ridge or lasso with penalty levels chosen by cross-validation; see Figures \ref{f4} and \ref{f5}.  

Note that our proposal is rather different from the elastic net method, which also uses a combination  of $\ell_1$ and $\ell_2$ penalization.  The elastic net penalty function is $\theta \mapsto \lambda_2 \|\theta\|^2_2 + \lambda_1 \|\theta\|_1$, and thus the elastic net also includes lasso and ridge as extreme cases corresponding to $\lambda_2 = 0$ and $\lambda_1 = 0$ respectively.  In sharp contrast to the lava method, the elastic net does \textit{not} split $\theta$ into a sparse and a dense part and will produce a sparse solution as long as $\lambda_1 > 0$.  Consequently, the elastic net method can be thought of as a sparsity-based method with additional shrinkage by ridge.  The elastic net processes data very differently from lava (see Figure 2 below) and consequently has very different prediction risk behavior (see Figure 1 below).

We also consider the post-lava estimator which refits the sparse part of the model:
\begin{equation}\label{define post-lava 1}
\widehat\theta_{\mathrm{post}\text{-}\lava}= \widehat\beta + \widetilde \delta,
\end{equation}
where $ \widetilde \delta$ solves the following penalized optimization problem:
\begin{equation}\label{define post-lava 2}
\widetilde \delta=\arg\min_{\delta \in\mathbb{R}^{p} } \Big \{ \ell(\mathrm{data}, \widehat \beta + \delta): \delta_j = 0, \text{ if } \widehat \delta_j =0 \Big \}.
\end{equation}
This estimator removes the shrinkage bias induced by using the $\ell_1$ penalty in estimation of the sparse part of the signal.  Removing this bias sometimes results in further improvements of lava's risk properties.

  \begin{figure}[htbp]
\begin{center}
  \includegraphics[width=10cm]{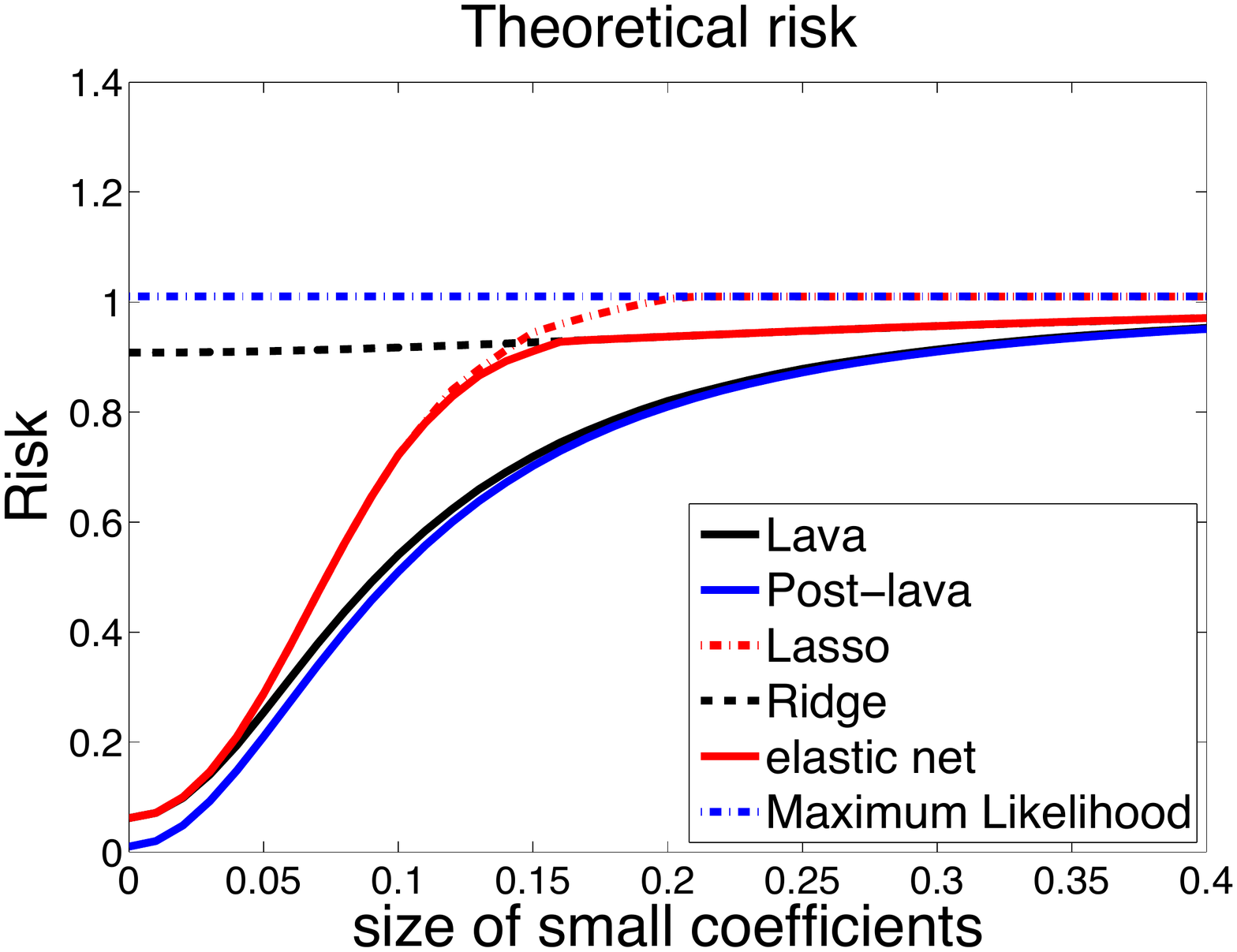}
 \includegraphics[width=10cm]{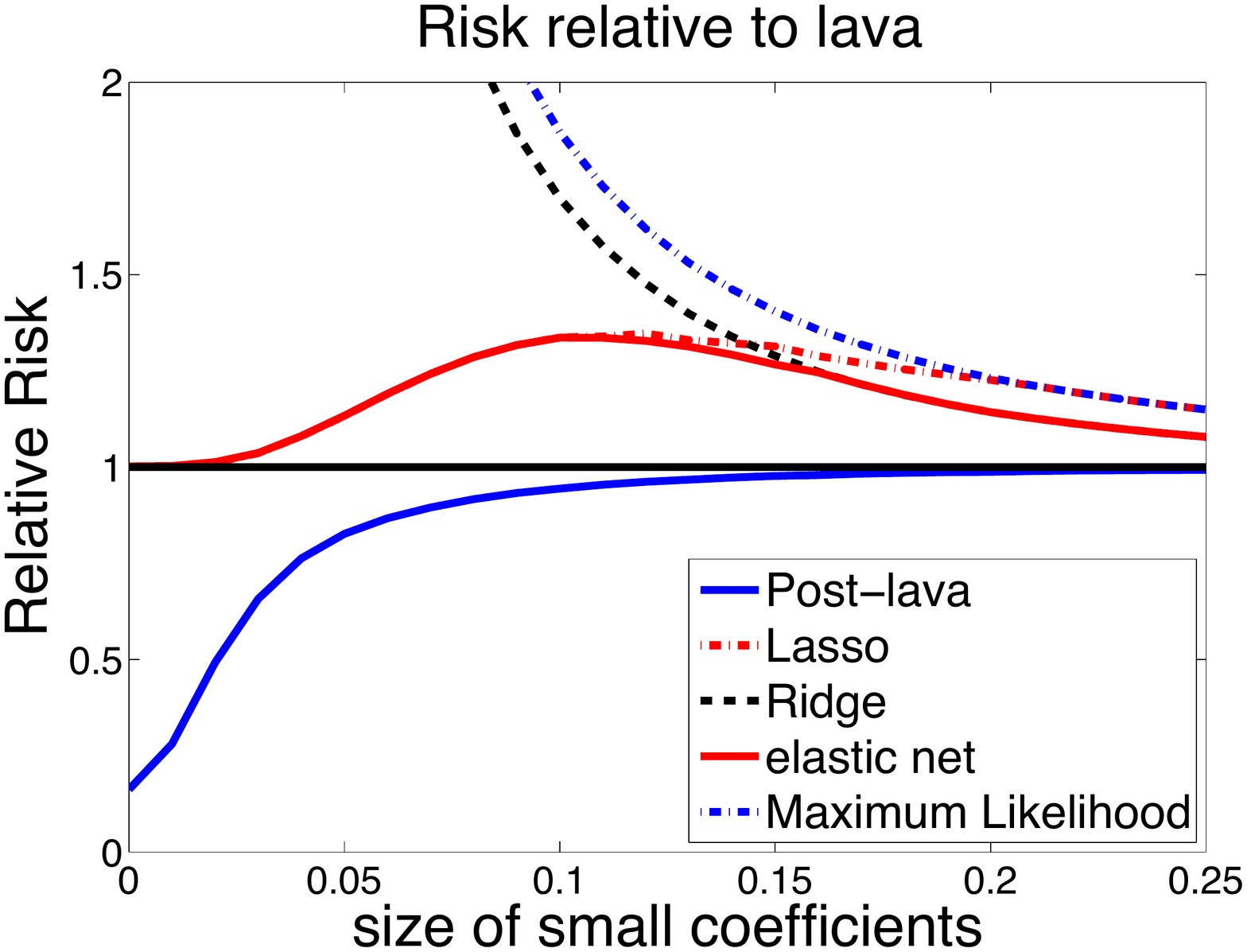}
\label{f1}
\end{center}
\caption{\footnotesize Exact risk and relative risk functions of lava, post-lava, ridge, lasso,  elastic net, and maximum likelihood in the Gaussian sequence model with ``sparse+dense'' signal structure, using the oracle (risk minimizing) choices of penalty levels.  See Section 2.5 for the description of the model. The size of ``small coefficients" is shown on the horizontal axis. The size of these coefficients directly corresponds to the size of the ``dense part" of  the signal, with zero corresponding to the exactly sparse case.   Relative risk plots the ratio of the risk of each estimator to the lava risk. 
Note that the relative risk plot is over a smaller set of sizes of small coefficients to accentuate comparisons over the region where there are the most interesting differences between the estimators.
}
 
\end{figure}

We provide several theoretical and computational results about lava in this paper. First, we provide analytic expressions for the finite-sample risk function of the lava estimator as well as for other methods in the Gaussian sequence model and in a fixed design regression model with Gaussian errors.  Within this context, we exhibit ``sparse+dense" examples where lava significantly outperforms both lasso and ridge.  Stein's unbiased risk estimation plays a central role in our theoretical analysis, and we thus derive Stein's unbiased risk estimator (SURE) for lava.  We also characterize lava's ``Efron's'' degrees of freedom (\cite{efron2004estimation}). Second, we give deviation bounds for the prediction risk of the lava estimator in regression models akin to those derived by  \cite{Bickeletal} for lasso.  Third, we illustrate lava's performance relative to lasso, ridge, and elastic net through simulation experiments using penalty levels chosen via either minimizing the SURE or by k-fold cross-validation for all estimators.  In our simulations, lava outperforms lasso and ridge in terms of prediction error over a wide range of regression models with coefficients that vary from having a rather sparse structure to having a very dense structure.  When the model is very sparse, lava performs as well as lasso and outperforms ridge substantially.  As the model becomes more dense in the sense of having the size of the ``many small coefficients" increase, lava outperforms lasso and performs just as well as ridge.   This is consistent with our theoretical results.

We conclude the introduction  by noting that our proposed approach complements other recent approaches
to structured sparsity problems such as those considered in fused sparsity estimation (\cite{tibshirani2005sparsity} and \cite{dalalyan2012fused}) and structured matrix estimation problems (\cite{candes2011robust}, \cite{chandrasekaran2011rank}, \cite{POET}, and \cite{klopp2014robust}).  The latter line of research
studied estimation of matrices that can be written as  low rank plus sparse matrices.  Our new results are related to but are sharply different from this latter line of work since our focus is on regression problems.  Specifically, our chief objects of interest are regression coefficients along with the associated regression function and predictions of the outcome variable.  Thus, the target statistical applications of our developed methods include prediction, classification, curve-fitting, and supervised learning.  Another noteworthy point is that it is impossible to recover the ``dense" and ``sparse" components  separately within our framework; instead, we recover the sum of the two components.  By contrast, it is possible to recover the low-rank component of the matrix separately from the sparse part in some of the structured matrix estimation problems.  This distinction serves to highlight the difference between structured matrix estimation problems and the framework discussed in this paper.  Due to these differences, the mathematical side of our analysis needs to address a completely different set of issues than are addressed in the aforementioned structured matrix estimation problems.


We organize the rest of the paper as follows. Section 2 defines the lava shrinkage estimator in a canonical Gaussian sequence model, and derives its theoretical risk function. Section 3 defines and analyzes the lava estimator in the regression model. Section 4 provides computational examples, and Section 5 concludes.  We give all proofs in the appendix.

\textbf{Notation.}  The notation $a_n \lesssim b_n$ means that $a_n \leq C b_n$ for all $n$, for some constant $C$
that does not depend on $n$. The $\ell_{2}$ and $\ell_{1}$ norms are denoted by
$\|\cdot\|_2$ (or simply $\|\cdot\|$) and $\| \cdot \|_{1}$, respectively.  The $\ell_{0}$-``norm", $\|\cdot\|_0$, denotes the number of non-zero components of a vector, and the $\|.\|_{\infty}$ norm denotes a vector's maximum absolute element.  When applied to a matrix, $\|\cdot\|$ denotes the operator norm.
We use the notation $a \vee b = \max( a, b)$ and $a \wedge b = \min(a , b)$.  We use $x'$ to denote the transpose of a column vector $x$.

\section{The lava  estimator in a canonical model}
 
\subsection{The one dimensional case}
Consider the simple problem where a scalar random variable is given by
$$Z = \theta+ \epsilon,\quad \epsilon \sim N(0,\sigma^2).$$
We observe a realization $z$ of $Z$ and wish to estimate $\theta$.  Estimation will often involve the use of regularization or shrinkage
via penalization to process input $z$ into output $\sh(z)$, where the map
$z \mapsto \sh(z)$ is commonly referred to as the shrinkage (or decision) function.  A generic shrinkage
estimator then takes the form $\widehat \theta = \sh(Z)$. 

The commonly used lasso method uses $\ell_1$-penalization and gives rise to the lasso or soft-thresholding shrinkage 
function:
$$
\sh_{\lasso}(z)=\ \arg\min_{\theta \in \mathbb{R}} \Big \{ (z-\theta)^2+\lambda_l|\theta| \Big \} = (|z|-\lambda_l/2)_+\text{sign}(z),
$$
where $y_+ := \max(y,0)$ and $\lambda_l \geq 0$ is a penalty level. The use of the $\ell_2$-penalty in place of the $\ell_1$ penalty yields the ridge shrinkage function:
$$
\sh_{\ridge}(z)=\ \arg\min_{\theta \in \mathbb{R}} \ \Big \{ (z-\theta)^2+\lambda_r |\theta|^2  \Big\} =\frac{z}{1+\lambda_r},
$$
where $\lambda_r \geq 0$ is a penalty level. The lasso and ridge estimators then take the form $$\widehat \theta_{\lasso} = \sh_{\lasso} (Z), \quad 
\widehat \theta_{\ridge} = \sh_{\ridge} (Z).$$ 
Other commonly used shrinkage methods include the elastic-net (\cite{zou2005regularization}), which uses $\theta \mapsto \lambda_2|\theta|^2+\lambda_1|\theta|$ as the penalty function; hard-thresholding; and the SCAD (\cite{fan2001variable}), which uses a non-concave penalty function.

Motivated by points made in the introduction, we proceed differently. We decompose the signal into two components $$\theta=\beta+\delta,$$ 
and use the different penalty functions -- the $\ell_2$ and $\ell_1$ -- for each component in order to predict $\theta$ better.
We thus consider the penalty function 
$$(\beta, \delta) \mapsto \lambda_2|\beta|^2+\lambda_1|\delta|,$$ 
and introduce the ``lava" shrinkage function  $z \mapsto \sh_{\lava}(z)$ defined by
\begin{eqnarray}\label{eq1.1.1}
\sh_{\lava}(z) := \sh_2(z) + \sh_1(z), \quad 
\end{eqnarray}
where  $\sh_1(z)$ and $\sh_2(z)$ solve the following penalized prediction problem:
\begin{eqnarray}\label{eq1.1}
(\sh_2(z),\sh_1(z)):=\arg\min_{(\beta,\delta) \in \mathbb{R}^2}\ \Big \{[z- \beta-\delta]^2+\lambda_2|\beta|^2+\lambda_1|\delta| \Big \}.
\end{eqnarray}
Although the decomposition $\theta=\beta+\delta$ is not unique, the optimization problem (\ref{eq1.1})
has a unique solution for any given $(\lambda_1, \lambda_2)$. 
The proposal thus defines the lava estimator of $\theta$:
$$\widehat\theta_{\lava} = \sh_{\lava}(Z).$$   

For large signals such that $|z|>\lambda_1/(2k)$, lava has the same bias as the lasso.  This bias can be removed through the use of the post-lava estimator$$
\widehat\theta_{\mathrm{post}-\lava} = \sh_{\mathrm{post}-\lava}(Z),$$
 where $\sh_{\mathrm{post}-\lava}(z) := \sh_2(z) + \tilde \sh_1(z),$ 
 and $\tilde \sh_1(z)$ solves the following penalized prediction problem:
\begin{eqnarray}\label{eq1.1.pl}
\tilde \sh_1(z):=\arg\min_{ \delta \in \mathbb{R}}\Big\{ [z-\sh_2(z)-\delta]^2:  \delta =0 \text{ if }  \sh_1(z) =0 \Big \}.
\end{eqnarray}
The removal of this bias will result in improved risk performance relative to the original estimator in some contexts.

From the Karush-Kuhn-Tucker conditions, we obtain the  explicit solution to (\ref{eq1.1.1}).
\begin{lemma}\label{l2.1}
For given penalty levels  $\lambda_1 \geq 0$ and $\lambda_2 \geq 0$: 
\begin{eqnarray}\label{eq2.2.1}
 \sh_{\lava}(z)&=&(1-k)z+ k(|z|-\lambda_1/(2k))_+\mathrm{sign}(z)\\
  &=&\left \{ \begin{array}{lll}
 z-\lambda_1/2,& z>\lambda_1/(2k)\\
(1-k)z, & -\lambda_1/(2k)\leq z \leq\lambda_1/(2k)\\
 z+ \lambda_1/2,& z<-\lambda_1/(2k)
 \end{array} \right. \label{eq2.2}
\end{eqnarray}
where $k:=\frac{\lambda_2}{1+\lambda_2}.$  The post-lava shrinkage function is given by
$$
\sh_{\mathrm{post}\text{-}\lava}(z)=\begin{cases}
 z,& |z|>\lambda_1/(2k),\\
(1-k)z, & |z|\leq\lambda_1/(2k).
 \end{cases}
$$
\end{lemma}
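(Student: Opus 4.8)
The plan is to exploit convexity and solve \eqref{eq1.1} by first profiling out the dense component $\beta$. The objective in \eqref{eq1.1} is jointly convex in $(\beta,\delta)$ — it is the sum of the convex quadratic $[z-\beta-\delta]^2+\lambda_2|\beta|^2$ and the convex term $\lambda_1|\delta|$ (and, for $\lambda_2>0$, strictly convex, giving the uniqueness asserted in the text) — so a point satisfying the first-order (KKT) conditions is a global minimizer. For fixed $\delta$, minimizing over $\beta$ is a smooth strictly convex problem whose stationarity condition $-2(z-\beta-\delta)+2\lambda_2\beta=0$ gives $\beta=(z-\delta)/(1+\lambda_2)=(1-k)(z-\delta)$, with $k=\lambda_2/(1+\lambda_2)$ so that $1-k=1/(1+\lambda_2)$.

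Next I would substitute this back into the objective. Using $z-\beta-\delta=k(z-\delta)$ and $\lambda_2|\beta|^2=\lambda_2(1-k)^2(z-\delta)^2=k(1-k)(z-\delta)^2$ (since $\lambda_2(1-k)^2=\lambda_2/(1+\lambda_2)^2=k(1-k)$), the concentrated objective becomes
\[
k^2(z-\delta)^2+k(1-k)(z-\delta)^2+\lambda_1|\delta|=k(z-\delta)^2+\lambda_1|\delta|,
\]
by $k^2+k(1-k)=k$. For $\lambda_2>0$ this is $k$ times the univariate lasso objective $(z-\delta)^2+(\lambda_1/k)|\delta|$, whose minimizer is the soft-thresholding $\sh_1(z)=(|z|-\lambda_1/(2k))_+\,\mathrm{sign}(z)$ already recorded for $\sh_{\lasso}$ in the text (penalty level $\lambda_l=\lambda_1/k$). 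Then $\sh_2(z)=(1-k)(z-\sh_1(z))$, hence
\[
\sh_{\lava}(z)=\sh_2(z)+\sh_1(z)=(1-k)z+k\,\sh_1(z)=(1-k)z+k(|z|-\lambda_1/(2k))_+\,\mathrm{sign}(z),
\]
which is \eqref{eq2.2.1}; the boundary cases $\lambda_2=0$ (i.e.\ $k=0$) and $\lambda_1=\infty$ follow directly from \eqref{eq1.1} under the stated conventions and agree with this formula using $(|z|-\infty)_+=0$. The piecewise form \eqref{eq2.2} then comes from splitting into $z>\lambda_1/(2k)$, $|z|\le\lambda_1/(2k)$, $z<-\lambda_1/(2k)$ and simplifying $(1-k)z+k(z\mp\lambda_1/(2k))=z\mp\lambda_1/2$.

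Finally, for the post-lava function I would split according to whether $\sh_1(z)=0$. If $|z|\le\lambda_1/(2k)$ then $\sh_1(z)=0$, the constraint in \eqref{eq1.1.pl} forces $\tilde\sh_1(z)=0$, and since $\sh_2(z)=(1-k)(z-\sh_1(z))=(1-k)z$ we obtain $\sh_{\mathrm{post}\text{-}\lava}(z)=(1-k)z$. If $|z|>\lambda_1/(2k)$ then $\sh_1(z)\neq0$, the constraint is vacuous, so $\tilde\sh_1(z)=z-\sh_2(z)$ minimizes the unconstrained quadratic and $\sh_{\mathrm{post}\text{-}\lava}(z)=\sh_2(z)+(z-\sh_2(z))=z$, as claimed. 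There is no real obstacle here; the only point needing care is the non-smoothness of $|\delta|$, so the soft-thresholding step must be justified via the subgradient condition $0\in-2(z-\delta)+(\lambda_1/k)\,\partial|\delta|$ rather than by differentiation — exactly the computation already performed for $\sh_{\lasso}$ in the preceding text.
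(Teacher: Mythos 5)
Your proof is correct and follows essentially the same route as the paper: profile out $\beta$ via its first-order condition, observe the concentrated objective is $k(z-\delta)^2+\lambda_1|\delta|$, solve it by soft-thresholding at $\lambda_1/(2k)$, and recombine to get $(1-k)z+k\,\sh_1(z)$. Your explicit verification of the post-lava cases and the boundary conventions is a minor addition the paper leaves implicit, but the argument is the same.
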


Figure \ref{f2} plots the lava shrinkage function along with  various alternative shrinkage functions for $z>0$. The top panel of the figure compares lava shrinkage to ridge, lasso, and elastic net shrinkage.  
It is clear from the figure that lava shrinkage is different from lasso, ridge, and elastic net shrinkage.  The figure also illustrates how lava provides a bridge between lasso and ridge, with the lava shrinkage function coinciding with the ridge shrinkage function for small values of the input $z$ and coinciding with the lasso shrinkage function for larger values of the input. 
Specifically, we see that the lava shrinkage function is a combination of lasso and ridge shrinkage that corresponds to using whichever of the lasso or ridge shrinkage is closer to the 45 degree line. 

It is also useful to consider how lava and post-lava compare with the post-lasso or hard-thresholding shrinkage: 
$\sh_{\text{post-lasso}}(z)=z1\{|z|>\lambda_l/2\}.$ These different shrinkage functions are illustrated in the bottom panel of Figure \ref{f2}.

\begin{figure}[htbp]
\begin{center}
\includegraphics[width=9cm]{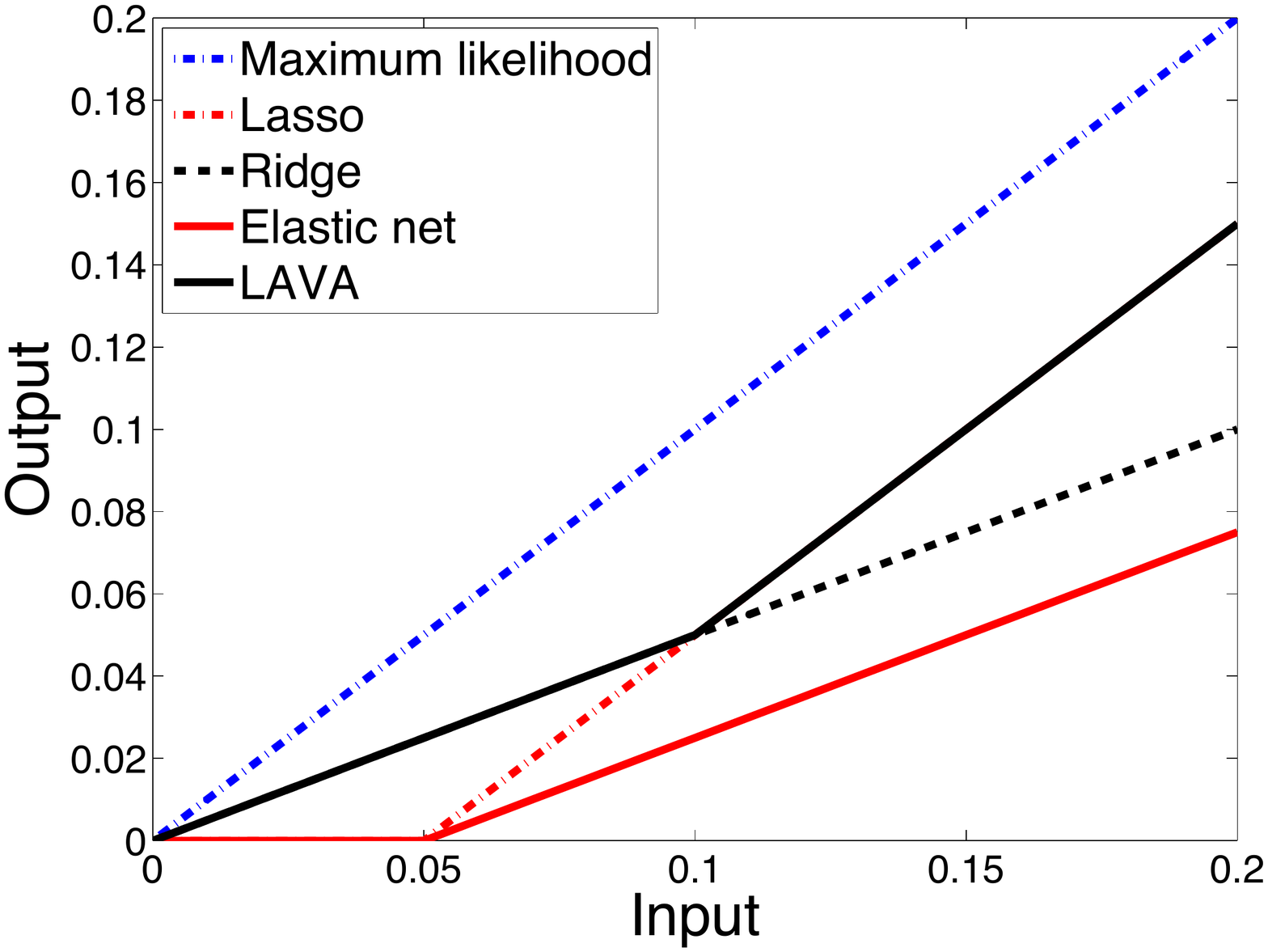}
\includegraphics[width=9cm]{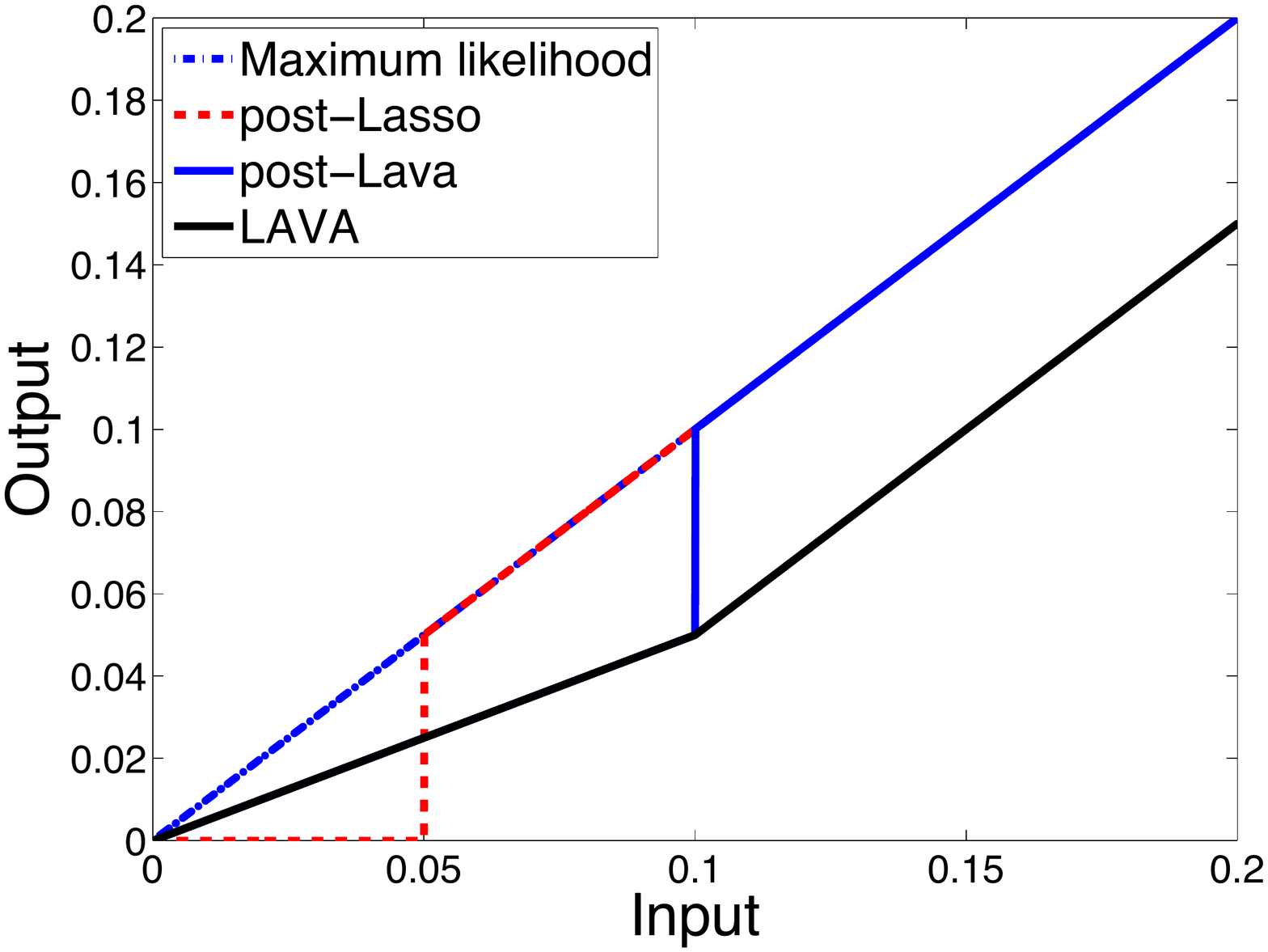}
\caption{\footnotesize Shrinkage functions.  Here we plot shrinkage functions implied by lava and various commonly used penalized estimators.  These shrinkage functions correspond to the case where penalty parameters are set as $\lambda_2 =\lambda_r = 1/2$  and $\lambda_1 = \lambda_l =1/2$.  In each figure, the light blue dashed line provides the 45 degree line coinciding to no shrinkage.  }
\label{f2}
\end{center}
\end{figure}

From (\ref{eq2.2.1}), we observe some key characteristics of the lava shrinkage function:

\textbf{1)} The lava shrinkage admits the lasso and ridge shrinkages as two extreme cases.
The lava and lasso shrinkage functions are the same when $\lambda_2=\infty$, and the ridge and lava shrinkage functions coincide if $\lambda_1 = \infty$.

\textbf{2)}  The lava shrinkage function $\sh_{\lava}(z)$ is a weighted average of data $z$ and the lasso shrinkage
function $\sh_{\lasso}(z)$ with weights given by  $1-k$ and $k$.

\textbf{3)}  The lava never produces a sparse solution when $\lambda_2<\infty$:  If $\lambda_2<\infty$, $\sh_{\lava}(z)=0$
if and only if $z=0$. This behavior is strongly different from elastic net which produces a sparse solution as long as $\lambda_1 > 0$.  

\textbf{4)} The lava shrinkage function continuously connects the ridge shrinkage function and the lasso shrinkage function.  When $|z|<\lambda_1/(2k)$, lava shrinkage is equal to ridge shrinkage; and when $|z|>\lambda_{1}/(2k)$, lava shrinkage is equal to lasso shrinkage.

\textbf{5)} The lava shrinkage does \textit{exactly the opposite} of the elastic net shrinkage.  The elastic net shrinkage function coincides with the lasso shrinkage function when $|z|<\lambda_1/(2k)$; and when $|z|>\lambda_{1}/(2k)$, the elastic net shrinkage is the same as ridge shrinkage.


\subsection{The risk function of the lava estimator in the one dimensional case}
In the one-dimensional case with $Z\sim N(\theta, \sigma^2)$, a natural measure of the risk of a given estimator $\widehat\theta = \sh(Z)$ is given by
\begin{eqnarray}\label{eq2.3}
 R(\theta, \widehat\theta)&=&\Ep[\sh(Z)-\theta]^2\cr
 &=&-\sigma^2+\Ep(Z-\sh(Z))^2+2\Ep[(Z-\theta)\sh(Z)].
\end{eqnarray}
 
 Let $\Pr_{\theta,\sigma}$ denote the probability law of $Z$.  Let $\phi_{\theta,\sigma}$ be the density function of $Z$.  We provide the risk functions of lava and post-lava in the following theorem.  We also present the risk functions of ridge, elastic  net, lasso, and post-lasso for comparison.     
\begin{theorem}[Risk Function of Lava and Related Estimators in the Scalar Case] \label{th2.1}
 Suppose $Z\sim N(\theta, \sigma^2)$.  Then for $w=\lambda_1/(2k)$, $k=\lambda_2/(1+\lambda_2)$, $h=1/(1+\lambda_2)$, $d=-\lambda_1/(2(1+\lambda_2))-\theta$ and $g=\lambda_1/(2(1+\lambda_2))-\theta$, we have 
   \begin{eqnarray*}
&& R(\theta,\widehat\theta_{\lava}) =  -k^2(w  +\theta)\phi_{\theta,\sigma}(w)\sigma^2+k^2( \theta-w)\phi_{\theta,\sigma}(-w)\sigma^2 \cr
&& \hspace{1in} +(\lambda_1^2/4+\sigma^2)\Pr_{\theta,\sigma}(|Z|>w) +(\theta^2k^2+(1-k)^2\sigma^2)\Pr_{\theta,\sigma}(|Z|<w),
\cr
&& R(\theta,\widehat\theta_{\mathrm{post}\text{-}\lava}) =  \sigma^2[-k^2w  +2kw -k^2\theta ]\phi_{\theta,\sigma}(w)+\sigma^2 [-k^2w+2kw+k^2\theta]\phi_{\theta,\sigma}(-w)\cr
&& \hspace{1in} + \sigma^2\Pr_{\theta,\sigma}(|Z|>w)+(k^2\theta^2+(1-k)^2\sigma^2)\Pr_{\theta,\sigma}(|Z|<w),\cr
&& R(\theta,\widehat\theta_{\lasso}) = -(\lambda_l/2  +\theta)\phi_{\theta,\sigma}(\lambda_l/2)\sigma^2+( \theta-\lambda_1/2)\phi_{\theta,\sigma}(-\lambda_l/2)\sigma^2 \cr
&&\hspace{1in} +(\lambda_l^2/4+\sigma^2)\Pr_{\theta,\sigma}(|Z|>\lambda_l/2) +\theta^2\Pr_{\theta,\sigma}(|Z|<\lambda_l/2),   \cr
&& R(\theta,\widehat\theta_{\mathrm{post}\text{-}\lasso}) =  (\lambda_l/2 -\theta )\phi_{\theta,\sigma}(\lambda_l/2)\sigma^2+ (\lambda_l/2+\theta)\phi_{\theta,\sigma}(-\lambda_l/2)\sigma^2\cr
&& \hspace{1in} + \sigma^2\Pr_{\theta,\sigma}(|Z|>\lambda_r/2)+\theta^2\Pr_{\theta,\sigma}(|Z|<\lambda_r/2),\cr
&& R(\theta,\widehat\theta_{\ridge}) =   \theta^2\widetilde k^2+(1-\widetilde k)^2\sigma^2,\quad \widetilde k=\lambda_r/(1+\lambda_r),\cr
&&R(\theta,\widehat\theta_{\mathrm{elastic  \ net}})= \sigma^2(h^2\lambda_1/2+h^2\theta+2dh) \phi_{\theta,\sigma}(\lambda_1/2)\cr
&& \hspace{1in}-\sigma^2(-h^2\lambda_1/2+h^2\theta+2gh)\phi_{\theta,\sigma}(-\lambda_1/2)+\theta^2\Pr_{\theta,\sigma}(|Z|<\lambda_1/2)\cr
&& \hspace{1in}+ ((h\theta+d)^2+h^2\sigma^2)\Pr_{\theta,\sigma}(Z>\lambda_1/2)\\
&& \hspace{1in} +((h\theta+g)^2+h^2\sigma^2)\Pr_{\theta,\sigma}(Z<-\lambda_1/2).
\end{eqnarray*}

\end{theorem}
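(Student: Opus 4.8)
The argument is a direct computation carried out separately for each estimator, using only the explicit shrinkage functions (Lemma~\ref{l2.1}, plus the analogous formula for the elastic net derived below) together with elementary truncated-Gaussian moments. The backbone for the lava and lasso cases is the identity~\eqref{eq2.3}, $R(\theta,\widehat\theta)=-\sigma^2+\Ep(Z-\sh(Z))^2+2\Ep[(Z-\theta)\sh(Z)]$; for the remaining estimators it is equally convenient to expand $\Ep[\sh(Z)-\theta]^2$ outright. The only analytic input is obtained by differentiating the Gaussian density, $(z-\theta)\phi_{\theta,\sigma}(z)=-\sigma^2\phi_{\theta,\sigma}'(z)$, and integrating by parts, which gives for every real $a$
\begin{align*}
\Ep[Z\mathbf{1}\{Z>a\}]&=\theta\,\Pr_{\theta,\sigma}(Z>a)+\sigma^2\phi_{\theta,\sigma}(a),\\
\Ep[Z^2\mathbf{1}\{Z>a\}]&=(\theta^2+\sigma^2)\Pr_{\theta,\sigma}(Z>a)+\sigma^2(a+\theta)\phi_{\theta,\sigma}(a),
\end{align*}
together with the mirror formulas on $\{Z<a\}$ and, by subtraction, the two-sided versions on $\{|Z|\le w\}$.

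\textbf{Lava and lasso.}
By Lemma~\ref{l2.1}, $\sh_{\lava}$ is continuous and piecewise linear with $z-\sh_{\lava}(z)=(\lambda_1/2)\mathrm{sign}(z)$ on $\{|z|>w\}$ and $z-\sh_{\lava}(z)=kz$ on $\{|z|\le w\}$, so $\Ep(Z-\sh_{\lava}(Z))^2=(\lambda_1^2/4)\Pr_{\theta,\sigma}(|Z|>w)+k^2\Ep[Z^2\mathbf{1}\{|Z|\le w\}]$. Since $\sh_{\lava}$ is Lipschitz with at most linear growth, Stein's identity applies and yields $\Ep[(Z-\theta)\sh_{\lava}(Z)]=\sigma^2\Ep[\sh_{\lava}'(Z)]=\sigma^2[\Pr_{\theta,\sigma}(|Z|>w)+(1-k)\Pr_{\theta,\sigma}(|Z|<w)]$. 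Plugging both into~\eqref{eq2.3}, writing $\sigma^2=\sigma^2\Pr_{\theta,\sigma}(|Z|>w)+\sigma^2\Pr_{\theta,\sigma}(|Z|<w)$, and expanding $\Ep[Z^2\mathbf{1}\{|Z|\le w\}]$ by the displayed identities: the $\phi_{\theta,\sigma}(\pm w)$ pieces become the boundary terms $-k^2(w+\theta)\phi_{\theta,\sigma}(w)\sigma^2$ and $k^2(\theta-w)\phi_{\theta,\sigma}(-w)\sigma^2$, the coefficient of $\Pr_{\theta,\sigma}(|Z|>w)$ becomes $\lambda_1^2/4+\sigma^2$, and the coefficient of $\Pr_{\theta,\sigma}(|Z|<w)$ collapses to $\theta^2k^2+(1-k)^2\sigma^2$ after combining the $k^2\sigma^2$ term with $-\sigma^2+2\sigma^2(1-k)$. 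This is the stated lava formula; the lasso formula is the special case $k=1$, $\lambda_1=\lambda_l$, $w=\lambda_l/2$.

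\textbf{Post-lava, post-lasso, ridge, elastic net.}
Because $\sh_{\mathrm{post}\text{-}\lava}$ and $\sh_{\mathrm{post}\text{-}\lasso}$ have jumps at $\pm w$ (resp.\ $\pm\lambda_l/2$), Stein's identity no longer applies; instead write $\sh_{\mathrm{post}\text{-}\lava}(Z)=Z-kZ\mathbf{1}\{|Z|\le w\}$ and $\sh_{\mathrm{post}\text{-}\lasso}(Z)=Z-Z\mathbf{1}\{|Z|\le\lambda_l/2\}$, expand $\Ep[\sh(Z)-\theta]^2$ directly, and use $\Ep[(Z-\theta)Z]=\sigma^2$ together with the displayed truncated moments; collecting terms gives the two formulas, again with all $\phi_{\theta,\sigma}(\cdot)$ contributions originating from the truncated second moment. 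For ridge, $\sh_{\ridge}(Z)=(1-\widetilde k)Z$ gives immediately $R=\Ep[(Z-\theta)-\widetilde kZ]^2=(1-\widetilde k)^2\sigma^2+\widetilde k^2\theta^2$. For the elastic net, solving the scalar KKT conditions for $(z-t)^2+\lambda_2t^2+\lambda_1|t|$ shows $\sh_{\mathrm{elastic\ net}}(z)=h\,(z-\mathrm{sign}(z)\lambda_1/2)$ on $\{|z|>\lambda_1/2\}$ and $0$ otherwise, with $h=1/(1+\lambda_2)$, so $\sh_{\mathrm{elastic\ net}}(Z)-\theta$ equals $hZ+d$ on $\{Z>\lambda_1/2\}$, $hZ+g$ on $\{Z<-\lambda_1/2\}$, and $-\theta$ on $\{|Z|<\lambda_1/2\}$; expanding $\Ep[(\sh_{\mathrm{elastic\ net}}(Z)-\theta)^2]$ over these three events, the outer piece $h^2\Ep[Z^2\mathbf{1}\{Z>\lambda_1/2\}]+2hd\,\Ep[Z\mathbf{1}\{Z>\lambda_1/2\}]+d^2\Pr_{\theta,\sigma}(Z>\lambda_1/2)$ reduces via the displayed identities to $((h\theta+d)^2+h^2\sigma^2)\Pr_{\theta,\sigma}(Z>\lambda_1/2)$ plus a $\phi_{\theta,\sigma}(\lambda_1/2)$ term (symmetrically for the $g$-piece), and adding $\theta^2\Pr_{\theta,\sigma}(|Z|<\lambda_1/2)$ gives the claimed expression.

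\textbf{Main obstacle.}
There is no conceptual difficulty; the points requiring care are (a) that Stein's identity must be \emph{avoided} for post-lava, post-lasso, and elastic net because of their jumps, so those risks have to be obtained by the explicit expansion above rather than by differentiating the shrinkage function; (b) checking the mild Lipschitz/integrability conditions that do license Stein's identity for the continuous lava shrinkage; and (c) the purely mechanical but lengthy bookkeeping of recombining the many truncated-Gaussian terms without sign errors, which constitutes the bulk of the proof.
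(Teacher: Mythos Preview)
Your proposal is correct and follows essentially the same route as the paper: Stein's identity plus truncated-Gaussian moments for lava (hence lasso as the $k=1$ special case), and direct expansion of $\Ep[\sh(Z)-\theta]^2$ over the pieces for the remaining estimators; the paper simply packages the truncated-moment bookkeeping into one preparatory lemma computing $\Ep[F(Z)^2]$ for a general piecewise-linear $F$ and then specializes it. One small slip in your commentary: the elastic-net shrinkage is continuous (it vanishes at $|z|=\lambda_1/2$), so it has no jump and Stein would in fact apply there---your direct expansion is of course still valid.
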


 
These results for the one-dimensional case provide a key building block for results in the multidimensional case provided below.  In particular, we build from these results to show that the lava estimator performs very favorably relative to, and can substantially dominate, the maximum likelihood estimator, the ridge estimator,  and $\ell_1$-based   estimators (such as lasso and elastic-net) in interesting multidimensional settings.

\subsection{Multidimensional case}
 
We consider now the canonical Gaussian model or the Gaussian sequence
model.  In this case, we have that  
$$Z\sim N_p(\theta, \sigma^2I_p)$$ 
is a single observation from a multivariate normal distribution where $\theta=(\theta_1,...,\theta_p)'$ is a $p$-dimensional vector.   A fundamental result for this model is that the maximum likelihood estimator $Z$ is inadmissible and can be dominated by the ridge estimator and related shrinkage procedures when $p\geq 3$ (e.g. \cite{stein1956}).

In this model, the lava estimator is given by 
$$\widehat{\theta}_{\text{lava}} := (\widehat \theta_{\lava,1},..., \widehat \theta_{\lava, p})':=(\sh_{\text{lava}}(Z_1),...,\sh_{\text{lava}}(Z_p))',$$  
where $\sh_{\text{lava}}(z)$ is the lava shrinkage function as in (\ref{eq2.2}). 
The estimator is designed to capture the case where 
$$
\theta=\underbrace{\beta}_{\text{dense part}}+\underbrace{\delta}_{\text{sparse part}}
$$
is formed by combining a sparse vector $\delta$ that has a relatively small number of non-zero entries which are all large in magnitude and a dense vector $\beta$ that may contain very many small non-zero entries.  This model for $\theta$ is ``sparse+dense.'' It includes cases that are not well-approximated by ``sparse'' models - models in which a very small number of parameters are large and the rest are zero - or by ``dense'' models - models in which very many coefficients are non-zero but all coefficients are of similar magnitude.  This structure thus includes cases that pose challenges for estimators such as the lasso and elastic net that are designed for sparse models and for estimators such as ridge that are designed for dense models.


\begin{remark}
The regression model with Gaussian noise and an orthonormal design is a special case
of the multidimensional canonical model. Consider 
$$
Y=X\theta+U,\quad U \mid X \sim N(0,\sigma_u^2I_n),
$$
where $Y$ and $U$ are $n\times 1$ random vectors and $X$ is an
$n\times p$ random or fixed matrix, with $n$ and $p $ respectively denoting the sample size and the dimension of $\theta$. Suppose $\frac{1}{n}X'X=I_p$ a.s.. with $p \leq n$. Then we have the canonical multidimensional model:
$$
Z  =  \theta +  \epsilon,  \ \ Z = \frac{1}{n}X'Y, \quad \epsilon =  \frac{1}{n} X'U \sim N(0, \sigma^2 I_p ), \quad \sigma^2 = \frac{\sigma^2_u}{n}.  \quad \scriptstyle \blacksquare
$$

\end{remark}

All of the shrinkage estimators discussed in Section 2.1 generalize to the multidimensional case in the same way as lava.  Let $z \mapsto \sh_e(z)$ be the shrinkage function associated with estimator $e$ in the one dimensional setting where $e$ can take values in the set  $$ \mathcal{E} = \{ \lava, \mathrm{post\text{-}lava}, \ridge, \lasso, \mathrm{post\text{-}lasso}, \mathrm{elastic \ net} \}.$$  We then have a similar estimator in the multidimensional case given by
$$\widehat{\theta}_e:= (\widehat \theta_{e,1},..., \widehat \theta_{e, p})':=(\sh_{e }(Z_1),...,\sh_{ e}(Z_p))'.$$
The risk calculation from the one dimensional case then caries over to the multidimensional case since 
$$
\mathrm{R}(\theta, \widehat \theta_e) := \Ep \| \theta - \widehat \theta_e\|_2^2 =  \sum_{j=1}^p R(\theta_j, \widehat \theta_{e,j}).
$$
Given this fact, we immediately obtain the following result.

\begin{theorem}[Risk Function of Lava and Related Estimators in the Multi-Dimensional Case]\label{MultiDimLavaRisk} If $Z \sim N(0, \sigma^2 I_p)$, then for any $e \in \mathcal{E}$
we have that
$$
\mathrm{R}(\theta, \widehat \theta_e) = \sum_{j=1}^p R(\theta_j, \widehat \theta_{e,j}),
$$
where $R(\cdot, \cdot)$ is the uni-dimensional risk function 
characterized in Theorem \ref{th2.1}.
\end{theorem}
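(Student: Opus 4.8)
The plan is to exploit the fact that, for every $e \in \mathcal{E}$, the multidimensional estimator $\widehat\theta_e$ is obtained by applying the scalar shrinkage map $\sh_e$ separately to each coordinate of $Z$, while the squared Euclidean loss is itself a sum of squared coordinatewise errors. Combining these two observations reduces the claim to a termwise application of Theorem \ref{th2.1}, so the whole proof is essentially a bookkeeping reduction to the one-dimensional analysis already carried out.

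First I would write, for the canonical model $Z \sim N(\theta,\sigma^2 I_p)$ and any $e \in \mathcal{E}$,
\[
\widehat\theta_e = (\sh_e(Z_1),\dots,\sh_e(Z_p))', \qquad \|\theta - \widehat\theta_e\|_2^2 = \sum_{j=1}^p \big(\theta_j - \sh_e(Z_j)\big)^2,
\]
the first identity being the definition of $\widehat\theta_e$ in the multidimensional case and the second being the definition of the $\ell_2$ norm. Taking expectations and using linearity of expectation over the finite sum gives
\[
\mathrm{R}(\theta,\widehat\theta_e) = \Ep\|\theta - \widehat\theta_e\|_2^2 = \sum_{j=1}^p \Ep\big(\theta_j - \sh_e(Z_j)\big)^2 .
\]
Note that no use of independence across coordinates is needed at this stage: additivity of $\|\cdot\|_2^2$ together with linearity of expectation already suffices.

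Second, I would observe that each summand $\Ep(\theta_j - \sh_e(Z_j))^2$ is a functional of the joint law of $Z$ that depends only on the marginal law of $Z_j$. Under $Z \sim N(\theta,\sigma^2 I_p)$ this marginal is exactly $N(\theta_j,\sigma^2)$, so $\Ep(\theta_j - \sh_e(Z_j))^2$ equals the one-dimensional risk $R(\theta_j,\widehat\theta_{e,j})$ of the scalar estimator $\sh_e$ applied to a single $N(\theta_j,\sigma^2)$ observation, in the sense of (\ref{eq2.3}). Hence $\mathrm{R}(\theta,\widehat\theta_e) = \sum_{j=1}^p R(\theta_j,\widehat\theta_{e,j})$, and substituting the closed-form expressions of Theorem \ref{th2.1}, which are available for every $e \in \mathcal{E}$, yields the stated formula.

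I do not expect a genuine obstacle here. The only points that deserve a line of care are: (i) checking that the multidimensional definition $\widehat\theta_{e,j} = \sh_e(Z_j)$ holds for each member of $\mathcal{E}$ — which is exactly how those estimators were introduced; and (ii) confirming that each coordinate of a spherical Gaussian vector has the marginal mean and variance required for the scalar formulas of Theorem \ref{th2.1} to transfer verbatim with $(\theta,\sigma)$ replaced by $(\theta_j,\sigma)$.
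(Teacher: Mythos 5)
Your proposal is correct and follows exactly the route the paper uses: the paper justifies the theorem by the same coordinatewise decomposition, noting that $\widehat\theta_e$ applies $\sh_e$ separately to each $Z_j$, that $\|\theta-\widehat\theta_e\|_2^2$ is additive across coordinates, and that each $Z_j$ is marginally $N(\theta_j,\sigma^2)$ so Theorem \ref{th2.1} applies termwise. Nothing is missing.
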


These risk functions are illustrated in Figure 1 in a prototypical ``sparse+dense" model generated according to the model discussed in detail in Section 2.5.  The tuning parameters used in this figure are the best possible (risk minimizing or oracle) choices of the  penalty levels found by minimizing the risk expression given in Theorem \ref{MultiDimLavaRisk}. 
 
\subsection{Canonical plug-in choice of penalty levels}
We now discuss simple, rule-of-thumb choices for the penalty levels for lasso ($\lambda_l$), ridge ($\lambda_r$) and lava ($\lambda_1, \lambda_2$). In the Gaussian model, a canonical choice of $\lambda_l$ is 
$$
\lambda_l=2\sigma{\Phi^{-1}(1-c/(2p))},
$$
which satisfies 
$$\Pr \left (\max_{j\leq p}|Z_j-\theta_j|\leq\lambda_l/2 \right)\geq 1-c;$$
see, e.g., \cite{donoho1995adapting}.
Here $\Phi(\cdot)$ denotes the standard normal cumulative distribution function, and $c$ is a pre-determined significance level which is often set to 0.05.  
The risk function for ridge is simple, and an analytic solution to the risk minimizing choice of ridge tuning parameter is given by
$$
\lambda_r=\sigma^2 (p/\|\theta\|_2^2).
$$


As for the tuning parameters for lava, recall that the lava estimator in the Gaussian model is 
\begin{eqnarray*}
\widehat\theta_{\text{lava}}&=&(\widehat\theta_{\lava,1},...,\widehat\theta_{\lava,p})',\quad \widehat\theta_{\lava, j}=\widehat\beta_j+\widehat\delta_j,\quad j=1,...,p, \cr
(\widehat\beta_j,\widehat\delta_j)&=&\arg\min_{(\beta_j,\delta_j) \in \mathbb{R}^2}  (Z_j-\beta_j-\delta_j)^2+\lambda_2|\beta_j|^2+\lambda_1|\delta_j|.
\end{eqnarray*}
If the dense component $\beta$ were known, then following \cite{donoho1995adapting} would suggest setting 
$$
\lambda_1=2\sigma{\Phi^{-1}(1-c/(2p))}
$$
as a canonical choice of $\lambda_1$ for estimating $\delta$.
If the sparse component $\delta$ were known, one could adopt 
$$
\lambda_2=\sigma^2(p/\|\beta\|_2^2)
$$
as a choice of $\lambda_2$ for estimating $\beta$ following the logic for the standard ridge estimator.

We refer to these choices as the ``canonical plug-in" tuning parameters  and use them in constructing the risk comparisons in the following subsection.  We note that the lasso choice is motivated by a sparse model and does not naturally adapt to or make use of the true structure of $\theta$.  The ridge penalty choice is explicitly tied to risk minimization and relies on using knowledge of the true $\theta$.  The lava choices for the parameters on the $\ell_1$ and $\ell_2$ penalties are, as noted immediately above, motivated by the respective choices in lasso and ridge.  As such, the motivations and feasibility of these canonical choices are not identical across methods, and the risk comparisons in the following subsection should be interpreted within this light. 
 
\subsection{Some risk comparisons in a canonical Gaussian model}
 
To compare the risk functions of lava, lasso, and ridge
estimators, we consider a canonical Gaussian model, where 
$$\theta_1=3 ,  \quad \theta_j=0.1q,\quad j=2,...,p,$$
 for some $q\geq 0$.  We set the noise level to be  $\sigma^2=0.1^2$.
  The parameter $\theta$ can be decomposed as 
 $\theta=\beta+\delta$, where 
the sparse component  is $\delta=(3,0,...,0)'$, 
and the dense component  is $$\beta=(0, 0.1q,...,0.1q)',$$
where $q$ describes the ``size of small coefficients."  The 
canonical tuning parameters are 
$\lambda_l=\lambda_1=2\sigma{\Phi^{-1}(1-c/(2p))}$, $\lambda_r=\sigma^2p/(3+0.1^2q^2(p-1))$ and $\lambda_2=\sigma^2p/(0.1^2q^2(p-1))$.

 Figure \ref{f1} (given in the introduction) compares risks of lava, lasso, ridge, elastic net,  
and the maximum likelihood estimators as functions of 
the size of the small coefficients  $q$, using the ideal (risk minimizing or oracle choices) of the penalty levels. Figure \ref{f3}  compares risks of lava, lasso, ridge    and the maximum likelihood estimators using the ``canonical plug-in" penalty levels discussed above.  Theoretical risks are plotted as a function of  the size of the small coefficients  $q$.  We see from these figures that regardless of how we choose the penalty levels -- ideally or via the plug-in rules -- lava strictly dominates the  competing methods in this ``sparse+dense" model. Compared to lasso, the proposed lava estimator does about as well as lasso when the signal is sparse and does significantly better than lasso when the signal is non-sparse. Compared to ridge, the lava estimator does about as well as ridge when the signal is dense and does significantly better than ridge when the signal is sparse. 

\begin{figure}[htbp]
\begin{center}
\includegraphics[width=9cm]{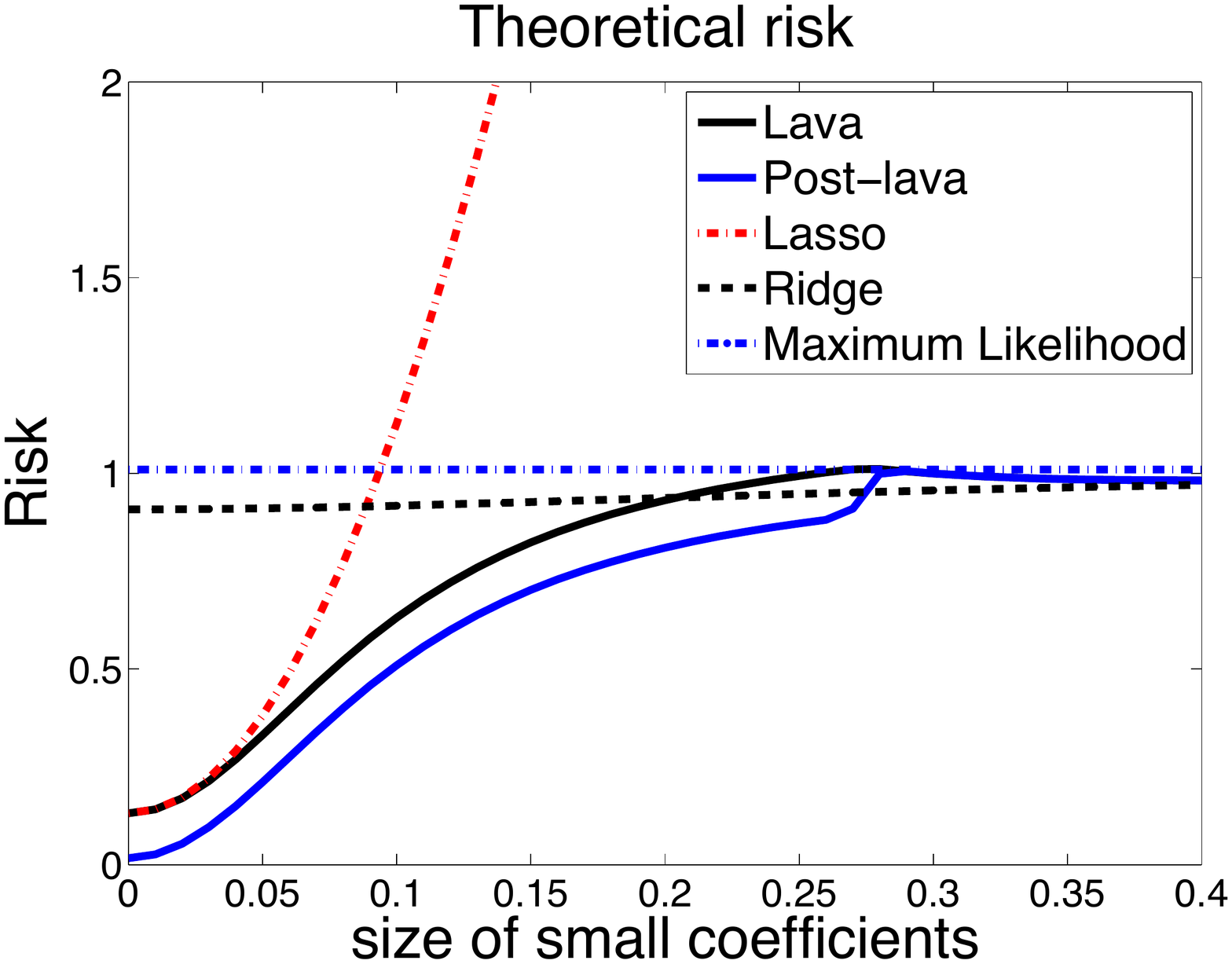}
\includegraphics[width=9cm]{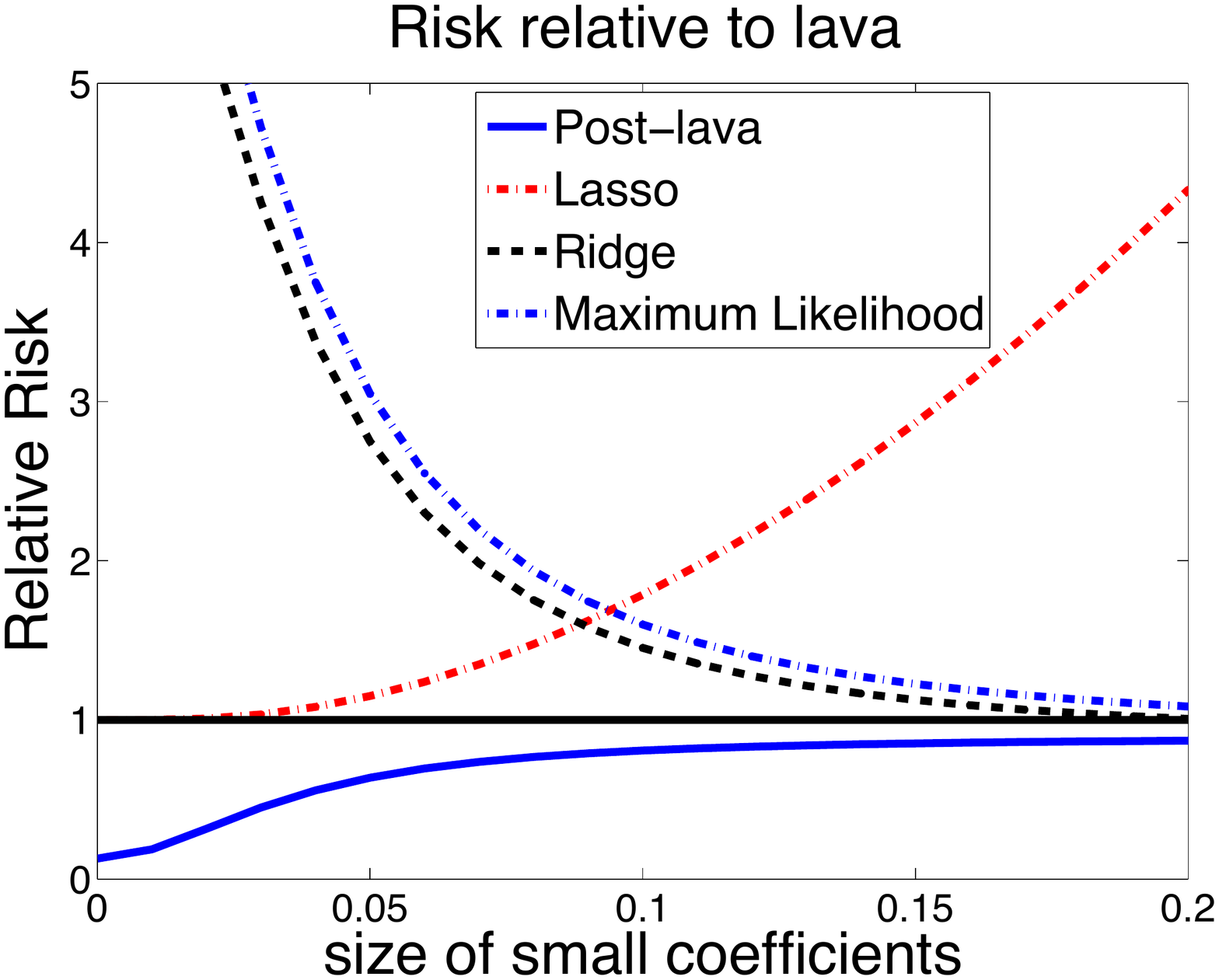}
\caption{\footnotesize Exact risk functions of lava, post-lava, ridge, lasso,   and maximum likelihood in the Gaussian sequence model with ``sparse+dense'' signal structure, using the canonical ``plug-in" choices of penalty levels. See Section 2.5 for the description of penalty levels and the model. The size of ``small coefficients" is shown on the horizontal axis. The size of these coefficients directly corresponds to the size of the ``dense part" of  the signal, with zero corresponding to the exactly sparse case.   Relative risk plots the ratio of the risk of each estimator to the lava risk, $\mathrm{R}(\theta, \widehat \theta_e)/\mathrm{R}(\theta, \widehat \theta_{\lava})$.  Note that the relative risk plot is over a smaller set of sizes to accentuate comparisons over the region where there are the most interesting differences between the estimators.
}
\label{f3}
\end{center}
\end{figure}

In Section 5 we further explore the use of feasible, data-driven choices of penalty levels via cross-validation and SURE minimization; see Figures \ref{f4} and 5.  We do so in the context of the Gaussian regression model with fixed regressors.   With  either cross-validation or SURE minmization, the ranking of the estimators remains unchanged, with lava consistently dominating lasso, ridge, and the elastic net.

 \cite{stein1956} proved that a ridge estimator strictly dominates maximum likelihood in the Gaussian sequence model once $p \geq 3$.  In the comparisons above, we also see that the lava estimator strictly dominates the maximum likelihood estimator; and one wonders whether this domination has a theoretical underpinning similar to Stein's result for ridge. The following result provides some (partial) support for this phenomenon for the lava estimator with the plug-in penalty levels.  The result shows that, for a sufficiently large $n$ and $p$, lava does indeed uniformly dominate the maximum likelihood estimator on the compact set $\{\theta=\beta+\delta: \|\beta\|_{\infty}+\|\delta\|_{\infty}<M\}$. 
 
\begin{lemma}[Relative Risk of Lava vs. Maximum Likelihood ]\label{th2.2}
Suppose $Z\sim N_p(\theta,\sigma^2I_p)$, where $\theta$ can be decomposed into $\theta=\beta+\delta$ with $s=\sum_{j=1}^p1\{\delta_j\neq0\}<p$.  
Let $\lambda_1$ and $\lambda_2$ be chosen with the plug-in rule given in Section 2.4.  Then uniformly for $\theta\in \{\theta=\beta+\delta: \|\beta\|_{\infty}+\|\delta\|_{\infty}<M\}$, 
when  $\sigma\sqrt{\log p}>2M+33\sigma$,   $ M^2\log p>16\sigma^2$,  and $    \pi c^2 {\log p}\geq  1$,  we have
$$
RR := \frac{ \Ep\| \widehat\theta_{\lava}(Z)-\theta\|_2^2}{\Ep\|Z-\theta\|_2^2}\leq \frac{\|\beta\|_2^2}{\sigma^2p+\|\beta\|_2^2}+\frac{3sM^2}{p\sigma^2}+ \frac{4}{\sqrt{2\pi}p^{1/16}}\left( 1+\frac{7M}{\sigma p^{1/16}} \right).$$

\end{lemma}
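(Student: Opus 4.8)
The plan is to work from the coordinatewise decomposition of the risk and show that, in the regime where all signal coordinates are small relative to the effective threshold, lava essentially behaves like the oracle‑tuned ridge estimator on the ``dense'' coordinates, with a cheap crude bound on the few ``sparse'' coordinates and exponentially small remainders. Concretely: since $Z-\theta=\epsilon\sim N(0,\sigma^2 I_p)$ we have $\Ep\|Z-\theta\|_2^2=p\sigma^2$, and by Theorem \ref{MultiDimLavaRisk}, $\Ep\|\widehat\theta_{\lava}(Z)-\theta\|_2^2=\sum_{j=1}^p R(\theta_j,\widehat\theta_{\lava,j})$, so it suffices to bound the latter sum and divide by $p\sigma^2$. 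With the plug‑in choices of Section 2.4 one has $k:=\lambda_2/(1+\lambda_2)=\sigma^2 p/(\|\beta\|_2^2+\sigma^2 p)\in[0,1]$, $h:=1-k=\|\beta\|_2^2/(\|\beta\|_2^2+\sigma^2 p)$ and $w:=\lambda_1/(2k)\ge\lambda_1/2$. Rather than use the scalar risk formula of Theorem \ref{th2.1} directly, I would use the factorization from Lemma \ref{l2.1}, $\sh_{\lava}(z)=hz+k\,\sh_{\mathrm{st}}(z)$, where $\sh_{\mathrm{st}}(z):=(|z|-w)_+\mathrm{sign}(z)$ is soft‑thresholding at level $w$; in particular $\sh_{\mathrm{st}}(z)=0$ for $|z|\le w$ and $|\sh_{\lava}(z)|\le|z|$ always.

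The first step is to certify that the threshold dominates the signal by a wide margin. Since $|\theta_j|\le\|\beta\|_\infty+\|\delta\|_\infty<M$ for every $j$, and since a standard lower bound on the Gaussian quantile $\Phi^{-1}(1-c/(2p))$ together with the hypotheses $\sigma\sqrt{\log p}>2M+33\sigma$ and $\pi c^2\log p\ge1$ forces $w\ge\lambda_1/2\ge\sigma\sqrt{\log p}$, one gets
\begin{equation*}
w-M\;\ge\;\tfrac12\sigma\sqrt{\log p}\;>\;0 \qquad\text{and}\qquad \frac{(w-M)^2}{2\sigma^2}\;\ge\;\frac{1}{16}\log p .
\end{equation*}
Consequently, for every $j$, on the event $A_j:=\{|\epsilon_j|\le w-M\}$ we have $|Z_j|\le|\theta_j|+|\epsilon_j|<w$, so $\sh_{\mathrm{st}}(Z_j)=0$ and $\widehat\theta_{\lava,j}-\theta_j=hZ_j-\theta_j=-k\theta_j+h\epsilon_j$, while on $A_j^c$ the crude bound $|\widehat\theta_{\lava,j}-\theta_j|\le|Z_j|+|\theta_j|\le 2M+|\epsilon_j|$ holds.

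Next I would split each coordinate's risk over $A_j$ and $A_j^c$. On $A_j$, using $\Ep\epsilon_j=0$, $\Ep[(-k\theta_j+h\epsilon_j)^2\,1\{A_j\}]\le k^2\theta_j^2+h^2\sigma^2$. On $A_j^c$, from $(2M+|\epsilon_j|)^2\le 8M^2+2\epsilon_j^2$, $\Ep[(\widehat\theta_{\lava,j}-\theta_j)^2\,1\{A_j^c\}]\le 8M^2\Pr(A_j^c)+2\Ep[\epsilon_j^2\,1\{A_j^c\}]=:r_j$, and the Gaussian tail bound $\Pr(A_j^c)\le 2e^{-(w-M)^2/(2\sigma^2)}$ together with the standard truncated second‑moment estimate for a normal, using $(w-M)^2/(2\sigma^2)\ge\tfrac1{16}\log p$, shows that $\sum_{j=1}^p r_j$ is at most $p\sigma^2\cdot\frac{4}{\sqrt{2\pi}p^{1/16}}\big(1+\frac{7M}{\sigma p^{1/16}}\big)$, i.e.\ $p\sigma^2$ times the last term of the claim. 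For the main term, set $S=\{j:\delta_j\ne0\}$, $|S|=s$: for $j\notin S$, $\theta_j=\beta_j$, and for $j\in S$, $k^2\theta_j^2\le k^2M^2\le M^2$; hence $\sum_{j=1}^p k^2\theta_j^2\le k^2\|\beta\|_2^2+sM^2$, while $\sum_{j=1}^p h^2\sigma^2=p h^2\sigma^2$. Substituting the plug‑in values of $k,h$ gives the elementary identity
\begin{equation*}
k^2\|\beta\|_2^2+p h^2\sigma^2\;=\;\frac{\sigma^2 p\,\|\beta\|_2^2}{\|\beta\|_2^2+\sigma^2 p},
\end{equation*}
so $\sum_{j=1}^p R(\theta_j,\widehat\theta_{\lava,j})\le\frac{\sigma^2 p\|\beta\|_2^2}{\|\beta\|_2^2+\sigma^2 p}+sM^2+\sum_j r_j$; dividing by $p\sigma^2$ and weakening $sM^2$ to $3sM^2$ (to absorb the $j\in S$ part of $\sum_j r_j$ and other slop) yields the stated bound. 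The whole decomposition uses only $|\theta_j|<M$, so it is uniform over $\{\theta=\beta+\delta:\|\beta\|_\infty+\|\delta\|_\infty<M\}$.

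The hard part will be the first step: producing a lower bound on $\Phi^{-1}(1-c/(2p))$ sharp enough that, after subtracting $M$ and squaring, the residual exponent is at least $\tfrac1{16}\log p$ — this is precisely where the numerical constants $33$, $\pi c^2$ and the exponent $\tfrac1{16}$ get pinned down — and then the bookkeeping that compresses the family $\{r_j\}$ of exponentially small remainders into a single closed‑form error term. Everything else is the routine split above; no genuinely multidimensional (``Stein‑type'') argument is needed, because the risk already decouples across coordinates via Theorem \ref{MultiDimLavaRisk}.
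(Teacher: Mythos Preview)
Your proposal is correct and reaches the stated bound, but it takes a genuinely different route from the paper's own proof.

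The paper works with the algebraic decomposition
\[
\widehat\theta_{\lava,j}-\theta_j=\underbrace{[(1-k)(Z_j-\delta_j)-\beta_j]}_{I}+\underbrace{k(\widehat\delta_j-\delta_j)}_{II},
\]
where $\widehat\delta_j=\sh_{\lasso}(Z_j)$ at level $\lambda_1/(2k)$, and then bounds $I$, $II$, and the cross term separately. Term $I$ gives exactly the ridge contribution $(1-k)^2\sigma^2+k^2\beta_j^2$; term $II$ is handled by invoking the explicit lasso risk formula from Theorem~\ref{th2.1} together with repeated Mill's ratio estimates and the absorbing trick $xe^{-ax^2}\le e^{-a'x^2}$; the cross term is controlled via the Stein/Donoho--Johnstone identity $\Ep[(Z_j-\theta_j)\widehat\delta_j]=\sigma^2\Pr(|Z_j|>w)$. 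The pieces are then reassembled using $\|\theta\|_2^2\le\|\beta\|_2^2+3sM^2$ and the plug-in identity $(1-k)^2+k^2\|\beta\|_2^2/(p\sigma^2)=1-k$.

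Your approach instead conditions on the event $A_j=\{|\epsilon_j|\le w-M\}$, which deterministically forces the soft-threshold to vanish, so that lava \emph{is} ridge on $A_j$; on $A_j^c$ you use a crude absolute-value bound. This sidesteps both the explicit scalar risk formula of Theorem~\ref{th2.1} and the Stein-type cross-term identity entirely, which is arguably more elementary. The price is that the bookkeeping for $\sum_j r_j$ must be done from scratch: you will need the same absorbing inequality $(a/\sigma)e^{-a^2/(2\sigma^2)}\le e^{-a^2/(4\sigma^2)}$ (valid once $a/\sigma\ge\tfrac12\sqrt{\log p}>16$) to control the $4\sigma a$ piece of the truncated second moment, and the hypotheses $M<\tfrac12\sigma\sqrt{\log p}$ and $M\sqrt{\log p}>4\sigma$ to fold the $M^2$ piece into $28M\sigma/(\sqrt{2\pi}p^{1/8})$. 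With those in hand your remainder matches the paper's error term exactly. The paper's route buys you the ability to recycle the Theorem~\ref{th2.1} formula and the Donoho--Johnstone identity; your route buys a self-contained argument that never appeals to either.
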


\begin{remark} Note that
$$
R^2_d = \frac{\|\beta\|_2^2}{\sigma^2p+\|\beta\|_2^2}
$$
measures the proportion of the total variation of $Z-\delta$ around $0$ that is explained by the dense part of the signal. 
If $R^2_d$ is bounded away from $1$ and $M$ and $\sigma^2>0$ are fixed, 
then the risk of lava becomes uniformly smaller than the risk of the maximum likelihood estimator on  a compact parameter space as $p\to\infty$ and $s/p\to0$.  Indeed, if $R^2_d$ is bounded away from $1$, then
$$
\frac{3sM^2}{p\sigma^2}+ \frac{4}{\sqrt{2\pi}p^{1/16}}\left( 1+\frac{7M}{\sigma p^{1/16}} \right) \to 0
\implies RR = R^2_d + o(1) <  1.
$$
Moreover, we have $RR \to 0$ if $R^2_d \to 0 $.  That is, the lava estimator becomes infinitely more asymptotically efficient than the maximum likelihood estimator in terms of relative risk.
 \qed
  
\end{remark}


\subsection{Stein's unbiased risk estimation for lava}
 
\cite{stein1981estimation} proposed a useful risk estimate based on the integration by parts formula, now commonly referred to as \textit{Stein's unbiased risk estimate} (SURE).    This subsection derives SURE for the lava shrinkage in the multivariate  Gaussian model.  
 
 Note that
\begin{equation}\label{eq2.5} \Ep\|\widehat\theta_{\text{lava}}
-\theta\|_2^2=-p\sigma^2+\Ep \|Z-\widehat\theta_{\text{lava}}\|_2^2
+2\Ep[(Z-\theta)'\widehat\theta_{\text{lava}}].
 \end{equation}

An essential component to understanding the risk is given by applying Stein's formula to calculate $ \Ep[(Z-\theta)'\widehat\theta_{\text{lava}}]$.  A closed-form expression for this expression in the one-dimensional case is given in equation (\ref{eq2.4}) in the appendix.
 The following result provides the SURE for lava in the more general multidimensional case.
  \begin{theorem}[SURE for lava]\label{th2.3} Suppose $Z=(Z_1,...,Z_p)'
  \sim N_p(\theta,\sigma^2I_p)$. Then
$$
\Ep[(Z-\theta)'\widehat\theta_{\lava}]=p(1-k)\sigma^2 +k\sigma^2\sum_{j=1}^p\Pr_{\theta_j,\sigma}(|Z_j|>\lambda_1/(2k)).
$$
 In addition,  let $\{Z_{ij}\}_{i=1}^n$ be identically distributed as $Z_j$
 for each $j$. Then
   $$
\widehat{\mathrm{R}}(\theta, \widehat \theta_{\lava})= ( 1-2k)p\sigma^2+
\frac{1}{n} \sum_{i=1}^n \|Z_i-\sh_{\lava}(Z_i)\|_2^2 +2k\sigma^2\frac{1}{n}\sum_{i=1}^n\sum_{j=1}^p1\{|Z_{ij}|>\lambda_1/(2k)\}.
 $$
  is an unbiased estimator of $\mathrm{R}(\theta, \widehat \theta_{\lava})$. 
 
  \end{theorem}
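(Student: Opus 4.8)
The plan is to reduce the multidimensional statement to a coordinatewise application of Stein's integration-by-parts identity, combine the resulting identity with the bias–variance decomposition \eqref{eq2.5}, and finally replace the expectations by sample averages over the i.i.d.\ copies to produce the unbiased estimator.

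\textbf{Step 1: Coordinate separation and Stein's identity.} Since $Z\sim N_p(\theta,\sigma^2 I_p)$ has independent coordinates and $\widehat\theta_{\lava}=(\sh_{\lava}(Z_1),\dots,\sh_{\lava}(Z_p))'$ acts coordinatewise, I would first write
$$\Ep[(Z-\theta)'\widehat\theta_{\lava}]=\sum_{j=1}^p\Ep[(Z_j-\theta_j)\sh_{\lava}(Z_j)],$$
which isolates the scalar problem $Z_j\sim N(\theta_j,\sigma^2)$. By Lemma \ref{l2.1}, $z\mapsto\sh_{\lava}(z)=(1-k)z+k(|z|-\lambda_1/(2k))_+\mathrm{sign}(z)$ is continuous, piecewise linear, with slopes in $\{1-k,1\}\subset(0,1]$ (recall $k=\lambda_2/(1+\lambda_2)\in[0,1)$); hence it is globally Lipschitz, absolutely continuous, and has bounded a.e.\ derivative $\sh_{\lava}'(z)=(1-k)+k\mathbf 1\{|z|>\lambda_1/(2k)\}$. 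These are exactly the hypotheses of Stein's lemma, so
$$\Ep[(Z_j-\theta_j)\sh_{\lava}(Z_j)]=\sigma^2\Ep[\sh_{\lava}'(Z_j)]=\sigma^2\big[(1-k)+k\,\Pr_{\theta_j,\sigma}(|Z_j|>\lambda_1/(2k))\big].$$
Summing over $j$ gives the first displayed identity in the theorem. (Equivalently, one may quote the scalar closed form \eqref{eq2.4} from the appendix and sum; I prefer the Lipschitz route since it is self-contained.)

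\textbf{Step 2: From the identity to the risk formula.} I would then substitute this into the decomposition \eqref{eq2.5},
$$\Ep\|\widehat\theta_{\lava}-\theta\|_2^2=-p\sigma^2+\Ep\|Z-\widehat\theta_{\lava}\|_2^2+2\Ep[(Z-\theta)'\widehat\theta_{\lava}],$$
and collect the deterministic $\sigma^2$ terms using $-p\sigma^2+2p(1-k)\sigma^2=(1-2k)p\sigma^2$, obtaining
$$\mathrm R(\theta,\widehat\theta_{\lava})=(1-2k)p\sigma^2+\Ep\|Z-\widehat\theta_{\lava}\|_2^2+2k\sigma^2\sum_{j=1}^p\Pr_{\theta_j,\sigma}(|Z_j|>\lambda_1/(2k)).$$

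\textbf{Step 3: Unbiased estimation.} Finally, writing $Z_i:=(Z_{i1},\dots,Z_{ip})'$ with $\{Z_{ij}\}_{i=1}^n$ i.i.d.\ copies of $Z_j$, the averages $\tfrac1n\sum_{i=1}^n\|Z_i-\sh_{\lava}(Z_i)\|_2^2$ and $\tfrac1n\sum_{i=1}^n\sum_{j=1}^p\mathbf 1\{|Z_{ij}|>\lambda_1/(2k)\}$ are unbiased for $\Ep\|Z-\widehat\theta_{\lava}\|_2^2$ and $\sum_{j}\Pr_{\theta_j,\sigma}(|Z_j|>\lambda_1/(2k))$ respectively (coordinatewise identical distributions plus linearity of expectation); since $(1-2k)p\sigma^2$ is deterministic, substituting these into the Step 2 identity shows $\widehat{\mathrm R}(\theta,\widehat\theta_{\lava})$ is unbiased for $\mathrm R(\theta,\widehat\theta_{\lava})$. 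The only point requiring genuine care is the regularity check in Step 1 — verifying the Stein hypotheses for the non-smooth lava map and correctly identifying its weak derivative at the kinks $\pm\lambda_1/(2k)$; once that is in place, the rest is bookkeeping.
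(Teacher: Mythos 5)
Your proposal is correct and follows essentially the same route as the paper: a coordinatewise Stein/SURE identity for the lava shrinkage map combined with the decomposition (\ref{eq2.5}), then replacing expectations by sample averages. The only cosmetic difference is that the paper obtains the key identity (its equation (\ref{eq2.4})) by writing $\sh_{\lava}(Z)=(1-k)Z+k\,\sh_{\lasso}(Z)$ and citing the known soft-thresholding result of \cite{donoho1995adapting}, whereas you verify Stein's hypotheses directly for the piecewise-linear lava map and compute its a.e.\ derivative $(1-k)+k\,1\{|z|>\lambda_1/(2k)\}$ — the same substance, done self-containedly.
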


 \section{Lava in the Regression Model}
 
\subsection{Definition of Lava in the Regression Model}
Consider a fixed design regression model:
$$
Y=X\theta_0+U,\quad U\sim N(0,\sigma_u^2I_n),
$$
where $Y=(y_1,...,y_n)'$, $X=(X_1,...,X_n)'$, and 
$\theta_0$ is the true regression coefficient.  Following the previous discussion, we assume that $$\theta_0=\beta_0+\delta_0$$ is ``sparse+dense'' with sparse component $\delta_0$ and dense component $\beta_0$.  Again, this coefficient structure includes cases which cannot be well-approximated by traditional sparse models or traditional dense models and will pose challenges for estimation strategies tailored to sparse settings, such as lasso and similar methods, or strategies tailored to dense settings, such as ridge. 

In order to define the estimator we shall rely on the normalization condition that
\begin{equation}\label{normalization}
 n^{-1} [X'X]_{jj} =1, \quad j=1,...,p.
\end{equation}
Note that without this normalization, the penalty terms below would have to be modified
in order to insure equivariance of the estimator to changes of scale in the columns of $X$.

The lava estimator $\widehat \theta_{\lava}$ of $\theta_0$  solves the following optimization problem:
 \begin{eqnarray}
 \widehat\theta_{\text{lava}}&:=&\widehat\beta+\widehat\delta,\cr
 (\widehat\beta,\widehat\delta)&:=&\arg\min_{(\beta',\delta')'\in\mathbb{R}^{2p} } \left \{ \frac{1}{n}\|Y-X(\beta+\delta)\|_2^2+\lambda_2\|\beta\|_2^2+\lambda_1\|\delta\|_1 \right \}.
 \end{eqnarray}
The lava program splits parameter $\theta$ into the sum of $\beta$ and $\delta$ and penalizes
these two parts using the $\ell_2$ and $\ell_1$ penalties.   Thus, the $\ell_1$- penalization regularizes
the estimator of the sparse part $\delta_0$ of $\theta_0$ and produces a sparse solution $\widehat\delta$. The $\ell_2$-penalization regularizes the estimator of the dense part $\beta_0$ of $\theta_0$ and produces a dense solution $\widehat \beta$. The resulting estimator of $\theta_0$ is then simply the sum of the sparse estimator $\widehat\delta$ and the dense estimator $\widehat\beta$. 

\subsection{A Key Profile Characterization and Some Insights.}
The lava estimator can be computed in the following way. For a  fixed $\delta$, we minimize 
$$
\widehat\beta(\delta)=\arg\min_{\beta\in\mathbb{R}^p} \left \{ \frac{1}{n}\|Y-X(\beta+\delta)\|_2^2+\lambda_2\|\beta\|_2^2 \right\},
$$
with respect to  $\beta$.  This program is simply the well-known ridge regression problem, and the solution is
$$
\widehat\beta(\delta)=(X'X+n\lambda_2I_p)^{-1}X'(Y-X\delta).
$$
By substituting $\beta=\widehat\beta(\delta)$ into the objective function, we then define an $\ell_1$-penalized quadratic program 
which we can solve for $\widehat\delta$:
\begin{equation}\label{lava.p}
\widehat\delta=\arg\min_{\delta\in\mathbb{R}^p} \left \{\frac{1}{n}\|Y-X(\widehat\beta(\delta)+\delta)\|_2^2+\lambda_2\|\widehat\beta(\delta)\|_2^2+\lambda_1\|\delta\|_1 \right\}.
\end{equation}
The lava solution is then given by $\widehat\theta=\widehat\beta(\widehat\delta)+\widehat\delta$. The following result provides a useful 
characterization of the solution.

\begin{theorem}[A Key Characterization of the Profiled Lava Program]
 \label{th3.1}  Define ridge-projection matrices,
$$
\PR_{\lambda_2}=X(X'X+n\lambda_2 I_p)^{-1}X' \ and \ 
\K_{\lambda_2}=I_n-\PR_{\lambda_2},
$$ 
and transformed data,
$$
\widetilde Y=\K_{\lambda_2}^{1/2}Y \ and \ \widetilde X=\K_{\lambda_2}^{1/2}X.
$$
Then
\begin{equation}\label{eq3.2}
\widehat\delta=\arg\min_{\delta \in \mathbb{R}^p} \left \{ \frac{1}{n}\|\widetilde Y-\widetilde X\delta\|_2^2+\lambda_1\|\delta\|_1 \right\}
\end{equation}
and 
\begin{equation}\label{eq3.3}
X\widehat\theta_{\lava}=\PR_{\lambda_2}Y+\K_{\lambda_2}X\widehat\delta.
\end{equation}
\end{theorem}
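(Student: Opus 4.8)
The plan is to obtain \eqref{eq3.2} by carrying out the partial minimization over $\beta$ (``profiling'') that defines $\widehat\delta$ in \eqref{lava.p}, and then to show that after substituting the ridge solution the first two terms of the lava objective collapse to a single quadratic form in $\delta$. Write $A:=X'X+n\lambda_2 I_p$, and for a fixed $\delta$ put $r:=Y-X\delta$, so that $\widehat\beta(\delta)=A^{-1}X'r$ and $X\widehat\beta(\delta)=\PR_{\lambda_2}r$. First I would record that the fitted residual is $Y-X(\widehat\beta(\delta)+\delta)=r-\PR_{\lambda_2}r=\K_{\lambda_2}r$, so the least-squares term equals $\tfrac1n r'\K_{\lambda_2}^2 r$ (using that $\K_{\lambda_2}$ is symmetric), while the ridge penalty equals $\lambda_2\|A^{-1}X'r\|_2^2=\lambda_2\, r'XA^{-2}X'r$.

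The main step --- and the only one requiring a genuine computation --- is the matrix identity $\tfrac1n\K_{\lambda_2}\PR_{\lambda_2}=\lambda_2\, XA^{-2}X'$. I would derive this from $X'X=A-n\lambda_2 I_p$: multiplying by $XA^{-1}$ on the left and $A^{-1}X'$ on the right gives
\begin{equation*}
\PR_{\lambda_2}^2=XA^{-1}(X'X)A^{-1}X'=\PR_{\lambda_2}-n\lambda_2\, XA^{-2}X',
\end{equation*}
hence $\K_{\lambda_2}\PR_{\lambda_2}=\PR_{\lambda_2}-\PR_{\lambda_2}^2=n\lambda_2\,XA^{-2}X'$ (one can alternatively just diagonalize everything via the SVD of $X$). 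With this in hand, and using $\K_{\lambda_2}+\PR_{\lambda_2}=I_n$,
\begin{equation*}
\tfrac1n r'\K_{\lambda_2}^2 r+\lambda_2\, r'XA^{-2}X'r=\tfrac1n r'\K_{\lambda_2}(\K_{\lambda_2}+\PR_{\lambda_2})r=\tfrac1n r'\K_{\lambda_2}r,
\end{equation*}
so the profiled objective in \eqref{lava.p} is \emph{exactly} $\tfrac1n(Y-X\delta)'\K_{\lambda_2}(Y-X\delta)+\lambda_1\|\delta\|_1$, with no residual $\delta$-dependent constant.

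To conclude \eqref{eq3.2}, note that $\K_{\lambda_2}\succeq 0$ --- its eigenvalues are $n\lambda_2/(s_i^2+n\lambda_2)$ on the range of $X$ (with $s_i$ the singular values of $X$) and $1$ on its orthogonal complement --- so the symmetric square root $\K_{\lambda_2}^{1/2}$ is well defined and $(Y-X\delta)'\K_{\lambda_2}(Y-X\delta)=\|\K_{\lambda_2}^{1/2}Y-\K_{\lambda_2}^{1/2}X\delta\|_2^2=\|\widetilde Y-\widetilde X\delta\|_2^2$, which is \eqref{eq3.2}. Finally, by the profiling argument the joint minimizer satisfies $\widehat\beta=\widehat\beta(\widehat\delta)=A^{-1}X'(Y-X\widehat\delta)$, so
\begin{equation*}
X\widehat\theta_{\lava}=X\widehat\beta+X\widehat\delta=\PR_{\lambda_2}(Y-X\widehat\delta)+X\widehat\delta=\PR_{\lambda_2}Y+(I_n-\PR_{\lambda_2})X\widehat\delta=\PR_{\lambda_2}Y+\K_{\lambda_2}X\widehat\delta,
\end{equation*}
giving \eqref{eq3.3}. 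I do not expect any serious obstacle here: the one point to be careful about is justifying the profiling step itself --- that minimizing over $\beta$ first and then over $\delta$ returns the $\delta$-block of the joint minimizer --- which is routine because the $\beta$-subproblem is strictly convex with unique minimizer $\widehat\beta(\delta)$ depending affinely on $\delta$, so joint and iterated minimization coincide.
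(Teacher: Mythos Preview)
Your proof is correct and follows essentially the same route as the paper: both substitute the ridge solution $\widehat\beta(\delta)=A^{-1}X'(Y-X\delta)$, compute the residual as $\K_{\lambda_2}(Y-X\delta)$, and establish the matrix identity $\K_{\lambda_2}^2+n\lambda_2\,XA^{-2}X'=\K_{\lambda_2}$ (your version $\K_{\lambda_2}\PR_{\lambda_2}=n\lambda_2\,XA^{-2}X'$ is equivalent). Your additional remarks on the positive semidefiniteness of $\K_{\lambda_2}$ and on the validity of the profiling step are helpful clarifications the paper leaves implicit.
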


The theorem shows that solving for the sparse part $\widehat\delta$ of the lava estimator is equivalent to solving for the parameter in a standard lasso problem using  transformed data. This result is key to both computation and our theoretical analysis of the estimator.

\begin{remark}[Insights derived from Theorem \ref{th3.1}] 
Suppose $\delta_0$ were known.  Let $W=Y-X\delta_0$ be the response vector after removing the sparse signal, and note that we equivalently have $W=X\beta_0+U$.  A natural estimator for $\beta_0$ in this setting is then the ridge estimator of $W$ on $X$:
$$
\widehat\beta(\delta_0)=(X'X+n\lambda_2I_p)^{-1}X'W.
$$
Denote the prediction error based on this ridge estimator as
$$
\textsf{D}_{\text{ridge}}(\lambda_2)=X\widehat\beta(\delta_0)-X\beta_0=  -\K_{\lambda_2}X\beta_0+\PR_{\lambda_2}U.
$$
Under mild regularity conditions on $\beta_0$ and the design matrix, \cite{hsu2014random} showed that 
$$
\frac{1}{n}\|\textsf{D}_{\text{ridge}}(\lambda_2)  \|^2 = o_P(1).$$

Using Theorem \ref{th3.1}, the prediction error of lava can be written as
\begin{equation}\label{eq3.4add}
X\widehat\theta_{\text{lava}}-X\theta_0=\PR_{\lambda_2}Y+\K_{\lambda_2}X\widehat\delta-X\beta_0-X\delta_0=\textsf{D}_{\text{ridge}}(\lambda_2)+\K_{\lambda_2}X(\widehat\delta-\delta_0).
\end{equation}
Hence, lava has vanishing prediction error as long as 
\begin{equation}\label{lasso.part.consistent} 
\frac{1}{n}\|\K_{\lambda_2}X(\widehat\delta-\delta_0)\|_2^2=o_P(1).
\end{equation} 

Condition (\ref{lasso.part.consistent}) is related to the performance 
of the lasso in the transformed problem (\ref{eq3.2}). Examination of (\ref{eq3.2}) shows
that it corresponds to a sparse regression model with approximation errors $\K_{\lambda_2}^{1/2} X\beta_0$: 
For $\widetilde U= \K_{\lambda_2}^{1/2} U$,
\begin{equation}\label{decompose}
\widetilde Y= \widetilde X\delta_0+ \widetilde U+ \K_{\lambda_2}^{1/2} X\beta_0.
\end{equation}
Under conditions such as those given in \cite{hsu2014random}, the approximation error obeys
\begin{equation}\label{approx.error.small}
\frac{1}{n}\|\K_{\lambda_2}^{1/2} X\beta_0\|_2^2 = o_P(1).
\end{equation}
It is also known that the lasso estimator performs well in sparse models with vanishing approximation errors.  The lasso estimator attains rates of convergence in the prediction norm that are the sum of the usual rate of convergence in the case without approximation errors and the rate at which the approximation error vanishes; see, e.g., \cite{belloni2013least}. Thus, we anticipate that (\ref{lasso.part.consistent}) will hold.

To help understand the plausibility of condition (\ref{approx.error.small}), consider an orthogonal design where $\frac{1}{n}X'X=I_p$. In this case, it is straightforward to verify that $\K_{\lambda_2}^{1/2}
=\K_{\lambda_2^*}$ where $\lambda_2^*= \sqrt{\lambda_2}/(\sqrt{1+\lambda_2}-\sqrt{\lambda_2}).$
Hence, 
$\widetilde X\beta_0=\K_{\lambda_2^*}X\beta_0$ is a component of the prediction bias from a ridge estimator with tuning parameter $\lambda_2^*$ and is stochastically negligible.  We present the rigorous asymptotic analysis for the general case in Section 3.5.  \qed
\end{remark}

\subsection{Degrees of Freedom and SURE}

Degrees of freedom is often used to quantify model complexity and to construct adaptive model selection criteria for selecting tuning parameters. In a  Gaussian linear regression model $Y\sim N(X\theta_0, \sigma_u^2I_n)$ with a fixed design, we can define the degrees of freedom of the mean fit $X\widehat\theta$ to be
$$
\df(\widehat\theta)=\frac{1}{\sigma_u^2}\Ep[(Y-X\theta_0)'X\widehat\theta];
$$
see, e.g., \cite{efron2004estimation}.
Note that this quantity is also an important component of the mean squared prediction risk:
$$
\Ep\frac{1}{n}\|X\widehat\theta-X\theta_0\|_2^2=-\sigma_u^2+\Ep\frac{1}{n}\|X\widehat\theta-Y\|_2^2+ \frac{2\sigma_u^2}{n} \df(\widehat\theta).
$$
 
\cite{stein1981estimation}'s SURE theory provides a tractable way of deriving an unbiased estimator of the degrees of freedom, and thus the mean squared prediction risk. Specifically,  write $\widehat\theta=d(Y,X)$ as a function of $Y$, conditional on $X$. Suppose  $d(\cdot, X):\mathbb{R}^n\to\mathbb{R}^p$ is  almost differentiable; see \cite{meyer2000degrees} and \cite{efron2004least}). For
 $f: \mathbb{R}^n  \to \mathbb{R}^n$ differentiable at $y$, define
$$
\partial_y  f(y) :=  [  \partial f_{ij} (y) ], \quad (i, j) \in\{1,...,n\}^2,  \quad 
\partial f_{ij} (y) : = \frac{\partial}{\partial y_j} f_i(y),
$$ $$
 \nabla_y\cdot f(y):=\text{tr} (\partial_y f(y)).
$$
Let $X_i'$ denote the $i$-th row of $X$, $i=1,...,n$.  Then, from \cite{stein1981estimation}, we have that
 $$
 \frac{1}{\sigma_u^2}\Ep[(Y-X\theta_0)'Xd(Y,X)]=\Ep[ \nabla_y\cdot (Xd(Y,X))] = \text{tr} \big (\partial_y[Xd(Y,X)] \big ). $$
An unbiased estimator of the term on the right-hand-side of the display may then be constructed using its sample analog.

In this subsection,  we derive the degrees of freedom of the lava, and thus a SURE of its mean squared prediction risk.  
By Theorem \ref{th3.1}, 
\begin{eqnarray}\label{eq3.4}
\nabla_y \cdot (Xd_{\text{lava}}(y,X)) & = & \text{tr}(\PR_{\lambda_2})+\nabla_y \cdot(\K_{\lambda_2} X d_{\lasso}(\K_{\lambda_2}^{1/2}y, \K_{\lambda_2}^{1/2}X)) \\
& = & \text{tr}(\PR_{\lambda_2}) + \text{tr}\left( \K_{\lambda_2}  \partial_y  [X d_{\lasso}(\K_{\lambda_2}^{1/2}y, \widetilde X)] \right),
\end{eqnarray}
where $d_{\lava}(y,X)$ is the lava estimator on the data $(y,X)$ and $d_{\lasso}(\K_{\lambda_2}^{1/2}y, \K_{\lambda_2}^{1/2}X)) $  is the lasso estimator on the data $ (\K_{\lambda_2}^{1/2}y, \K_{\lambda_2}^{1/2}X)$  with the penalty level $\lambda_1$. The almost differentiability of the map  $y \mapsto d_{\lasso}(\K_{\lambda_2}^{1/2}y, \K_{\lambda_2}^{1/2}X)$  follows from the almost differentiability of the map $u \mapsto d_{\lasso}(u, \K_{\lambda_2}^{1/2}X)$, which holds by
the results in \cite{dossal2011degrees} and \cite{tibshirani2012degrees}. 

The following theorem presents the degrees of freedom and SURE for lava.  Let $\widehat J=\{j\leq p: \widehat\delta_j\neq 0\}$  be the active set of the sparse component estimator with cardinality denoted by  $|\widehat J|$. Recall that $\widetilde X=\K_{\lambda_2}^{1/2}X$. Let $\widetilde X_{\hat J}$  be an $n\times |\widehat J|$ submatrix of $\widetilde X$ whose columns are those corresponding to the entries in $\widehat J$.  Let $A^-$ denote the Moore-Penrose pseudo-inverse of a square matrix $A$.

\begin{theorem}[SURE for Lava in Regression]\label{th3.2} Suppose $Y\sim N(X\theta_0,\sigma_u^2I_n)$.  Let 
$$
 \widetilde \K_{\hat J}=I-\widetilde X_{\hat J} ( \widetilde X_{\hat J} ' \widetilde X_{\hat J} )^- \widetilde X_{\hat J} '
$$
be the projection matrix onto the unselected columns  of the transformed variables. 
We have that $$
\df(\widehat\theta_{\lava})=\Ep[\rank(\widetilde X_{\hat J})+\tr(\widetilde \K_{\hat J}\PR_{\lambda_2})].
$$
Therefore, the SURE of $\Ep \frac{1}{n}\|X\widehat\theta_{\lava}-X\theta_0\|_2^2$ is given by
$$
-\sigma_u^2+\frac{1}{n}\|X\widehat\theta_{\lava}-Y\|_2^2+ \frac{2\sigma_u^2}{n} \rank(\widetilde X_{\hat J})+\frac{2\sigma_u^2}{n}\tr(\widetilde \K_{\hat J}\PR_{\lambda_2}).
$$
 \end{theorem}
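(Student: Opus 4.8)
The plan is to compute the divergence $\nabla_y\cdot(Xd_{\lava}(y,X))$ in closed form for Lebesgue-almost every $y$, then feed it into Stein's identity and the prediction-risk decomposition of Section~3.3. Equation~(\ref{eq3.4}) already reduces the lava divergence to
$$\nabla_y\cdot(Xd_{\lava}(y,X))=\tr(\PR_{\lambda_2})+\tr\!\big(\K_{\lambda_2}\,\partial_y[Xd_{\lasso}(\K_{\lambda_2}^{1/2}y,\widetilde X)]\big),$$
so the only substantive step is to evaluate the Jacobian in $y$ of the transformed-data lasso fit from (\ref{eq3.2}). Since $u\mapsto d_{\lasso}(u,\widetilde X)$ is almost differentiable, for almost every $y$ there is a neighborhood on which the active set $\widehat J$ and the sign vector $s_{\widehat J}=\mathrm{sign}(\widehat\delta_{\widehat J})$ of the solution to (\ref{eq3.2}) at $\widetilde Y=\K_{\lambda_2}^{1/2}y$ are constant; this, together with the treatment of the case where $\widetilde X_{\widehat J}$ is rank-deficient (so that $\widehat\delta$ need not be unique, while the fit $\widetilde X\widehat\delta$, and hence $X\widehat\delta$, still is), is exactly what the lasso degrees-of-freedom analyses of \cite{dossal2011degrees} and \cite{tibshirani2012degrees} supply, and I would invoke them rather than reprove them.

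On such a neighborhood the stationarity conditions for (\ref{eq3.2}) give $\widehat\delta_{\widehat J^c}=0$ and $\widetilde X_{\widehat J}'\widetilde X_{\widehat J}\widehat\delta_{\widehat J}=\widetilde X_{\widehat J}'\widetilde Y-\frac{n\lambda_1}{2}s_{\widehat J}$, hence $X\widehat\delta=X_{\widehat J}\widehat\delta_{\widehat J}=X_{\widehat J}(\widetilde X_{\widehat J}'\widetilde X_{\widehat J})^-\big(\widetilde X_{\widehat J}'\K_{\lambda_2}^{1/2}y-\frac{n\lambda_1}{2}s_{\widehat J}\big)$ and $\partial_y[Xd_{\lasso}(\K_{\lambda_2}^{1/2}y,\widetilde X)]=X_{\widehat J}(\widetilde X_{\widehat J}'\widetilde X_{\widehat J})^-\widetilde X_{\widehat J}'\K_{\lambda_2}^{1/2}$. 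Using $\K_{\lambda_2}X_{\widehat J}=\K_{\lambda_2}^{1/2}\widetilde X_{\widehat J}$ and then the cyclic property of the trace,
$$\tr\!\big(\K_{\lambda_2}X_{\widehat J}(\widetilde X_{\widehat J}'\widetilde X_{\widehat J})^-\widetilde X_{\widehat J}'\K_{\lambda_2}^{1/2}\big)=\tr\!\big(\K_{\lambda_2}^{1/2}P_{\widehat J}\K_{\lambda_2}^{1/2}\big)=\tr(\K_{\lambda_2}P_{\widehat J}),$$
where $P_{\widehat J}:=\widetilde X_{\widehat J}(\widetilde X_{\widehat J}'\widetilde X_{\widehat J})^-\widetilde X_{\widehat J}'=I_n-\widetilde\K_{\widehat J}$ is the orthogonal projection onto $\mathrm{col}(\widetilde X_{\widehat J})$, so $\tr(P_{\widehat J})=\rank(\widetilde X_{\widehat J})$. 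Since $\K_{\lambda_2}=I_n-\PR_{\lambda_2}$,
$$\nabla_y\cdot(Xd_{\lava}(y,X))=\tr(\PR_{\lambda_2})+\rank(\widetilde X_{\widehat J})-\tr(\PR_{\lambda_2}P_{\widehat J})=\rank(\widetilde X_{\widehat J})+\tr(\widetilde\K_{\widehat J}\PR_{\lambda_2}),$$
the last equality using $\tr(\PR_{\lambda_2})-\tr(\PR_{\lambda_2}P_{\widehat J})=\tr(\PR_{\lambda_2}(I_n-P_{\widehat J}))=\tr(\widetilde\K_{\widehat J}\PR_{\lambda_2})$.

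To finish, I would take expectations. By \cite{stein1981estimation}, as recorded in Section~3.3, $\df(\widehat\theta_{\lava})=\Ep[\nabla_y\cdot(Xd_{\lava}(Y,X))]=\Ep[\rank(\widetilde X_{\widehat J})+\tr(\widetilde\K_{\widehat J}\PR_{\lambda_2})]$, which is the first claim; the interchange of expectation and differentiation is legitimate because $u\mapsto\widetilde X d_{\lasso}(u,\widetilde X)$ is nonexpansive, so $y\mapsto Xd_{\lava}(y,X)$ is globally Lipschitz in $y$ with a deterministic constant. Substituting this expression for $\df(\widehat\theta_{\lava})$ into the identity $\Ep\frac1n\|X\widehat\theta-X\theta_0\|_2^2=-\sigma_u^2+\Ep\frac1n\|X\widehat\theta-Y\|_2^2+\frac{2\sigma_u^2}{n}\df(\widehat\theta)$ of Section~3.3 and replacing each expectation by its sample value yields the stated SURE.

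The algebra above is short; the genuine difficulty is measure-theoretic — verifying the almost differentiability of $y\mapsto Xd_{\lava}(y,X)$ and the locally-constant $(\widehat J,s_{\widehat J})$ description off a Lebesgue-null set in the possibly rank-deficient and possibly non-unique regime, where only the fit (not $\widehat\delta$ itself) is unique and where, following \cite{tibshirani2012degrees}, $\widehat J$ should be read as the equicorrelation set. I would dispatch all of this by quoting the lasso degrees-of-freedom theory cited above, exactly as the discussion preceding the theorem already anticipates.
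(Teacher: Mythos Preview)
Your proposal is correct and follows essentially the same route as the paper's proof: both start from (\ref{eq3.4}), compute the Jacobian of the transformed-data lasso fit as the projection $\widetilde X_{\hat J}(\widetilde X_{\hat J}'\widetilde X_{\hat J})^-\widetilde X_{\hat J}'$ (you via KKT directly, the paper by citing \cite{tibshirani2012degrees}), compose with $\K_{\lambda_2}^{1/2}$ by the chain rule, and simplify the resulting trace using $\K_{\lambda_2}=I_n-\PR_{\lambda_2}$ and $\tr(P_{\hat J})=\rank(\widetilde X_{\hat J})$. Your remark that in the non-unique regime $\widehat J$ should be read as the equicorrelation set is a genuine refinement the paper leaves implicit.
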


\subsection{Post-lava in regression}
 
We can also remove the shrinkage bias in the sparse component introduced by the $\ell_1$-penalization via a \textit{post-selection} procedure. Specifically,  let ($\widehat\beta,\widehat\delta$) respectively denote the lava estimator of the dense and sparse components.
Define the \textit{post-lava} estimator  as follows:
 \begin{eqnarray*}
 \widehat \theta_{\text{post-lava}}&=&\widehat\beta+\widetilde\delta,\cr
 \widetilde\delta&=& \arg\min_{\delta \in \mathbb{R}^p} \left \{ \frac{1}{n}\|Y-X\widehat\beta-X\delta\|_2^2:\quad \delta_j=0\text{ if } \widehat\delta_j=0 \right \}.
 \end{eqnarray*}
Let $X_{\hat J}$  be an $n\times |\widehat J|$ submatrix of $X$ whose columns are selected by $\widehat J$.   Then we can partition $\widetilde\delta=(\widetilde\delta_{\hat J}, 0)'$,   where
 $\widetilde \delta_{\hat J}= ( X_{\hat J} '   X_{\hat J} )^-  X_{\hat J} '(Y-X\widehat\beta)$.   Write $\PR_{\hat J}=X_{\hat J}(X_{\hat J}'X_{\hat J})^-X_{\hat J}'$ and $\K_{\hat J}=I_n-\PR_{\hat J}$.
The post-lava prediction for $X\theta$ is:
$$
 X\widehat \theta_{\text{post-lava}}= \PR_{\hat J}Y+\K_{\hat J}X\widehat\beta.
$$
 In addition, note that  the lava estimator satisfies $X\widehat\beta= \PR_{\lambda_2}(Y- X \widehat\delta)$.  We then have the following expression of $ X\widehat \theta_{\text{post-lava}}$. \begin{lemma} \label{l3.1} Let $\widehat U:= Y-X\widehat\theta_{\lava}$. Then
 $
  X\widehat \theta_{\mathrm{post}\text{-}\lava}= X\widehat\theta_{\lava}+\PR_{\hat J}\widehat U.
 $
 \end{lemma}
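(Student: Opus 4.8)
The plan is to derive the identity by a direct rearrangement of the closed-form expressions already obtained for the two fitted vectors. We are given that $X\widehat\theta_{\mathrm{post}\text{-}\lava} = \PR_{\hat J}Y + \K_{\hat J}X\widehat\beta$ with $\K_{\hat J} = I_n - \PR_{\hat J}$, and that $X\widehat\theta_{\lava} = X\widehat\beta + X\widehat\delta$. First I would expand $\K_{\hat J}$ and insert a cancelling pair of $X\widehat\delta$ terms, obtaining
$$
X\widehat\theta_{\mathrm{post}\text{-}\lava} = X\widehat\beta + \PR_{\hat J}\bigl(Y - X\widehat\beta\bigr) = \bigl(X\widehat\beta + X\widehat\delta\bigr) + \PR_{\hat J}\bigl(Y - X\widehat\beta\bigr) - X\widehat\delta = X\widehat\theta_{\lava} + \PR_{\hat J}\bigl(Y - X\widehat\beta\bigr) - X\widehat\delta .
$$
It then remains only to show that $\PR_{\hat J}(Y - X\widehat\beta) - X\widehat\delta$ equals $\PR_{\hat J}\widehat U = \PR_{\hat J}(Y - X\widehat\beta - X\widehat\delta)$.

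The single substantive point is that $\PR_{\hat J}$ fixes $X\widehat\delta$. By the definition of the active set $\widehat J$ we have $\widehat\delta_j = 0$ for every $j\notin\widehat J$, so $X\widehat\delta = X_{\hat J}\widehat\delta_{\hat J}$ lies in the column span of $X_{\hat J}$; since $\PR_{\hat J} = X_{\hat J}(X_{\hat J}'X_{\hat J})^- X_{\hat J}'$ is the orthogonal projector onto that span, $\PR_{\hat J}X\widehat\delta = X\widehat\delta$. Substituting $X\widehat\delta = \PR_{\hat J}X\widehat\delta$ into the previous display gives
$$
X\widehat\theta_{\mathrm{post}\text{-}\lava} = X\widehat\theta_{\lava} + \PR_{\hat J}\bigl(Y - X\widehat\beta\bigr) - \PR_{\hat J}X\widehat\delta = X\widehat\theta_{\lava} + \PR_{\hat J}\bigl(Y - X\widehat\beta - X\widehat\delta\bigr) = X\widehat\theta_{\lava} + \PR_{\hat J}\widehat U ,
$$
which is the claim.

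I do not expect a genuine obstacle: the statement is essentially a one-line consequence of the profiled-fit formulas combined with idempotence of $\PR_{\hat J}$. The only place that warrants care is the use of the Moore-Penrose pseudo-inverse: since $X_{\hat J}$ need not have full column rank, one should note explicitly that $X_{\hat J}(X_{\hat J}'X_{\hat J})^- X_{\hat J}' X_{\hat J} = X_{\hat J}$ (easily checked from the singular value decomposition of $X_{\hat J}$), so that $\PR_{\hat J}$ is genuinely the orthogonal projector onto $\mathrm{col}(X_{\hat J})$ and the identity $\PR_{\hat J}X\widehat\delta = X\widehat\delta$ holds without any rank assumption on the selected design.
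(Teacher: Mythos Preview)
Your proof is correct and follows essentially the same route as the paper: both arguments reduce the claim to the single observation that $X\widehat\delta = X_{\hat J}\widehat\delta_{\hat J}$ lies in the column span of $X_{\hat J}$, which you phrase as $\PR_{\hat J}X\widehat\delta = X\widehat\delta$ and the paper phrases equivalently as $\K_{\hat J}X\widehat\delta = 0$. Your extra care with the Moore--Penrose pseudo-inverse (verifying $\PR_{\hat J}$ really is the orthogonal projector without a rank assumption) is a nice touch that the paper leaves implicit.
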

 The above lemma  reveals that the post-lava corrects the $\ell_1$-shrinkage bias of the original lava fit by adding the projection of the  lava residual onto the subspace of the  selected regressors. This correction is in the same spirit as the post-lasso correction for shrinkage bias in the standard lasso problem; see \cite{belloni2013least}.

 \begin{remark}
We note that  the  SURE for post-lava may not exist, though an estimate of the upper bound of the risk function may be available, because of the impossibility results for constructing unbiased estimators for non-differentiable functions; see \cite{hirano2012impossibility}. \qed
 
 \end{remark}
 
%
%
%

\subsection{Deviation Bounds for Prediction  Errors}

In the following, we develop deviation bounds for the lava prediction error: $\frac{1}{n}\|X\widehat\theta_{\text{lava}}-X\theta_0\|_2^2$. We continue to work with the decomposition $\theta_0=\beta_0+\delta_0$ and will show that lava performs well in terms of rates on the prediction error in this setting. 
According to the discussion in Section 3.2, 
there are three sources of prediction error: (i) $ \textsf{D}_{\text{ridge}}(\lambda_2)$,  (ii) $ \widetilde X\beta_0$ and (iii)  $\K_{\lambda_2}X(\widehat\delta-\delta_0)$. The behavior of the first two terms is determined by the behavior of the ridge estimator of the dense component $\beta_0$, and the behavior of the third term is determined by the behavior of the lasso estimator on the transformed data. 

We assume that $U\sim N(0,\sigma_u^2I_n)$ and that $X$ is fixed.    As in the lasso analysis of \cite{Bickeletal}, a key quantity is the maximal norm of the score:
$$
\Lambda= \left \|\frac{2}{n}\widetilde X'\widetilde U \right \|_{\infty}= \left \|\frac{2}{n}X'\K_{\lambda_2}U \right \|_{\infty}.
$$
Following \cite{belloni2013least}, we set the penalty level for the lasso part of lava in our theoretical development as 
\begin{equation}
\lambda_1 = c\Lambda_{1-\alpha} \ \textnormal{with} \ \Lambda_{1-\alpha} = \inf\{ l \in \mathbb{R}:  \Pr (\Lambda \leq l)  \geq 1-\alpha \}
\end{equation}
and $c>1$ a constant.  Note that \cite{belloni2013least} suggest setting $c = 1.1$ and that $\Lambda_{1-\alpha}$ is easy to approximate by simulation.

Let $S=X'X/n$ and  $\bar V_{\lambda_2}$ be the maximum  diagonal element of 
 $$V_{\lambda_2}=(S+\lambda_2I_p)^{-1}S(S+\lambda_2I_p)^{-1}\lambda_2^2.$$
Then by the union bound and Mill's inequality:
\begin{equation}
\Lambda_{1-\alpha} < \bar \Lambda_{1-\alpha} := 2\sigma_u\sqrt{\frac{\bar V_{\lambda_2} \log (2p/\alpha)}{n}}.
\end{equation}

Thus the choice $\Lambda_{1-\alpha}$ is strictly sharper than the union bound-based, classical
choice  $\bar \Lambda_{1-\alpha} $. Indeed, $\Lambda_{1-\alpha}$
is strictly smaller than $\bar \Lambda_{1-\alpha}$ even in orthogonal design cases since union bounds are not sharp.
In collinear or highly-correlated designs, it is easy to give examples where
 $\Lambda_{1-\alpha} = o(\bar \Lambda_{1-\alpha})$; see \cite{BCW-AOS2014}.  Thus, the gains from using the more refined choice
can be substantial.

We define the following design impact factor: For $\widetilde X = \K_{\lambda_2}^{1/2} X$,
$$\iota (c, \delta_0, \lambda_1,\lambda_2) := \inf_{\Delta \in \mathcal{R}(c, \delta_0, \lambda_1,\lambda_2) }
\frac{\|\widetilde X \Delta \|_2/\sqrt{n}}{\| \delta_0\|_1 - \|\delta_0 + \Delta \|_1 + c^{-1} \| \Delta \|_1}, $$
where $\mathcal{R}(c, \delta_0, \lambda_1,\lambda_2)  = \{ \Delta \in \mathbb{R}^p\setminus\{0\}: \|\widetilde X \Delta \|^2_2/n
\leq 2 \lambda_1 (\| \delta_0\|_1 - \| \delta_0 + \Delta \|_1  + c^{-1} \|\Delta\|_1)\}$ is the restricted set, and
where $\iota (c,\delta_0, \lambda_1,\lambda_2):= \infty$ if $\delta_0 = 0$.

The design impact factor generalizes the restricted eigenvalues of \cite{Bickeletal} and
and is tailored for bounding estimation errors in the prediction norm (cf. \cite{BCW-AOS2014}). Note that in the best case, when the design is well-behaved and $\lambda_2$ is a constant, we have that 
\begin{equation}\label{typical}
\iota (c, \delta_0, \lambda_1,\lambda_2) \geq 
 \frac{1}{\sqrt{\|\delta_0\|_0}}  \kappa,
\end{equation}
where $\kappa>0$ is a constant. Remarks given below provide further discussion.

The following theorem provides the deviation bounds for the lava prediction error.
\begin{theorem}[Deviation Bounds for Lava in Regression]\label{th3.3} We have that with probability $1-\alpha-\epsilon$ \begin{eqnarray*}
  \frac{1}{n}\|X\widehat\theta_{\mathrm{lava}}-X\theta_0\|_2^2&\leq& 
  \frac{2}{n}\|\K^{1/2}_{\lambda_2}X(\widehat\delta-\delta_0)\|_2^2 \|\K_{\lambda_2}\| +\frac{2}{n}\|\mathsf{D}_{\text{ridge}}(\lambda_2)\|_2^2\\
  &  \leq &  \inf_{(\delta'_0, \beta_0')' \in \mathbb{R}^{2p}: \delta_0 + \beta_0 = \theta_0}\left\{ \Big ( B_1(\delta_0)\vee B_2(\beta_0) \Big )\|\K_{\lambda_2}\| + B_3 + B_4(\beta_0) \right \},
  \end{eqnarray*} where $\|\K_{\lambda_2}\|\leq 1$ and \begin{eqnarray*}
&&   B_1(\delta_0) = \frac{2^3 \lambda_1^2}{\iota^2(c, \delta_0, \lambda_1, \lambda_2)}   \leq \frac{2^5 \sigma^2_u c^2 \bar V^2_{\lambda_2} \log (2p/\alpha)}{n \iota^2(c, \delta_0, \lambda_1, \lambda_2)},\\
&& B_2(\beta_0) =   \frac{2^5 }{n}\|\K^{1/2}_{\lambda_2} X\beta_0\|_2^2 = 2^5 \lambda_2\beta_0'S(S+\lambda_2I)^{-1}\beta_0,
\\
&& B_3 = \frac{2^2 \sigma_u^2}{n}\left [\sqrt{\tr(\PR_{\lambda_2}^2)} +  \sqrt{2} \sqrt{\|\PR^2_{\lambda_2}}\| \sqrt{\log(1/\epsilon)} \right]^2, \\
 && B_4(\beta_0) =   \frac{2^2}{n}\|\K_{\lambda_2}X\beta_0\|_2^2 = 2^2 \beta_0'V_{\lambda_2}\beta_0 \leq  2^{3}B_2(\beta_0)  \|\K_{\lambda_2}\|.
  \end{eqnarray*}
 %

\end{theorem}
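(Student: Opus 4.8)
The plan is to control the two error contributions in the identity (\ref{eq3.4add}) separately and then optimize over decompositions. Fix any decomposition $\theta_0=\beta_0+\delta_0$. By Theorem~\ref{th3.1} the sparse estimate $\widehat\delta$ solves the transformed lasso program (\ref{eq3.2}), so $\widehat\delta$ — and hence every random quantity appearing below — does \emph{not} depend on the (non-identified) choice of $(\beta_0,\delta_0)$; this is what will justify taking the infimum at the end. Writing $\widetilde X=\K_{\lambda_2}^{1/2}X$ and $\Delta:=\widehat\delta-\delta_0$, the identity (\ref{eq3.4add}), the bound $\|a+b\|_2^2\le 2\|a\|_2^2+2\|b\|_2^2$, and $\|\K_{\lambda_2}X\Delta\|_2^2=\|\K_{\lambda_2}^{1/2}\widetilde X\Delta\|_2^2\le\|\K_{\lambda_2}\|\,\|\widetilde X\Delta\|_2^2$ together give the first displayed inequality of the theorem; that $\|\K_{\lambda_2}\|\le 1$ is immediate since $\K_{\lambda_2}=I_n-\PR_{\lambda_2}$ with $\PR_{\lambda_2}$ symmetric positive semidefinite with eigenvalues in $[0,1]$.

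For the ridge contribution I substitute $\mathsf{D}_{\text{ridge}}(\lambda_2)=-\K_{\lambda_2}X\beta_0+\PR_{\lambda_2}U$ and split once more: $\tfrac{2}{n}\|\mathsf{D}_{\text{ridge}}(\lambda_2)\|_2^2\le\tfrac{4}{n}\|\K_{\lambda_2}X\beta_0\|_2^2+\tfrac{4}{n}\|\PR_{\lambda_2}U\|_2^2$. The ridge-projection algebra $\K_{\lambda_2}X=\lambda_2 X(S+\lambda_2 I)^{-1}$ yields $\tfrac1n X'\K_{\lambda_2}X=\lambda_2 S(S+\lambda_2 I)^{-1}$ and $\tfrac1n X'\K_{\lambda_2}^2X=V_{\lambda_2}$, which identifies the first summand with $B_4(\beta_0)$ and, after pulling one further $\K_{\lambda_2}^{1/2}$ out of $\K_{\lambda_2}X\beta_0$, gives the comparison $B_4(\beta_0)\le 2^3 B_2(\beta_0)\|\K_{\lambda_2}\|$. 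For the noise summand, $U\mapsto\|\PR_{\lambda_2}U\|_2$ is Lipschitz with constant $\|\PR_{\lambda_2}\|=\sqrt{\|\PR_{\lambda_2}^2\|}$ and $\Ep\|\PR_{\lambda_2}U\|_2\le(\Ep\|\PR_{\lambda_2}U\|_2^2)^{1/2}=\sigma_u\sqrt{\tr(\PR_{\lambda_2}^2)}$, so the Gaussian concentration inequality gives, with probability at least $1-\epsilon$, $\|\PR_{\lambda_2}U\|_2\le\sigma_u\big(\sqrt{\tr(\PR_{\lambda_2}^2)}+\sqrt{2}\sqrt{\|\PR_{\lambda_2}^2\|}\sqrt{\log(1/\epsilon)}\big)$; squaring and multiplying by $4/n$ gives exactly $B_3$.

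For the lasso contribution I use Theorem~\ref{th3.1} and the decomposition (\ref{decompose}) to write $\widetilde Y-\widetilde X\delta_0=\widetilde U+r$ with $r:=\K_{\lambda_2}^{1/2}X\beta_0$. Expanding the basic inequality from optimality of $\widehat\delta$ in (\ref{eq3.2}) and setting $a:=\|\widetilde X\Delta\|_2/\sqrt n$, $b:=\|r\|_2/\sqrt n$ and $T:=\|\delta_0\|_1-\|\delta_0+\Delta\|_1+c^{-1}\|\Delta\|_1$, one obtains $a^2\le\tfrac2n\Delta'\widetilde X'\widetilde U+\tfrac2n r'\widetilde X\Delta+\lambda_1(\|\delta_0\|_1-\|\delta_0+\Delta\|_1)$; bounding the first term via H\"older's inequality and the event $\{\Lambda\le\lambda_1/c\}$ (which has probability at least $1-\alpha$ by the choice $\lambda_1=c\Lambda_{1-\alpha}$), and the second via Cauchy--Schwarz, yields $a^2\le 2ab+\lambda_1 T$. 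I then split on whether $2ab\le\lambda_1 T$. In that case $a^2\le 2\lambda_1 T$, i.e. $\Delta\in\mathcal{R}(c,\delta_0,\lambda_1,\lambda_2)$, so by the definition of the design impact factor $\iota(c,\delta_0,\lambda_1,\lambda_2)\le a/T$; hence $a^2\le 2\lambda_1 a/\iota$, so $a\le 2\lambda_1/\iota$ and $\tfrac2n\|\widetilde X\Delta\|_2^2=2a^2\le 2^3\lambda_1^2/\iota^2=B_1(\delta_0)$ (the case $\delta_0=0$ is consistent: then $T\le 0$ forces this branch empty and $\iota=\infty$). Otherwise $2ab>\lambda_1 T$, so $a^2<4ab$, hence $a<4b$ and $\tfrac2n\|\widetilde X\Delta\|_2^2=2a^2<2^5 b^2=B_2(\beta_0)$. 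Either way, $\tfrac2n\|\widetilde X\Delta\|_2^2\le B_1(\delta_0)\vee B_2(\beta_0)$ on $\{\Lambda\le\lambda_1/c\}$.

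A union bound over the two events then gives, with probability at least $1-\alpha-\epsilon$, $\tfrac1n\|X\widehat\theta_{\lava}-X\theta_0\|_2^2\le(B_1(\delta_0)\vee B_2(\beta_0))\|\K_{\lambda_2}\|+B_3+B_4(\beta_0)$, and taking the infimum over all decompositions $\theta_0=\beta_0+\delta_0$ (legitimate since $\widehat\delta$ is decomposition-free) produces the second displayed bound; the closed-form upper bound for $B_1(\delta_0)$ follows from $\lambda_1=c\Lambda_{1-\alpha}<c\bar\Lambda_{1-\alpha}$ with $\bar\Lambda_{1-\alpha}$ as displayed just before the theorem. I expect the lasso step to be the main obstacle: one must absorb the approximation error $r=\K_{\lambda_2}^{1/2}X\beta_0$ into $B_2(\beta_0)$ without spoiling the restricted-set membership that yields the $B_1(\delta_0)$ bound, and the dichotomy on $2ab$ versus $\lambda_1 T$ is precisely the device that achieves this; the only remaining subtlety is the bookkeeping that all random quantities are independent of the non-identified split $(\beta_0,\delta_0)$.
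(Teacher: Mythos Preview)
Your argument is correct and follows essentially the same three-step structure as the paper's proof: split via (\ref{eq3.4add}), bound the transformed-lasso piece on the event $\{\Lambda\le\lambda_1/c\}$, and bound the ridge piece deterministically plus a noise tail. Two minor differences are worth noting. First, in the lasso step the paper's dichotomy is on whether $a\le 4b$ (so that $1-2b/a\ge 1/2$ in the complementary case), whereas you split on whether $2ab\le\lambda_1 T$; the two organizations are equivalent and produce the identical bounds $B_1(\delta_0)$ and $B_2(\beta_0)$. Second, for the noise term $\|\PR_{\lambda_2}U\|_2^2$ the paper invokes the Hsu--Kakade--Zhang exponential inequality for Gaussian quadratic forms, while you use Gaussian Lipschitz concentration applied to $U\mapsto\|\PR_{\lambda_2}U\|_2$ together with $\Ep\|\PR_{\lambda_2}U\|_2\le\sigma_u\sqrt{\tr(\PR_{\lambda_2}^2)}$; both routes yield exactly the displayed $B_3$, and your version is slightly more elementary.
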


\begin{remark}
As noted before, the ``sparse+dense'' framework does not require the separate identification of $(\beta_0,\delta_0)$. Consequently, the prediction upper bound is the infimum over all the pairs $(\beta_0,\delta_0)$ such that $\beta_0 + \delta_0 = \theta_0$. The upper bound thus optimizes over the best ``split" of $\theta_0$ into sparse and dense parts, $\delta_0$
 and $\beta_0$.   The bound has four components. $B_1$ is a qualitatively sharp bound on the performance
of the lasso for $\K^{1/2}_{\lambda_2}$-transformed data. It involves two important factors: $\bar V_{\lambda_2}$ and 
the design impact factor $\iota(c, \delta_0,  \lambda_1,\lambda_2)$.  The term $B_3$ is the size of the impact of the noise
on the ridge part of the estimator, and it has a qualitatively sharp form as in \cite{hsu2014random}. The term $B_4$ describes the size of the bias for the ridge part of the estimator and appears to be qualitatively sharp  as in \cite{hsu2014random}.   We refer the reader to \cite{hsu2014random} for the in-depth
analysis of noise term $B_3$ and bias term $B_4$.   The term $B_2 \|\K_{\lambda_2}\|$ appearing in the bound is also related to the size of the bias resulting from ridge regularization. In examples like the Gaussian sequence model, we have
 \begin{equation}\label{BB}
 B_4(\beta_0) \lesssim B_2(\beta_0) \|\K_{\lambda_2}\| \lesssim B_4(\beta_0).
 \end{equation}
 This result holds more generally whenever $ \|\K^{-1}_{\lambda_2}\|\|\K_{\lambda_2}\| \lesssim 1$,
 which occurs if $\lambda_2$ stochastically dominates the eigenvalues of $S$ (see our supplementary material \cite{lava} for detailed derivations).  \qed
\end{remark}

\begin{remark}[Comments on Performance in Terms of Rates]
It is worth discussing heuristically two key features arising from Theorem \ref{th3.3}.

\textbf{1)} In dense models where ridge would work well, lava will work similarly to ridge.
Consider any model where there is no sparse component (so $\theta_0 = \beta_0$), where the ridge-type rate $B^* = B_4(\beta_0) + B_3$ is optimal (e.g. \cite{hsu2014random}), and where (\ref{BB}) holds. In this case, 
we have $B_1(\delta_0) = 0$ since $\delta_0 =0$, and the lava performance bound reduces to
$$
B_2(\beta_0)\|\K_{\lambda_2}\| + B_3 + B_4(\beta_0) \lesssim B_4(\beta_0) + B_3 = B^*.
$$

\textbf{2)} Lava works similarly to lasso in sparse models that have no dense components
whenever lasso works well in those models. For this to hold, we need to set $\lambda_2  \gtrsim n$. Consider any model where $\theta_0 = \delta_0$ and with design such that the restricted eigenvalues $\kappa$ of \cite{Bickeletal} are bounded away from zero.  In this case, the standard lasso rate 
$$B^* =  \frac{\|\delta_0\|_0 \log (2p/\alpha)}{n \kappa^2}$$
of \cite{Bickeletal} is optimal.   
For the analysis of lava in this setting, we have that $B_2(\beta_0) = B_4(\beta_0) =0$. Moreover, we can show that 
 $B_3 \lesssim n^{-1}$ and that the design impact factor obeys (\ref{typical}) in this case.  Thus,
 $$
 B_1(\delta_0) \lesssim  \frac{\|\delta_0\|_0 \log (2p/\alpha)}{n \kappa^2} = B^*,
 $$
and $(B_1(\delta_0) \vee B_2(\beta_0))\|\K_{\lambda_2}\| + B_3 + B_4(\beta_0) \lesssim B^*$ follows due to $\|\K_{\lambda_2}\|\leq 1$.

Note that we see lava performing similarly to lasso in sparse models and performing similarly to ridge in dense models in the simulation evidence provided in the next section.  This simulation evidence is consistent with the observations made above. \qed
\end{remark}


\begin{remark}[On the design impact factor]  The definition of the design impact factor is motivated by the generalizations of the restricted eigenvalues of \cite{Bickeletal} proposed in \cite{BCW-AOS2014} to improve performance bounds for lasso in badly behaved
designs.  The concepts above are strictly more general than the usual restricted eigenvalues formulated for the transformed data. 
Let $J(\delta_0) =\{j\leq p: \delta_{0j}\neq0\}$. For any vector $\Delta \in\mathbb{R}^p$, respectively write  $\Delta_{J(\delta_0)}=\{\Delta_j: j\in J(\delta_0)\}$ and $\Delta_{J^c(\delta_0)}=\{\Delta_j: j\notin J(\delta_0)\}$. Define
$$
\mathcal{A}(c, \delta_0)=\{v\in\mathbb{R}^p\setminus \{0\}: \|\Delta_{J^c}(\delta_0) \|_1\leq (c+1)/(c-1) \|\Delta_{J(\delta_0)}\|_1\}.
$$
The restricted eigenvalue $\kappa^2(c, \delta_0,\lambda_2)$ is given by 
$$
\kappa^2(c, \delta_0,\lambda_2)=\inf_{\Delta \in \mathcal{A}(c, \delta_0)}\frac{\|\widetilde X \Delta\|_2^2/n}{\| \Delta_{J(\delta_0)}\|_2^2}=
\inf_{\Delta \in \mathcal{A}(c, \delta_0)}\frac{X'\K_{\lambda_2}X/n}{\| \Delta_{J(\delta_0)} \|_2^2}. $$
Note that  $\mathcal{A}(c, \delta_0) \subset \mathcal{R}(c, \delta_0, \lambda_1,\lambda_2)$ and that
$$
\iota (c,\delta_0, \lambda_1,\lambda_2) \geq \inf_{\Delta \in \mathcal{A}(c, \delta_0)}
\frac{\|\widetilde X \Delta \|_2/\sqrt{n}}{\|\Delta_{J(\delta_0)}\|_1} \geq \frac{1}{\sqrt{\|\delta_0\|_0}}  \kappa(c, \delta_0,\lambda_2).
$$

Now note that $X'\K_{\lambda_2}X/n=\lambda_2S(S+\lambda_2 I_p)^{-1} $. When $\lambda_2$ is relatively large, $X'\K_{\lambda_2}X/n=\lambda_2S(S+\lambda_2 I_p)^{-1}$ is approximately equal to $S$. Hence, $\kappa^2(c, \delta_0,\lambda_2)$ behaves like the usual restricted eigenvalue constant as in \cite{Bickeletal}, and we have a good bound on 
the design impact factor $\iota(c, \delta_0, \lambda_1, \lambda_2)$ as in (\ref{typical}). To understand how  $\kappa^2(c, \delta_0,\lambda_2)$ depends on $\lambda_2$ more generally, consider the special case of an orthonormal design.  In this case,  $S=I_p$ and $X'\K_{\lambda_2}X/n=kI_p$ with $k=\lambda_2/(1+\lambda_2)$. Then $\kappa^2(c, \delta_0,\lambda_2)=k$, and the design impact factor becomes  $ \sqrt{k}/\sqrt{\|\delta_0\|_0}$. 

Thus, 
the design impact factor scales like $1/\sqrt{\|\delta_0\|_0}$ when restricted eigenvalues are well-behaved, e.g. bounded away from zero.  This behavior corresponds to the best possible case.  Note that design impact factors can be well-behaved even if restricted
eigenvalues are not. For example, suppose we have two regressors that are identical.
Then $\kappa(c, \delta_0,\lambda_2) = 0$, but $\iota (c,\delta_0, \lambda_1,\lambda_2) >0$ in this case; see \cite{BCW-AOS2014}).  \qed
\end{remark}

 \section{Simulation Study}

 The lava and post-lava algorithm can be summarized as follows.
 
 \begin{itemize}
 \item[1.] Fix $\lambda_1, \lambda_2$, and define $\mathsf{P}_{\lambda_2}=X(X'X+n\lambda_2 I_p)^{-1}X'$, $\mathsf{K}_{\lambda_2}=I_n-\mathsf{P}_{\lambda_2}$.
  \item[2.] For  $\widetilde Y=\mathsf{K}_{\lambda_2}^{1/2}Y, $  and $\widetilde X=\mathsf{K}_{\lambda_2}^{1/2}X$, solve for
$$
\widehat\delta=\arg\min_{\delta \in \mathbb{R}^p} \left\{\frac{1}{n}\|\widetilde Y-\widetilde X\delta\|_2^2+\lambda_1\|\delta\|_1 \right\}.
$$
 \item[3.] Define $\widehat\beta(\delta)=(X'X+n\lambda_2I_p)^{-1}X'(Y-X\delta).$ The lava estimator is $$\widehat\theta_{\lava}=\widehat\beta(\widehat\delta) +\widehat\delta.$$
 \item[4.]  For $W=Y-X\widehat\beta(\widehat\delta)$, solve for 
$$
 \widetilde\delta= \arg\min_{\delta \in \mathbb{R}^p}\bigg\{\frac{1}{n}\|W-X\delta\|_2^2,\quad \delta_j=0\text{ if } \widehat\delta_j=0\bigg\}.
 $$
 \item[5.]    The post-lava estimator is 
 $$\widehat\theta_{\text{post-}\lava}=\widehat\beta(\widehat\delta) +\widetilde\delta.$$

 \end{itemize}


We present a Monte-Carlo analysis based on a Gaussian linear regression model: $Y=X\theta+U$, $U \mid X \sim N(0, I_n)$. The parameter $\theta$ is a $p$-vector defined as
$$\theta = (3, 0, \ldots, 0)' +  q (0, .1, \ldots, .1)',$$
where $q\geq 0$ denotes the ``size of small coefficients".  When $q$ is zero or small, $\theta$ can be well-approximated by the sparse vector $(3, 0, \ldots, 0)$. When $q$ is relatively large, $\theta$ cannot be approximated well by a sparse vector.  We set $n=100$ and $p=2n$, and compare the performance of $X\widehat\theta$ formed from one of five methods:   lasso, ridge, elastic net, lava, and post-lava.\footnote{Results with $p=n/2$, where OLS is also included, are available in supplementary material.  The results are qualitatively similar to those given here, with lava and post-lava dominating all other procedures.} 
The  rows of $X$ are generated independently from a mean zero multivariate normal with covariance matrix $\Sigma$.  We present results under an independent design, $\Sigma=I$, and a factor covariance structure with
$\Sigma=LL'+I$ where the rows of $L$ are independently generated from  $N(0,I_3)$.   In the latter case, the columns of $X$ depend on three common factors.  We focus on a fixed design study, so the design $X$ is generated once and fixed throughout the replications.

To measure performance, we consider the risk measure $\mathrm{R}(\theta, \widehat \theta) = \Ep [\frac{1}{n}\| X\widehat\theta-X\theta\|_2^2]$ where the expectation $\Ep$ is conditioned on $X$.  For each estimation procedure, we report the simulation estimate of this risk measure formed by averaging over $B=100$ simulation replications. Figures \ref{f4} and 5 plot the simulation estimate of $\mathrm{R}(\theta, \widehat \theta)$ for each estimation method as a function of $q$,  the size of the ``small coefficients''.  In Figure \ref{f4}, all the tuning parameters are chosen via minimizing the SURE as defined in Theorem 3.2; and the tuning parameters are chosen by 5-fold cross-validation in Figure 5.  The SURE formula depends on the error variance $\sigma_u^2$, which must be estimated. A conservative preliminary estimator for $\sigma_u^2$ can be obtained from an iterative method based on the regular lasso estimator; see, e.g., \cite{belloni2013least}.  On the other hand, $k$-fold cross-validation does not require a preliminary variance estimator.

\begin{figure}[htbp]
\begin{center}
\includegraphics[height = 7cm, width=7.2cm]{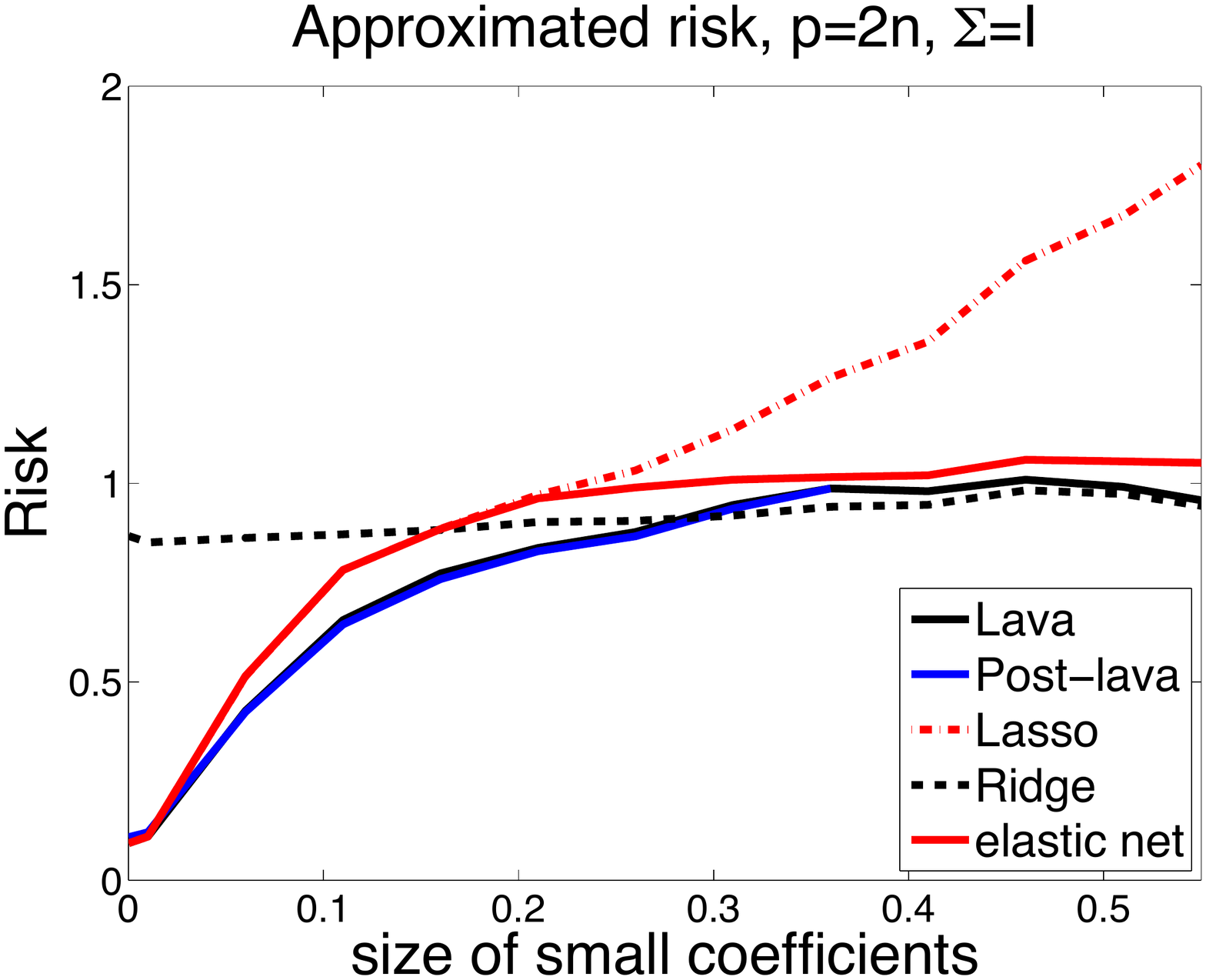}
\includegraphics[height = 7cm, width=7.2cm]{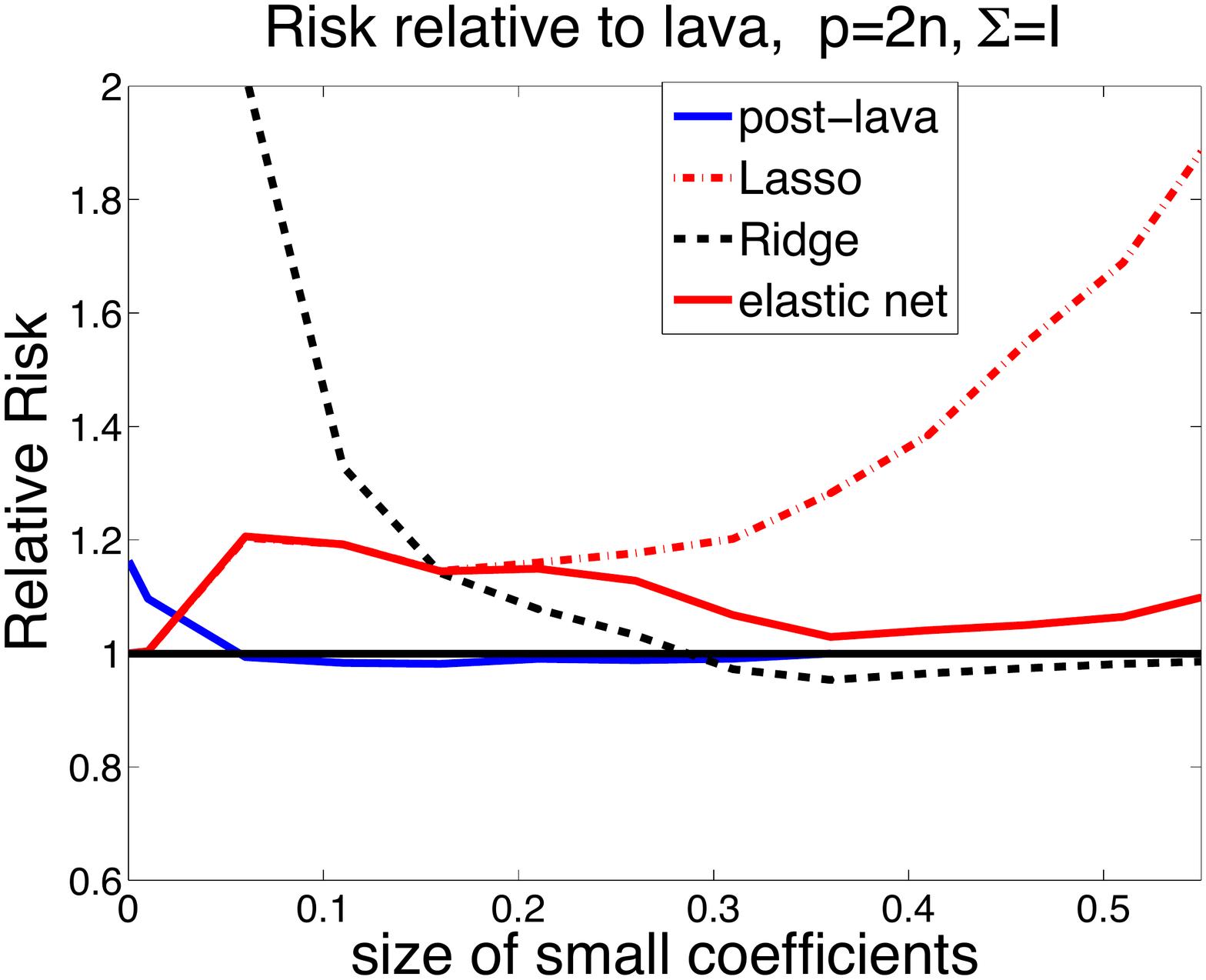}
\includegraphics[height = 7cm, width=7.2cm]{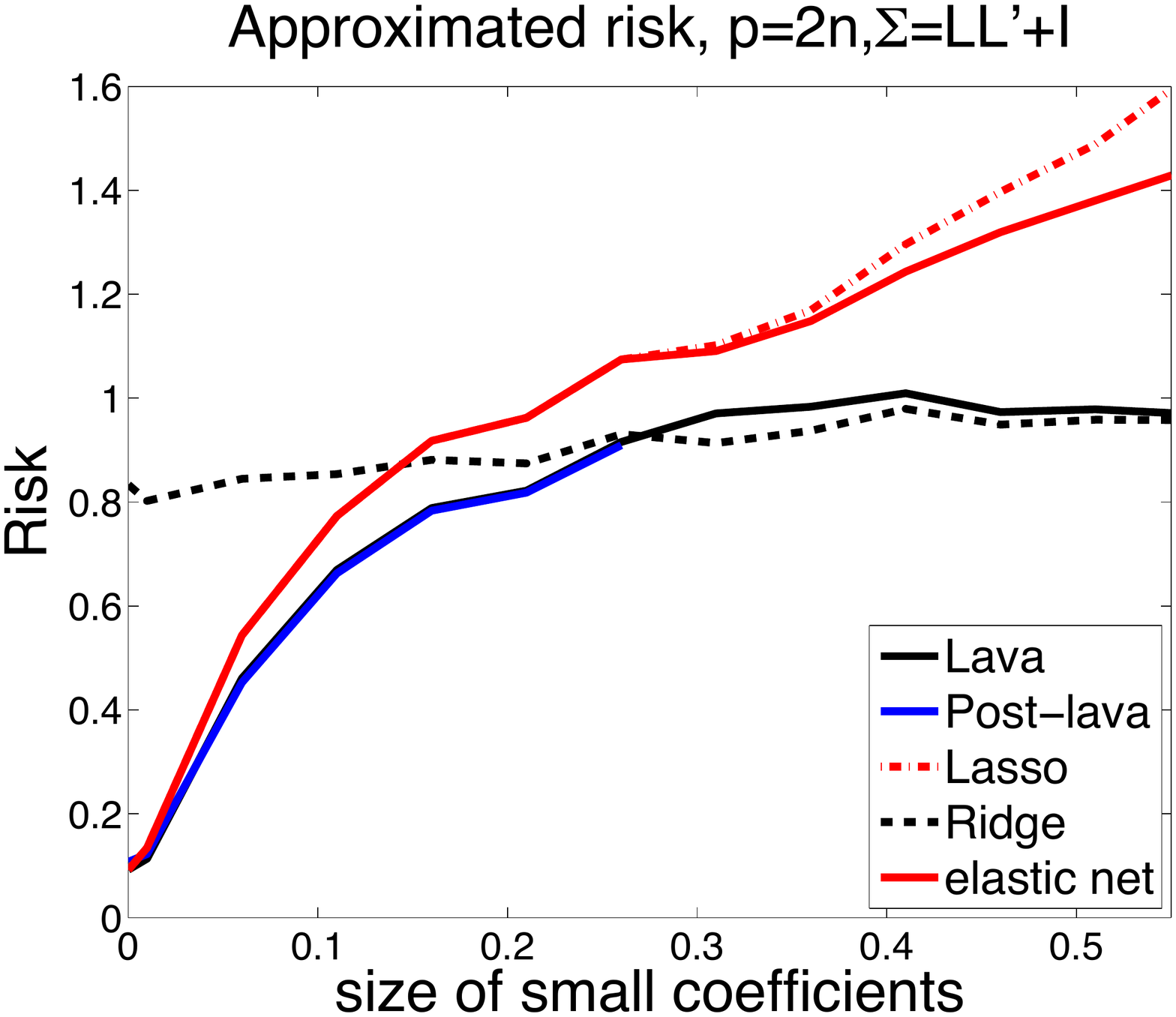}
\includegraphics[height = 7cm, width=7.2cm]{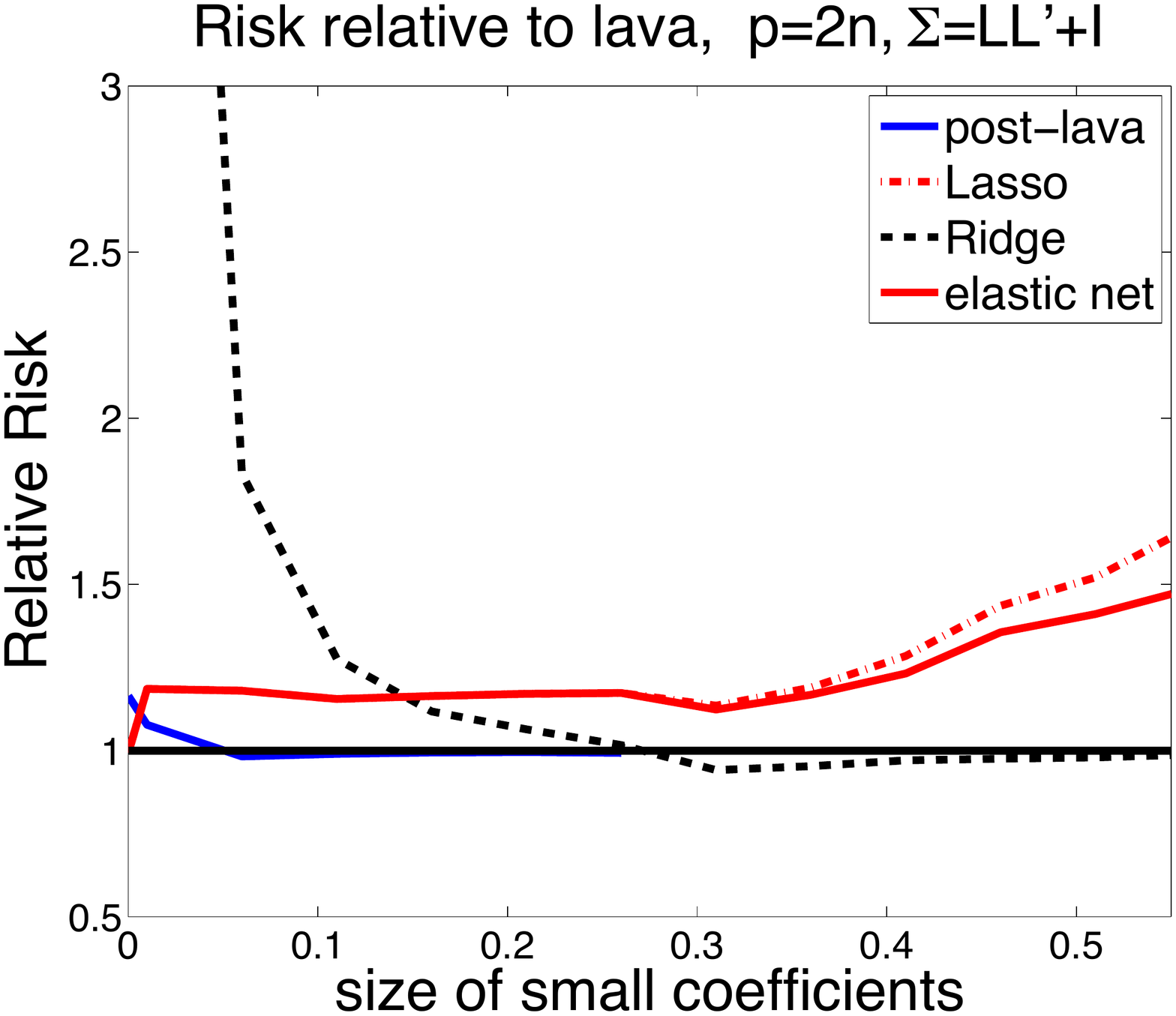}
\end{center}
\caption{\footnotesize Simulation risk comparison with tuning done by minimizing SURE.  In this figure, we report simulation estimates of risk functions of lava, post-lava, ridge, lasso, and elastic net  in a Gaussian regression model with ``sparse+dense'' signal structure over the regression coefficients.  We select tuning parameters by minimizing SURE. The size of ``small coefficients" is shown on the horizontal axis. The size of these coefficients directly corresponds to the size of the ``dense part" of  the signal, with zero corresponding to the exactly sparse case.   Relative risk plots the ratio of the risk of each estimator to the lava risk, $\mathrm{R}(\theta, \widehat \theta_e)/\mathrm{R}(\theta, \widehat \theta_{\lava})$.
}
\label{f4}
\end{figure}

\begin{figure}[htbp]
\begin{center}
\includegraphics[height = 7cm,width=7.2cm]{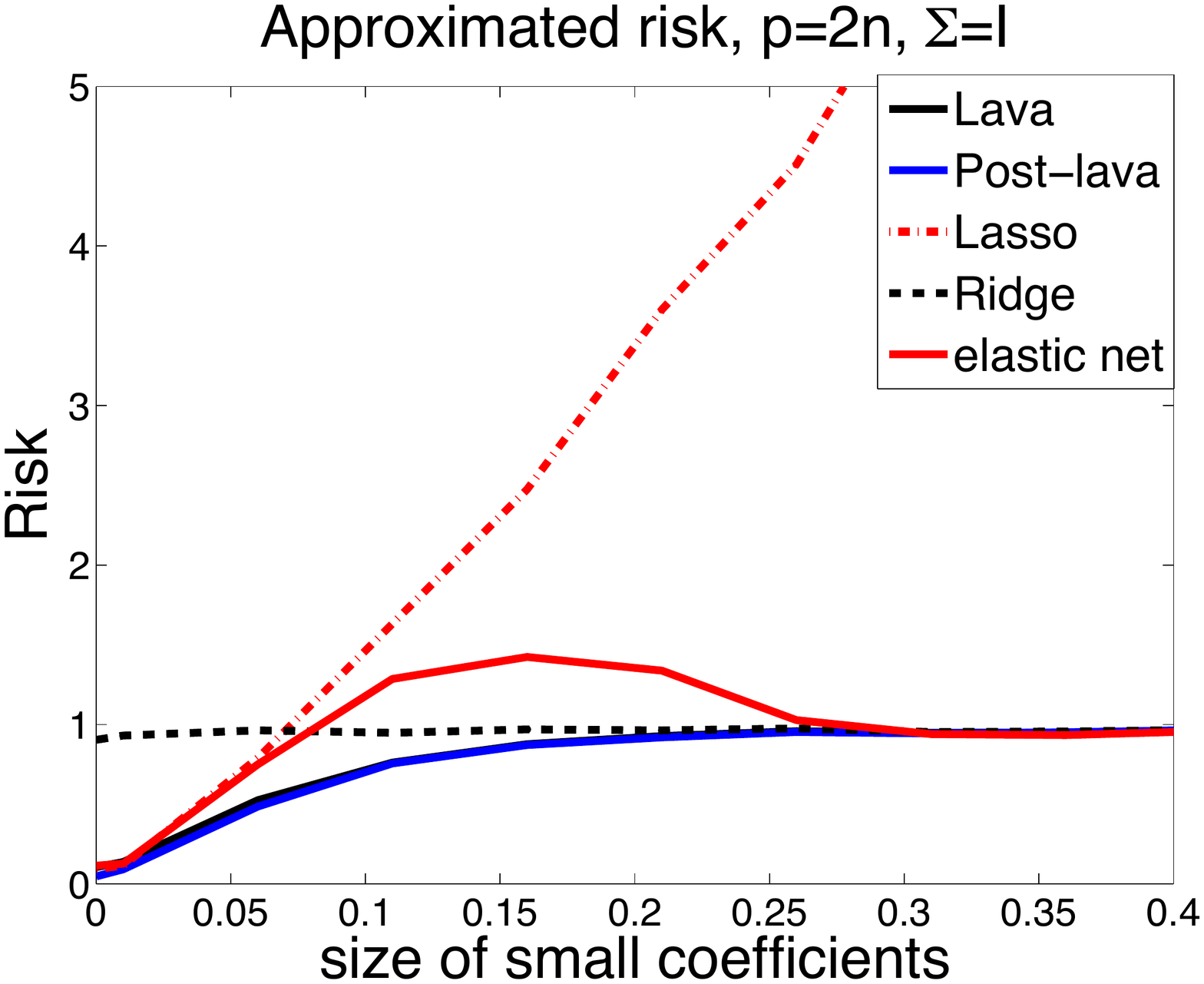}
\includegraphics[height = 7cm,width=7.2cm]{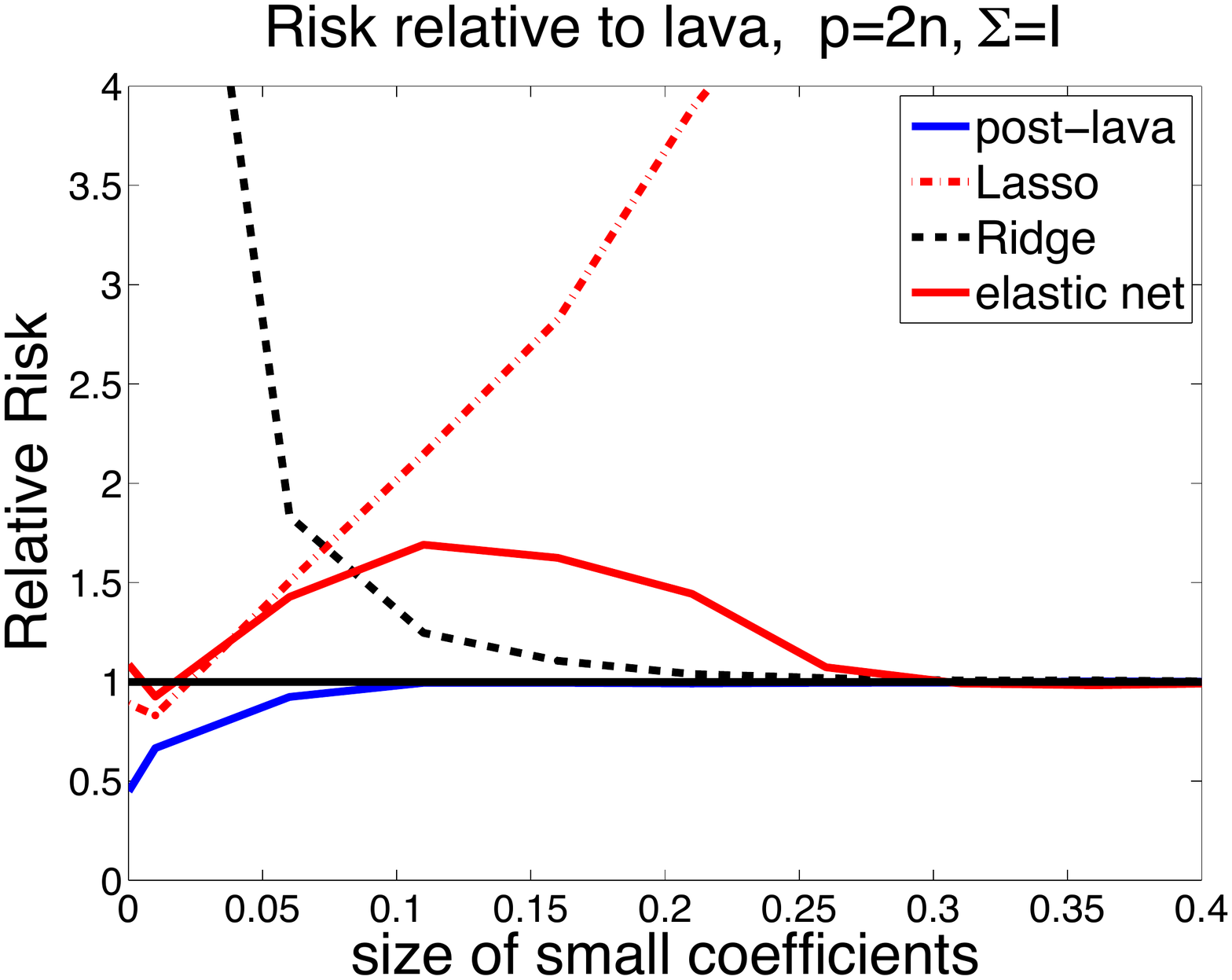}
\includegraphics[height = 7cm,width=7.2cm]{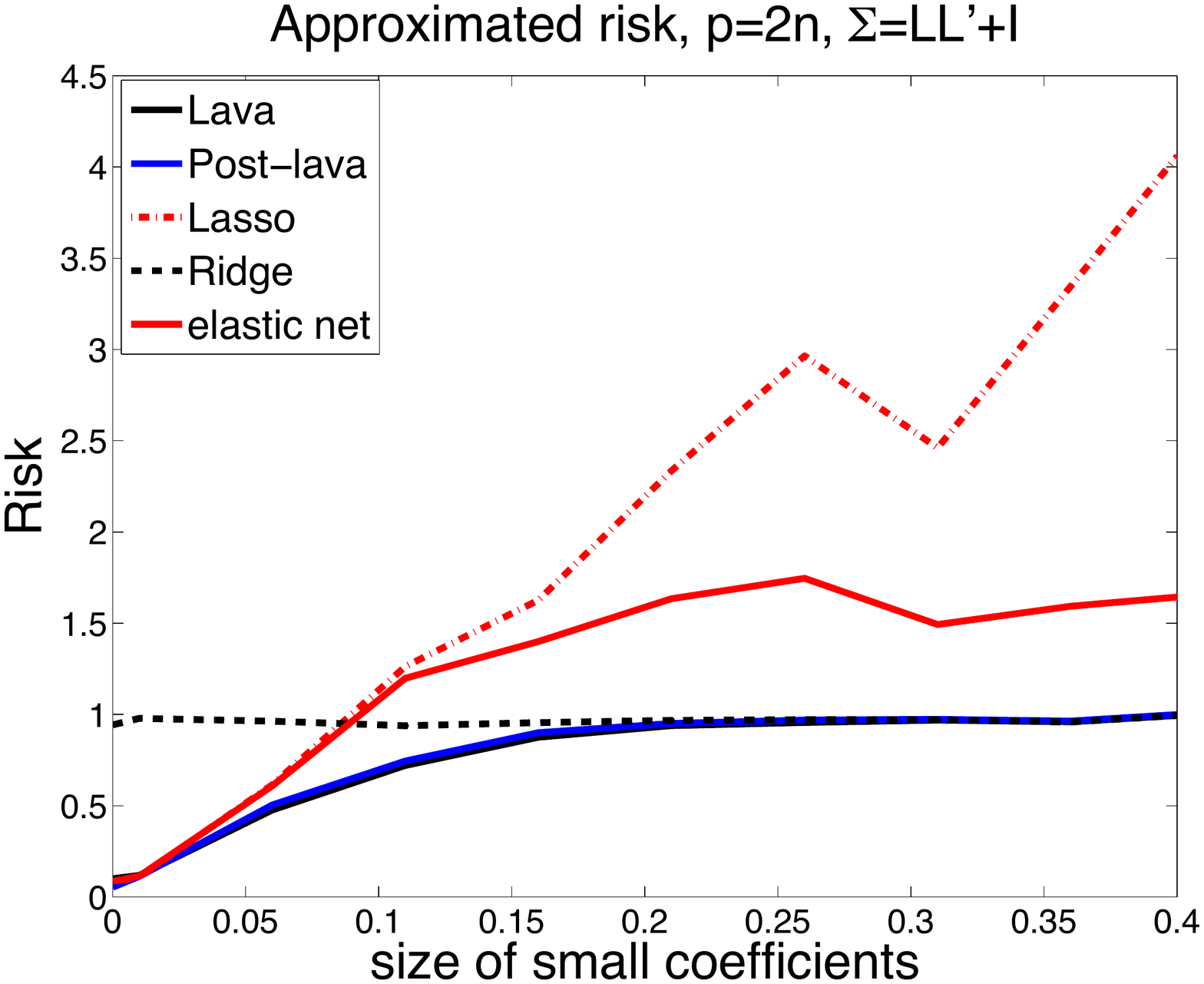}
\includegraphics[height = 7cm,width=7.2cm]{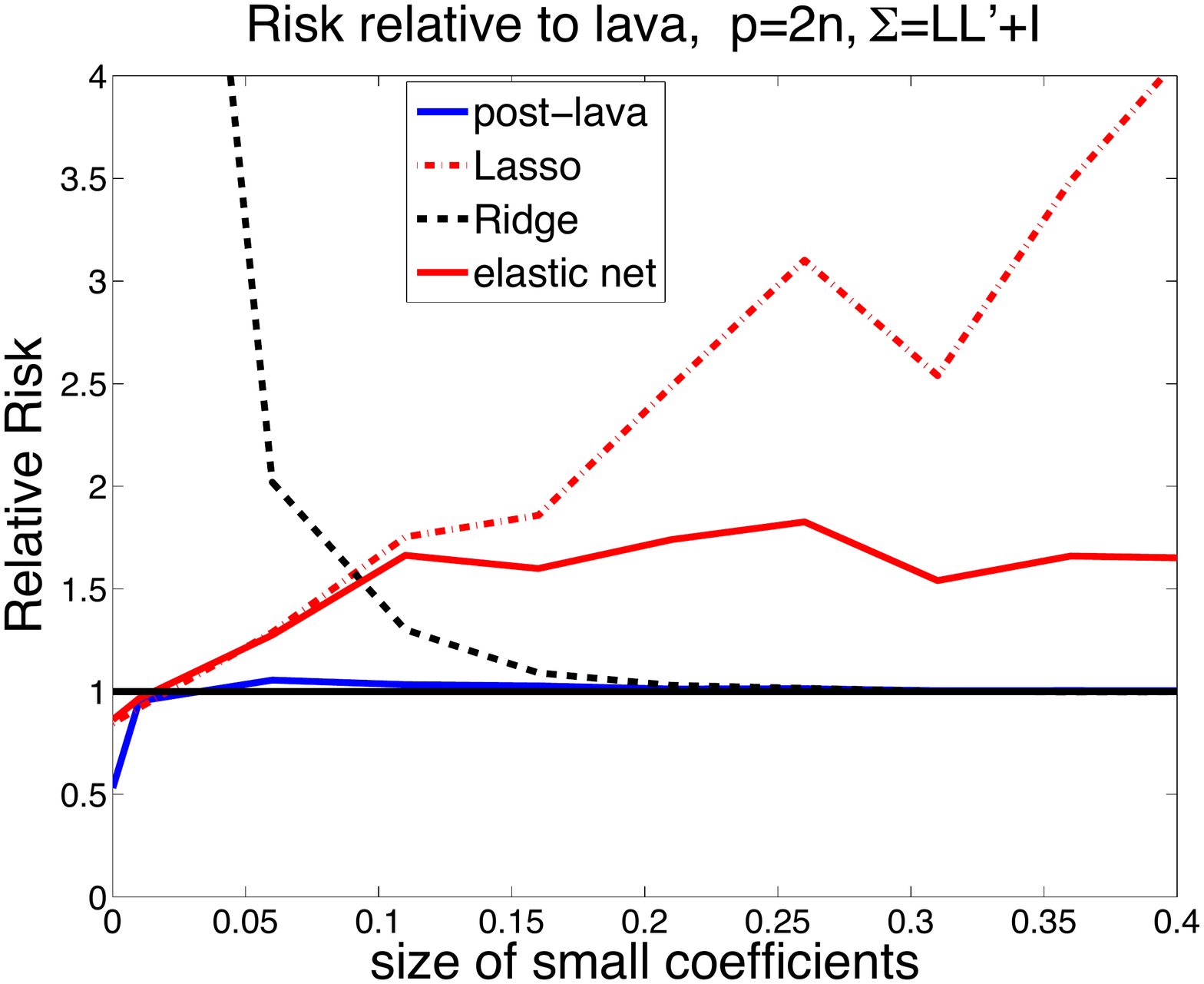}
\end{center}
\caption{\footnotesize Simulation risk comparison with tuning done by 5-fold cross-validation.  In this figure, we report simulation estimates of risk functions of lava, post-lava, ridge, lasso, and  elastic net  in a Gaussian regression model with ``sparse+dense'' signal structure over the regression coefficients.  We select tuning parameters by 5-fold cross-validation. The size of ``small coefficients" is shown on the horizontal axis. The size of these coefficients directly corresponds to the size of the ``dense part" of  the signal, with zero corresponding to the exactly sparse case.   Relative risk plots the ratio of the risk of each estimator to the lava risk, $\mathrm{R}(\theta, \widehat \theta_e)/\mathrm{R}(\theta, \widehat \theta_{\lava})$.
}
\label{f5}
\end{figure}

The comparisons are similar in both figures with lava and post-lava dominating the other procedures.  It is particularly interesting to compare the performance of lava to lasso and ridge.  The  lava and post-lava estimators perform about as well as lasso when the signal is sparse and perform significantly better than lasso when the signal is non-sparse. The lava and post-lava estimators perform about as well as ridge when the signal is dense and perform much better than ridge when the signal is sparse. When the tuning parameters are selected via cross-validation,  the post-lava  performs slightly better than the lava when the model is sparse. The gain is somewhat more apparent in the independent design. Additional simulations are presented in our supplementary material (\cite{lava}).


\section{Discussion}

We propose a new method, called ``lava", which is designed specifically to achieve good prediction and estimation performance in ``sparse+dense'' models.    In such models, the high-dimensional parameter is represented as the sum of a sparse vector with a few large non-zero entries and a dense vector with many small entries.  This structure renders traditional sparse or dense estimation methods, such as lasso or ridge, sub-optimal for prediction and other estimation purposes.  The proposed approach thus complements other approaches
to structured sparsity problems such as those considered in fused sparsity estimation (\cite{tibshirani2005sparsity} and \cite{dalalyan2012fused}) and structured matrix decomposition problems (\cite{candes2011robust}, \cite{chandrasekaran2011rank}, \cite{POET}, and \cite{klopp2014robust}). 

There are a number of interesting research directions that remain to be considered. An immediate extension of the present results would be to consider semi-pivotal estimators akin to the root-lasso/scaled-lasso of \cite{BCW-Biometrika} and \cite{sun2012scaled}. For instance, we can define \begin{eqnarray*}
 \widehat\theta_{\textrm{root-lava}}&:=&\widehat\beta+\widehat\delta,\cr
 (\widehat\beta,\widehat\delta)&:=&\arg\min_{\beta,\delta, \sigma } \left \{ \frac{1}{2n\sigma^2}\|Y-X(\beta+\delta)\|_2^2+\frac{(1-a)\sigma}{2}+\lambda_2\|\beta\|_2^2+\lambda_1\|\delta\|_1 \right \}.
 \end{eqnarray*}
Thanks to the characterization of Theorem \ref{th3.1}, the method can be implemented by applying root-lasso on appropriately transformed data.  The present work could also be extended to accommodate non-Gaussian settings and settings with random designs, and it could also be extended beyond the mean regression problem to more general M- and Z- estimation problems, e.g., along the lines of \cite{negahban2009unified}.

 \appendix
 
 \section{Proofs for Section 2}
 \subsection{Proof of Lemma \ref{l2.1}} Fixing $\delta$, the solution for $\beta$ is given by $\widehat\beta(\delta)=(z-\delta)/(1+\lambda_2)$. Substituting back to the original problem, we obtain
 \begin{eqnarray*}
 d_1(z)&=&\arg\min_{\delta \in \mathbb{R}}[z-\widehat\beta(\delta)-\delta]^2+\lambda_2|\widehat\beta(\delta)|^2+\lambda_1|\delta|\cr
 &=&\arg\min_{\delta \in \mathbb{R}}k(z-\delta)^2+\lambda_1|\delta|.
 \end{eqnarray*}
 Hence $d_1(z)=(|z|-\lambda_1/(2k))_+\text{sign}(z)$, and $d_2(z)=\widehat\beta(d_1(z))=(z_1-d_1(z))(1-k)$. Consequently, $d_{\lava}(z)=d_1(z)+d_2(z)=(1-k)z+kd_1(z)$. \qed
 
 \subsection{A Useful Lemma}
  The proofs rely on the following lemma.
 \begin{lemma} \label{la.1} 
Consider the general piecewise linear function:
  $$
F(z)=\begin{cases}
 hz+d, & z>w\cr
 ez+m, &|z|\leq w \cr
 fz+g, & z<-w
 \end{cases}.
 $$
 Suppose $Z\sim N(\theta,\sigma^2)$. Then

 \begin{eqnarray*}
\Ep [F(Z)^2]& =& [\sigma^2(h^2w+h^2\theta+2dh) -\sigma^2(e^2w+e^2\theta+2me)]\phi_{\theta,\sigma}(w)\cr
&&+[\sigma^2(-e^2w+e^2\theta+2me)-\sigma^2(-f^2w+f^2\theta+2gf)]\phi_{\theta,\sigma}(-w)\cr
&&+ ((h\theta+d)^2+h^2\sigma^2)\Pr_{\theta,\sigma}(Z>w)+((f\theta+g)^2+f^2\sigma^2)\Pr_{\theta,\sigma}(Z<-w)\cr
&&+((e\theta+m)^2+e^2\sigma^2)\Pr_{\theta,\sigma}(|Z|<w).
\end{eqnarray*}

 \end{lemma}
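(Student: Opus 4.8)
The plan is to expand the square region by region and reduce everything to first- and second-order truncated Gaussian moments, then to re-assemble the pieces. Writing $1_A$ for the indicator of an event $A$, we have
\[
F(Z)^2=(hZ+d)^2\,1\{Z>w\}+(eZ+m)^2\,1\{|Z|\le w\}+(fZ+g)^2\,1\{Z<-w\},
\]
so by linearity of expectation it suffices to evaluate, for each of the events $A\in\{\{Z>w\},\{|Z|\le w\},\{Z<-w\}\}$, the three quantities $\Pr_{\theta,\sigma}(A)$, $\Ep[Z\,1_A]$ and $\Ep[Z^2\,1_A]$, and then to take the linear combinations of these with weights $(h^2,2hd,d^2)$, $(e^2,2em,m^2)$, $(f^2,2fg,g^2)$ on the three pieces respectively.

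To compute the truncated moments I would substitute $Z=\theta+\sigma\xi$ with $\xi\sim N(0,1)$, density $\varphi$ and cdf $\Phi$, and set $a=(w-\theta)/\sigma$, $b=(-w-\theta)/\sigma$, so the three events become $\{\xi>a\}$, $\{b\le\xi\le a\}$ and $\{\xi<b\}$ (assume $w\ge 0$; the case $w=0$ is the obvious limit). From $\varphi'(x)=-x\varphi(x)$ one gets, for any limits $\alpha\le\beta$ (possibly $\pm\infty$),
\[
\int_\alpha^\beta x\varphi(x)\,dx=\varphi(\alpha)-\varphi(\beta),\qquad \int_\alpha^\beta x^2\varphi(x)\,dx=\alpha\varphi(\alpha)-\beta\varphi(\beta)+\Phi(\beta)-\Phi(\alpha),
\]
the second by integration by parts against $x\varphi(x)=-\varphi'(x)$. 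Applying these with the appropriate limits yields $\Ep[\xi^r1_A]$ for $r=0,1,2$, hence $\Ep[Z\,1_A]=\theta\Pr_{\theta,\sigma}(A)+\sigma\Ep[\xi\,1_A]$ and $\Ep[Z^2\,1_A]=\theta^2\Pr_{\theta,\sigma}(A)+2\theta\sigma\Ep[\xi\,1_A]+\sigma^2\Ep[\xi^2\,1_A]$. Finally I would translate back using $\varphi(a)=\sigma\phi_{\theta,\sigma}(w)$, $\varphi(b)=\sigma\phi_{\theta,\sigma}(-w)$, $\Phi(a)=\Pr_{\theta,\sigma}(Z\le w)$, $\Phi(b)=\Pr_{\theta,\sigma}(Z\le -w)$, together with $\sigma a=w-\theta$ and $\sigma b=-w-\theta$. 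For example this route gives
\[
\Ep[(hZ+d)^2\,1\{Z>w\}]=\big[(h\theta+d)^2+h^2\sigma^2\big]\Pr_{\theta,\sigma}(Z>w)+\sigma^2(h^2w+h^2\theta+2dh)\phi_{\theta,\sigma}(w),
\]
which is already exactly the two terms of the asserted formula carried by $\Pr_{\theta,\sigma}(Z>w)$ and (the $h$-part of) $\phi_{\theta,\sigma}(w)$; the $\{Z<-w\}$ piece is computed identically, and the middle piece $\{b\le\xi\le a\}$ the same way but with two finite limits.

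It then remains to add the three contributions and collect like terms. The coefficient of $\phi_{\theta,\sigma}(w)$ receives contributions only from the $\{Z>w\}$ and $\{|Z|\le w\}$ parts, the latter with the opposite sign (because $a$ is the lower limit of the right-tail integral but the upper limit of the middle integral, so the boundary terms $\varphi(a)$ and $a\varphi(a)$ flip sign), giving $\sigma^2(h^2w+h^2\theta+2dh)-\sigma^2(e^2w+e^2\theta+2me)$; symmetrically the coefficient of $\phi_{\theta,\sigma}(-w)$ comes from the $\{Z<-w\}$ and $\{|Z|\le w\}$ parts and equals $\sigma^2(-e^2w+e^2\theta+2me)-\sigma^2(-f^2w+f^2\theta+2gf)$; and regrouping $h^2\theta^2+2hd\theta+d^2=(h\theta+d)^2$ and its two analogues produces the probability coefficients $(h\theta+d)^2+h^2\sigma^2$, $(e\theta+m)^2+e^2\sigma^2$, $(f\theta+g)^2+f^2\sigma^2$. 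This reproduces the claimed expression. There is no conceptual obstacle here: the whole proof is the two elementary antiderivative identities plus bookkeeping. The one place where care is genuinely needed is the sign tracking in the middle region, where $a$ occurs as an upper endpoint and $b$ as a lower endpoint of the interval of integration, so the boundary terms enter with signs opposite to those in the two tail regions — getting this backwards is the only realistic way to make an error.
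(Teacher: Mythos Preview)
Your proposal is correct and follows essentially the same route as the paper: both reduce the problem to truncated second moments of a Gaussian via the integration-by-parts identity $\phi'(x)=-x\phi(x)$ (equivalently $\phi_{\theta,\sigma}'(z)=-(z-\theta)\sigma^{-2}\phi_{\theta,\sigma}(z)$), compute each of the three pieces, and sum. The only cosmetic difference is that the paper centers first, writing $(Z+c)^2=(Z-\theta)^2+2(Z-\theta)(\theta+c)+(\theta+c)^2$ and evaluating $\Ep[(Z+c)^2 1\{z_1<Z<z_2\}]$ in one stroke, whereas you standardize and expand into $Z^2$, $Z$, and constant terms; the algebra is equivalent.
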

\begin{proof}
  We first consider an expectation of the following form: for any  $-\infty\leq z_1<z_2\leq \infty$, and    $a, b\in\mathbb{R}$,  by integration by part,
\begin{eqnarray}\label{eqa.1}
&&\Ep(\theta-Z)(aZ+b)1\{z_1<Z<{z_2}\}=\sigma^2\int_{z_1}^{z_2}\frac{\theta-z}{\sigma^2}(az+b)\phi_{\theta,\sigma}(z)dz
\cr
&&=\sigma^2(az+b)\phi_{\theta,\sigma}(z)\big{|}_{z_1}^{z_2}-\sigma^2a\int_{z_1}^{z_2}\phi_{\theta,\sigma}(z)dz\cr
&&=\sigma^2[(a{z_2}+b)\phi_{\theta,\sigma}(z_2)-(a{z_1}+b)\phi_{\theta,\sigma}(z_1)]- \sigma^2a\Pr_{\theta,\sigma}(z_1<Z<z_2).
\end{eqnarray}

 This result will be useful in the following calculations. Setting $a=-1$, $b=\theta$ and $a=0$, $b=-2(\theta+c)$ respectively yields 
$$
\Ep(\theta-Z)^21\{z_1<Z<{z_2}\}=\sigma^2[(\theta-{z_2})\phi_{\theta,\sigma}(z_2)-(\theta-{z_1})\phi_{\theta,\sigma}(z_1)]+\sigma^2\Pr_{\theta,\sigma}(z_1<Z<z_2).
$$
 $$
 2\Ep(Z-\theta)(\theta+c)1\{z_1<Z<{z_2}\}=\sigma^2[-2(\theta+c)\phi_{\theta,\sigma}(z_2)+2(\theta+c)\phi_{\theta,\sigma}(z_1)].
 $$
 Therefore,  for any constant $c$,
\begin{eqnarray}\label{eqa.2}
 \Ep(Z+c)^21\{z_1<Z<{z_2}\}&=&\Ep(\theta-Z)^21\{z_1<Z<{z_2}\}+  (\theta+c)^2\Pr_{\theta,\sigma}(z_1<Z<{z_2}) \cr
&&+2\Ep(Z-\theta)(\theta+c)1\{z_1<Z<{z_2}\}\cr
&=&\sigma^2(z_1+\theta+2c)\phi_{\theta,\sigma}(z_1)-\sigma^2(z_2+\theta+2c)\phi_{\theta,\sigma}(z_2)
 \cr
 &&+((\theta+c)^2+\sigma^2)\Pr_{\theta,\sigma}(z_1<Z<z_2).
\end{eqnarray}
If none of $h,e,f$ are zero, by setting $z_1=w, z_2=\infty$, $c=d/h$;   $z_1=-\infty, z_2=-w,  c=g/f$ and $z_1=-w, z_2=w, c=m/e$ respectively,  we have
\begin{eqnarray}\label{eqa.3}
&& \Ep(hZ+d)^21\{Z>w\} =\sigma^2(h^2w+h^2\theta+2dh)\phi_{\theta,\sigma}(w) \cr
&& \hspace{1.5in}+ ((\theta h+d)^2+\sigma^2h^2)\Pr_{\theta,\sigma}(Z>w),\cr
&& \Ep(fZ+g)^21\{Z<-w\}=-\sigma^2(-wf^2+\theta f^2+2gf)\phi_{\theta,\sigma}(-w) \\
&& \hspace{1.5in} +((\theta f+g)^2+\sigma^2f^2)\Pr_{\theta,\sigma}(Z<-w),\cr
&& \Ep(eZ+m)^21\{|Z|<w\}=\sigma^2(-we^2+\theta e^2+2me)\phi_{\theta,\sigma}(-w)\cr
&& \hspace{1.5in} -\sigma^2(we^2+\theta e^2+2me)\phi_{\theta,\sigma}(w)
\cr
&&  \hspace{1.5in}  +((\theta e+m)^2+\sigma^2 e^2)\Pr_{\theta,\sigma}(|Z|<w).
\end{eqnarray}
If any of  $h,e,f$ is zero, for instance, suppose $h=0$,  then
$
\Ep(hZ+d)^21\{Z>w\} =d^2\Pr_{\theta,\sigma}(Z>w),
$ which can also be written as the first equality of (\ref{eqa.3}). Similarly, when either $e=0$ or $f=0$, (\ref{eqa.3}) still holds.

Therefore, summing up the   three terms of (\ref{eqa.3}) yields the desired result. \end{proof}

\subsection{Proof of Theorem \ref{th2.1}}   Recall that 
$\widehat\theta_{\text{lava}}=(1-k)Z+k\sh_{\lasso}(Z)$ is  a weighted average of $Z$ and the soft-thresholded estimator
 with shrinkage parameters $\lambda_1/(2k)$ and $k=\lambda_2/(1+\lambda_2)$.   Since $\sh_{\lasso}(Z)$ is a soft-thresholding estimator, results from \cite{donoho1995adapting} give that
$$
\Ep[(Z-\theta)\sh_{\lasso}(Z)]=\sigma^2\Pr_{\theta,\sigma}(|Z|>\lambda_1/(2k)).
$$
Therefore,  for $w=\lambda_1/(2k)$,
 \begin{eqnarray}\label{eq2.4}
2\Ep[(Z-\theta)d_{\lava}(Z)]=2(1-k)\sigma^2 +2k\sigma^2\Pr_{\theta,\sigma}(|Z|>w).
 \end{eqnarray}
Next we verify that 
    \begin{equation}
    \begin{array}{ll}
\Ep(Z-\sh_{\lava}(Z))^2&=-k^2(w  +\theta)\phi_{\theta,\sigma}(w)\sigma^2+k^2( \theta-w)\phi_{\theta,\sigma}(-w)\sigma^2 \cr
&+(\lambda_1^2/4)\Pr_{\theta,\sigma}(|Z|>w) +k^2(\theta^2+\sigma^2 )\Pr_{\theta,\sigma}(|Z|<w).
\end{array}
\end{equation}
 By definition,
 \begin{eqnarray}  \sh_{\lava}(z)-z&=& \begin{cases}
 -\lambda_1/2,& z>\lambda_1/(2k),\\
-kz, & -\lambda_1/(2k)<z\leq\lambda_1/(2k),\\
  \lambda_1/2,& z<-\lambda_1/(2k).
 \end{cases}
\end{eqnarray}
Let $F(z)=\sh_{\lava}(z)-z$. The claim then follows from applying Lemma \ref{la.1} by setting $h=f=m=0$, $d=-\lambda_1/2$, $e=-k$, $g=\lambda_1/2$, and  $w=\lambda_1/(2k)$.   

Hence \begin{eqnarray*}
\Ep(Z-d_{\lava}(Z))^2&=&-k^2(w  +\theta)\phi_{\theta,\sigma}(w)\sigma^2+k^2( \theta-w)\phi_{\theta,\sigma}(-w)\sigma^2 \cr
&&+(\lambda_1^2/4)\Pr_{\theta,\sigma}(|Z|>w) +k^2(\theta^2+\sigma^2 )\Pr_{\theta,\sigma}(|Z|<w).
\end{eqnarray*}

The risk of lasso is obtained from setting $\lambda_2=\infty$ and $\lambda_1=\lambda_l$ in the lava risk.
The risk of ridge is obtained from setting $\lambda_1 =\infty$ and $\lambda_2 = \lambda_r$ in the lava risk.

As for the risk of post-lava, note that 
 \begin{eqnarray*} 
\sh_{\textrm{post-lava}}(z)-\theta=\begin{cases}
 z-\theta,& |z|>\lambda_1/(2k)\\
(1-k)z-\theta, & |z|\leq\lambda_1/(2k).
 \end{cases}
 \end{eqnarray*}
Hence applying Lemma \ref{la.1} to $F(z) = \sh_{\textrm{post-lava}}(z)-\theta$,
i.e. by setting $h=f=1, e=1-k$ and $d=m=g=-\theta$, we obtain:
 \begin{eqnarray*} 
R(\theta,\widehat\theta_{\text{post-lava}})&=&\sigma^2[-k^2w  +2kw -k^2\theta ]\phi_{\theta,\sigma}(w)+\sigma^2 [-k^2w+2kw+k^2\theta]\phi_{\theta,\sigma}(-w)\cr
&&+ \sigma^2\Pr_{\theta,\sigma}(|Z|>w)+(k^2\theta^2+(1-k)^2\sigma^2)\Pr_{\theta,\sigma}(|Z|<w).
\end{eqnarray*} 

Finally, the elastic net shrinkage is given by$$\sh_{\mathrm{enet}}(z)=\frac{1}{1+\lambda_2}(|z|-\lambda_1/2)_+\text{sgn}(z).$$
 The risk of elastic net then follows from Lemma \ref{la.1} by setting 
 $F(z) = \sh_{\mathrm{enet}}(z)- \theta$, 
 $w=\lambda_1/2, h=f=1/(1+\lambda_2), e=0$, $d=-\lambda_1/(2(1+\lambda_2))-\theta$ and $g=\lambda_1/(2(1+\lambda_2))-\theta$.   \qed




 \subsection{Proof of Lemma \ref{th2.2}}
 For $\widehat\theta_{\text{lava}} =(\widehat\theta_{\lava,j})_{j=1}^p$, 
 we have $$\Ep\|\widehat\theta_{\text{lava}}-\theta\|_2^2=\sum_{j=1}^pR(\theta_j,\widehat\theta_{\lava,j}).$$ We now bound $R(\theta_j,\widehat\theta_{\lava,j})$ uniformly over $j=1,...,p$. We
 have that 
 $$
 \widehat\theta_{\text{lava},j}-\theta_j= (1-k) Z_j + k \widehat \delta_j - \theta_j = 
 [(1-k)(Z_j-\delta_j)-\beta_j]+k(\widehat\delta_j-\delta_j),\quad k=\frac{\lambda_2}{1+\lambda_2},
 $$
 where $\widehat\delta_j= \sh_{\lasso}(Z_j)$ with penalty level $\lambda_1$. Hence
 \begin{eqnarray*}
 \Ep( \widehat\theta_j-\theta_j)^2 && =\underbrace{\Ep[(1-k)(Z_j-\delta_j)-\beta_j]^2}_{I}+\underbrace{k^2\Ep(\widehat\delta_j-\delta_j)^2}_{II} \\
&& +\underbrace{2k\Ep\{(\widehat\delta_j-\delta_j)[(1-k)(Z_j-\delta_j)-\beta_j]\}}_{III}
 \end{eqnarray*}

 \textit{Bounding I}. Note that $Z_j-\delta_j\sim N(\beta_j,\sigma^2)$. The first term $G_j\equiv (1-k)(Z_j-\delta_j)-\beta_j$ is thus the bias of a ridge estimator, with $\Ep G_j^2=(1-k)^2\sigma^2+k^2\beta_j^2$.

  \textit{Bounding II}.   Note that $ \lambda_1= 2\sigma{\Phi^{-1}(1-c/(2p))}$. By Mill's ratio inequality, 		 as long as  $ \frac{2p}{\pi c^2}\geq {\log p}\geq \frac{1}{\pi c^2}$, $2 \sqrt{2\log p}>\lambda_1/\sigma>2 \sqrt{\log p}$. In addition,
  $$
\Ep(\widehat\delta_j-\delta_j)^2=\Ep(\widehat\delta_j-\theta_j)^2+\beta_j^2+2\Ep(\widehat\delta_j-\theta_j)\beta_j.
  $$
Since  $Z_j\sim N(\theta_j,\sigma^2)$, by Theorem \ref{th2.1} with $\lambda_l=\lambda_1/k> 2\sigma\sqrt{\log p}$ (since $k\leq1$),
 \begin{eqnarray*}
\Ep(\widehat\delta_j-\theta_j)^2& =&  -(\lambda_l/2  +\theta_j)\phi_{\theta_j,\sigma}(\lambda_l/2)\sigma^2+( \theta_j-\lambda_l/2)\phi_{\theta_j,\sigma}(-\lambda_l/2)\sigma^2 \cr
&&+(\lambda_l^2/4+\sigma^2)\Pr_{\theta_j,\sigma}(|Z_j|>\lambda_l/2) +\theta_j^2\Pr_{\theta_j,\sigma}(|Z_j|<\lambda_l/2)\cr
&\leq_{(1)}&  (\lambda_l^2/4+\sigma^2) \frac{\sigma}{\sqrt{2\pi}(\lambda_l/2-\theta_j)}e^{-(\lambda_l/2-\theta_j)^2/2\sigma^2} \cr
&&+(\lambda_l^2/4+\sigma^2)\frac{\sigma}{\sqrt{2\pi}(\lambda_l/2+\theta_j)} e^{-\frac{(\lambda_l/2+\theta_j)^2}{2\sigma^2}} 
+\theta_j^2\Pr_{\theta_j,\sigma}(|Z_j|<\lambda_l/2)\cr
&\leq_{(2)}& 2\times (\lambda_l^2/4+\sigma^2) \frac{4\sigma}{\sqrt{2\pi}\lambda_l}e^{-\lambda_l^2/(32\sigma^2)}    + \theta_j^2 \cr
&\leq_{(3)}&    \frac{4 \lambda_l\sigma^2}{\sqrt{2\pi}\sigma}e^{-\lambda_l^2/(32\sigma^2)}    + \theta_j^2\cr
&\leq_{(4)}&      \frac{4 \sigma^2}{\sqrt{2\pi}}e^{-\lambda_l^2/(64\sigma^2)}  +\theta_j^2\leq\frac{4\sigma^2}{\sqrt{2\pi}p^{(1/4)^2}}+\theta_j^2
\end{eqnarray*}  
 where (1) follows from the Mill's ratio inequality: $\int_x^{\infty}e^{-t^2/2}dt\leq x^{-1}e^{-x^2/2}$ for $x\geq 0$.  Also note that $\lambda_l/2\pm \theta_j>0$ since $\|\theta\|_{\infty}<M$ and $\sigma\sqrt{\log p}>2M$. Hence we can apply the Mill's ratio inequality respectively on $\Pr_{\theta_j,\sigma}(Z_j>\lambda_l/2) $ and $\Pr_{\theta_j,\sigma}(Z_j<-\lambda_l/2) $. In addition, the first two terms  on the right hand side are negative.
  (2) is due to $\lambda_l/2\pm\theta_j>\lambda_l/4$ since   $|\theta_j|<M$ and $ \sigma\sqrt{\log p}>2M$. 
  (3) follows since $4\sigma^2\leq \lambda_l^2$  when $p\geq e$. Finally,  for any $a>0$, and any $x>1+a^{-1}$, $ax^2>\log x$. Set $a=64^{-1}$; when $ \sqrt{\log p}>33$, $\log(\lambda_l/\sigma)<\lambda_l^2/(64\sigma^2)$. Hence $  \frac{ \lambda_l}{\sigma}e^{-\lambda_l^2/(32\sigma^2)} \leq e^{-\lambda_l^2/(64\sigma^2)}$,
   which gives (4). Therefore,
 $$
 \sum_{j=1}^p\Ep(\widehat\delta_j-\theta_j)^2\leq \frac{4p\sigma^2}{\sqrt{2\pi}p^{(1/4)^2}}+ \|\theta\|_2^2.
 $$

On the other hand, applying   (\ref{eqa.1}) and by the same arguments as above,   uniformly for $j=1,...,p$, 
 \begin{eqnarray*}|\Ep\widehat\delta_j|&=&|\sigma^2[\phi_{\theta_j,\sigma}(\lambda_l/2)-\phi_{\theta_j,\sigma}(-\lambda_l/2)]+(\theta_j-\lambda_l/2)\Pr_{\theta_j,\sigma}(Z_j>\lambda_l/2)\cr
 && +(\theta_j+\lambda_l/2)\Pr_{\theta_j,\sigma}(Z_j<-\lambda_l/2)|\cr
 &\leq&\frac{8\sigma}{\sqrt{2\pi}}e^{-\lambda_l^2/(32\sigma^2)} \leq   \frac{8\sigma}{\sqrt{2\pi}} \frac{1}{p^{1/8}}.
\end{eqnarray*}  
Moreover, $\|\theta\|_2^2+\|\beta\|_2^2-2\sum_{j=1}^p\theta_j\beta_j=\|\delta\|_2^2$. Hence
\begin{eqnarray*}  
\sum_{j=1}^p\Ep(\widehat\delta_j-\delta_j)^2&\leq& \sum_{j=1}^p\Ep(\widehat\delta_j-\theta_j)^2+\|\beta\|_2^2-2\sum_{j=1}^p\theta_j\beta_j+2\sum_{j=1}^p\Ep\widehat\delta_j\beta_j\cr
&\leq& \|\delta\|_2^2+ \frac{4p\sigma^2}{\sqrt{2\pi}p^{1/16}}+\|\beta\|_{\infty}\frac{10\sigma p}{\sqrt{2\pi}p^{1/8}}.
\end{eqnarray*}

  \textit{Bounding III}. By \cite{donoho1995adapting},  $
\Ep[(\theta_j-Z_j)\widehat\delta_j]=-\sigma^2\Pr_{\theta_j,\sigma}(|Z_j|>\lambda_1/(2k)).
$ Hence  (note that $\Ep(Z_j-\theta_j)=0$)
 \begin{eqnarray*}
\Ep[(\widehat\delta_j-\delta_j)(Z_j-\delta_j)]&=& \Ep[\widehat\delta_j(Z_j-\theta_j)]+\Ep[\widehat\delta_j-\delta_j)\beta_j]\cr
&=&\sigma^2\Pr_{\theta_j,\sigma}(|Z_j|>\lambda_1/(2k))+\beta_j\Ep\widehat\delta_j-\delta_j\beta_j\cr
&\leq &2\sigma^2\frac{4\sigma}{\sqrt{2\pi}\lambda_l} e^{-\lambda_l^2/(32\sigma^2)} +\beta_j\Ep\widehat\delta_j-\delta_j\beta_j  \cr
&\leq&   \frac{4\sigma^2}{\sqrt{2\pi}}\frac{1}{\sqrt{\log p}}\frac{1}{p^{1/8}}+\beta_j\Ep\widehat\delta_j-\delta_j\beta_j,
\end{eqnarray*}  
  implying
  \begin{eqnarray*}
 &&\sum_{j=1}^p\Ep\{(\widehat\delta_j-\delta_j)[(1-k)(Z_j-\delta_j)-\beta_j]\}\cr
 &=&(1-k)\sum_{j=1}^p\Ep[(\widehat\delta_j-\delta_j)(Z_j-\delta_j)]-\sum_{j=1}^p\Ep\widehat\delta_j\beta_j+\sum_{j=1}^p\delta_j\beta_j\cr
 &\leq& (1-k)   \frac{4\sigma^2}{\sqrt{2\pi}}\frac{p}{\sqrt{\log p}}\frac{1}{p^{1/8}}+(1-k)\sum_{j=1}^p\beta_j\Ep\widehat\delta_j-\sum_{j=1}^p\Ep\widehat\delta_j\beta_j+k\sum_{j=1}^p\delta_j\beta_j\cr
 &=& (1-k)   \frac{4\sigma^2}{\sqrt{2\pi}}\frac{p}{\sqrt{\log p}}\frac{1}{p^{1/8}}-k\sum_{j=1}^p\beta_j\Ep\widehat\delta_j +k\sum_{j=1}^p\delta_j\beta_j\cr
 &\leq& (1-k)   \frac{4\sigma^2}{\sqrt{2\pi}}\frac{p}{\sqrt{\log p}}\frac{1}{p^{1/8}}+k\|\beta\|_{\infty}p \frac{8\sigma}{\sqrt{2\pi}} \frac{1}{p^{1/8}} +k\sum_{j=1}^p\delta_j\beta_j\cr
  &\leq&  M\frac{9p\sigma}{\sqrt{2\pi}} \frac{1}{p^{1/8}} +k\sum_{j=1}^p\delta_j\beta_j \quad (\text{since } M^2\log p>16\sigma^2,  k<1, \|\beta\|_{\infty}<M).
\end{eqnarray*}  
Summarizing, we obtain (note that $\|\theta\|_2^2\leq \|\beta\|_2^2+3sM^2$)
  \begin{eqnarray*}
\sum_{j=1}^p\Ep( \widehat\delta_j-\theta_j)^2&\leq &\underbrace{p(1-k)^2\sigma^2+k^2\|\beta\|_2^2}_{I}+\underbrace{k^2 \|\delta\|_2^2+k^2 \frac{4p\sigma^2}{\sqrt{2\pi}p^{1/16}}+k^2\|\beta\|_{\infty}\frac{10\sigma p}{\sqrt{2\pi}p^{1/8}}}_{II}\cr
&&+\underbrace{2k M\frac{9p\sigma}{\sqrt{2\pi}} \frac{1}{p^{1/8}} +2k^2\sum_{j=1}^p\delta_j\beta_j}_{III}\cr
&=&p(1-k)^2\sigma^2+k^2\|\beta+\delta\|_2^2+k^2 \frac{4p\sigma^2}{\sqrt{2\pi}p^{1/16}}+(10k^2+18k)\frac{M\sigma p}{\sqrt{2\pi}p^{1/8}}\cr
&\leq &p(1-k)^2\sigma^2+k^2(\|\beta\|_2^2+3sM^2) + \frac{4p\sigma^2}{\sqrt{2\pi}p^{1/16}}+\frac{28M\sigma p}{\sqrt{2\pi}p^{1/8}}
\end{eqnarray*}  
Finally, due to $\Ep\|Z-\theta\|_2^2=p\sigma^2$,  we have
  \begin{eqnarray*}
\frac{ \Ep\| \widehat\theta_{\text{lava}}-\theta\|^2}{\Ep\|Z-\theta\|^2}&\leq&  (1-k)^2 +\frac{ k^2}{p\sigma^2}\|\beta\|_2^2+\frac{ k^2}{p\sigma^2}3sM^2+ \frac{4 }{\sqrt{2\pi}p^{1/16}}+\frac{28M}{\sigma\sqrt{2\pi}p^{1/8}}\cr
&=& 1-k+\frac{3sM^2}{p\sigma^2}+ \frac{4 }{\sqrt{2\pi}p^{1/16}}+\frac{28M}{\sigma\sqrt{2\pi}p^{1/8}}\end{eqnarray*}  
where the last equality is due to $(1-k)^2+\frac{k^2\|\beta\|_2^2}{p\sigma^2}=1-k$ for $k=\frac{\sigma^2p}{\|\beta\|_2^2+\sigma^2p}.$

\subsection{Proof of Theorem \ref{th2.3}}  The first result follows from equation (\ref{eq2.4}) in the proof of Theorem 2.1; the second result follows directly from (\ref{eq2.5}).
 \qed

 \section{Proofs for Section 3}
 
 \subsection{Proof of Theorem \ref{th3.1}} Let 
$\Q_{\lambda_2} = [X'X + n \lambda_2 I_p]$. Then for any $\delta \in \mathbb{R}^p$
\begin{eqnarray*}
X\{ \widehat \beta(\delta) + \delta\} & = & X \{ \Q_{\lambda_2}^{-1} X' (Y-X\delta) + \delta \} \\
& = & \PR_{\lambda_2} Y + (I_p- \PR_{\lambda_2}) X\delta = \PR_{\lambda_2}Y + \K_{\lambda_2} X\delta.
\end{eqnarray*}
The second claim of the theorem immediately follows from this.

Further, to show the first claim, we can write for any $\delta \in \mathbb{R}^p$,
\begin{eqnarray*}
&& \| Y - X\widehat \beta(\delta) - X \delta \|_2^2 = \| (I_n - \PR_{\lambda_2})(Y- X\delta) \|_2^2 
= \| \K_{\lambda_2} (Y - X\delta) \|_2^2, \\
&& n \lambda_2 \| \widehat \beta (\delta)\|_2^2 =  n \lambda_2 \|  \Q_{\lambda_2}^{-1} X'(Y - X\delta) \|_2^2. 
\end{eqnarray*}
The sum of these terms is equal to
$$
(Y- X\delta)'[\K_{\lambda_2}^2 +  n \lambda_2 X \Q_{\lambda_2}^{-1}\Q_{\lambda_2}^{-1} X'] (Y - X \delta) = \| \K_{\lambda_2}^{1/2} (Y- X \delta) \|_2^2,
$$
where the equality follows from the observation that,  since $\K_{\lambda_2}^2 = I_n  - 2 X \Q_{\lambda_2}^{-1} X' + X \Q_{\lambda_2}^{-1} X'X \Q_{\lambda_2}^{-1}X'$
and $[X'X + n\lambda_2 I_p] \Q_{\lambda_2}^{-1} = I_p$, we have
\begin{eqnarray*}
\K_{\lambda_2}^2 + n \lambda_2 X \Q_{\lambda_2}^{-1}\Q_{\lambda_2}^{-1} X' 
& = & I_n 
- 2 X \Q_{\lambda_2}^{-1} X'
+ X \Q_{\lambda_2}^{-1} [X'X + n \lambda_2 I_{p}] \Q_{\lambda_2}^{-1}X' \\
& = & I_n - X \Q_{\lambda_2}^{-1} X' = I - \Pr_{\lambda_2} =  \K_{\lambda_2}.
\end{eqnarray*}
Therefore, after multiplying by $n $, the profiled objective function in (\ref{lava.p})
can be expressed as:
$$
\| \K^{1/2}_{\lambda_2} (Y- X\delta)\|_2^2 + n \lambda_1 \| \delta\|_1.
$$
This establishes the first claim. \qed

\subsection{Proof of Theorem \ref{th3.2}}Consider  the following lasso problem:
$$
h_{\lambda}(\widetilde y):=\arg\min_{\delta \in \mathbb{R}^p} \left \{\frac{1}{n}\|\widetilde y -\K_{\lambda_2}^{1/2} X\delta\|_2^2+\lambda\|\delta\|_1 \right\}.
$$
Let $g_{\lambda}(\widetilde y,X):=\widetilde X h_{\lambda}(\widetilde y)$, where $\widetilde X := K_{\lambda_2}^{1/2} X$.
By Lemmas 1, 3  and 6 of \cite{tibshirani2012degrees}, $y \mapsto g_{\lambda_1}(y,X)$ is  continuous and almost differentiable, and 
$$
\frac{\partial g_{\lambda_1}(\widetilde y,X)}{\partial \widetilde y}=\widetilde X_{\hat J}(\widetilde X_{\hat J}'\widetilde X_{\hat J})^-\widetilde X_{\hat J}' .
$$
Then by Theorem \ref{th3.1},  $\widetilde Xd_{\lava}(y,X)= \widetilde Xh_{\lambda_1}(K_{\lambda_2}^{1/2}  y)   =g_{\lambda_1}(K_{\lambda_2}^{1/2}y, X )   $. Therefore, 
\begin{eqnarray*}
\nabla_y\cdot(\K_{\lambda_2} Xd_{\lava}(y,X))&=& \text{tr}\left(\K_{\lambda_2}^{1/2}\frac{\partial g_{\lambda_1}(\K_{\lambda_2}^{1/2} y,X )   }{\partial y}\right)=\tr(\K_{\lambda_2}^{1/2}\widetilde X_{\hat J}(\widetilde X_{\hat J}'\widetilde X_{\hat J})^-\widetilde X_{\hat J}'\K_{\lambda_2}^{1/2}).
\end{eqnarray*}
It follows from  (\ref{eq3.4}) that 
\begin{eqnarray*}\df(\widehat\theta)&=&\tr(\PR_{\lambda_2})+\Ep\tr(\K_{\lambda_2}^{1/2}\widetilde X_{\hat J}(\widetilde X_{\hat J}'\widetilde X_{\hat J})^-\widetilde X_{\hat J}'\K_{\lambda_2}^{1/2})\cr
&=&\tr(\PR_{\lambda_2})+\Ep\tr(\widetilde X_{\hat J}(\widetilde X_{\hat J}'\widetilde X_{\hat J})^-\widetilde X_{\hat J}'(I-\PR_{\lambda_2}))\cr
&=&\tr(\PR_{\lambda_2})+\Ep\tr(\widetilde X_{\hat J}(\widetilde X_{\hat J}'\widetilde X_{\hat J})^-\widetilde X_{\hat J}')-\Ep\tr(\widetilde X_{\hat J}(\widetilde X_{\hat J}'\widetilde X_{\hat J})^-\widetilde X_{\hat J}'\PR_{\lambda_2})\cr
&=&\Ep\rank(\widetilde X_{\hat J})+\Ep\tr(\widetilde \K_{\hat J}\PR_{\lambda_2}). \end{eqnarray*}
\qed

\subsection{Proof of  Lemma \ref{l3.1}}
  Note that $X\widehat\theta_{\text{lava}}+\PR_{\hat J}\widehat U=\PR_{\hat J}Y+\K_{\hat J}X\widehat\theta_{\text{lava}}= \PR_{\hat J}Y+\K_{\hat J}X\widehat\beta+\K_{\hat J}X\widehat\delta$ and  $ X\widetilde\theta_{\text{post-lava}}= \PR_{\hat J}Y+\K_{\hat J}X\widehat\beta$. Hence it suffices to show that $\K_{\hat J}X\widehat\delta=0$. In fact, let $\widehat\delta_{\hat J}$ be the vector of zero components of $\widehat\delta$, then $X\widehat\delta=X_{\hat J}\widehat\delta_{\hat J}$. So $\K_{\hat J}X\widehat\delta=\K_{\hat J}X_{\hat J}\widehat\delta_{\hat J}=0$ since $\K_{\hat J}X_{\hat J}=0.$
 \qed

\subsection{Proof of Theorem \ref{th3.3}}  \textbf{Step 1.}
By (\ref{eq3.4add}),
\begin{eqnarray*} 
\frac{1}{n}\|X\widehat\theta_{\text{lava}}-X\theta_0\|_2^2 & \leq &  \frac{2}{n}\|\K_{\lambda_2}X(\widehat\delta-\delta_0)\|_2^2+\frac{2}{n}\|\textsf{D}_{\text{ridge}}(\lambda_2)\|_2^2 \\
& \leq &  \frac{2}{n}\|\K^{1/2}_{\lambda_2}X(\widehat\delta-\delta_0)\|_2^2 \| \K_{\lambda_2}\| +\frac{2}{n}\|\textsf{D}_{\text{ridge}}(\lambda_2)\|_2^2, 
\end{eqnarray*}
since $ \| \K_{\lambda_2}\| \leq 1$ as shown below. Step 2 provides the bound $ (B_1(\delta_0)  \vee B_2(\beta_0)) \| \K_{\lambda_2}\|$ for the first term,
and Step 3 provides the bound $B_3 + B_4(\beta_0)$ on the second term.

Furthermore, since $X'\K_{\lambda_2}X=n\lambda_2S(S+\lambda_2I)^{-1}$, we have
 $$
 B_2(\beta_0) = \frac{8}{n}\|\widetilde X\beta_0\|_2^2=\frac{8}{n}\beta_0'X'\K_{\lambda_2}X\beta_0=8\lambda_2\beta_0'S(S+\lambda_2I)^{-1}\beta_0.
 $$
 
 Also, to show that $\|\K_{\lambda_2}\|_2\leq 1$, we let $\PR_{\lambda_2}=U_1D_1U_1'$ be the eigen-decomposition of $\PR_{\lambda_2}$, then $\|\K_{\lambda_2}\| =\|U_1(I-D_1)U_1'\|=\|I-D_1\|$. Note that all the nonzero eigenvalues of $D_1$ are the same as those of  $(X'X+n\lambda_2I)^{-1/2}X'X(X'X+n\lambda_2I)^{-1/2}$, and are $\{d_j/(d_j+n\lambda_2), j\leq \min\{n,p\}\}$, where $d_j$ is the $j$th largest eigenvalue of $X'X$. Thus $\|I-D_1\|=\max\{\max_j  n\lambda_2/(d_j+n\lambda_2), 1\}\leq 1.$


 Combining these bounds yields the result.

\textbf{Step 2.} Here we claim that on the event $\|\frac{2}{n}\widetilde X'\widetilde U\|_{\infty}\leq c^{-1}\lambda_1$, which holds
with probability $1- \alpha$, we have  
 $$
 \frac{1}{n}\|\widetilde X(\widehat\delta-\delta_0)\|_2^2\leq\frac{ 4 \lambda_1^2}{\iota^2(c, \delta_0, \lambda_1, \lambda_2)} \vee \frac{4^2 \|\widetilde X\beta_0\|_2^2}{n} = B_1(\delta_0) \vee  B_{2}(\beta_0).
 $$
  By (\ref{eq3.2}), for any $\delta\in\mathbb{R}^p$, 
 $$
\frac{1}{n}\|\widetilde Y-\widetilde X\widehat\delta\|_2^2+\lambda_1\|\widehat\delta\|_1\leq \frac{1}{n}\|\widetilde Y-\widetilde X\delta_0\|_2^2 +\lambda_1\|\delta_0\|_1.
 $$
 Note that $\widetilde Y=\widetilde X\delta_0+\widetilde U+\widetilde X\beta_0$, which implies the following basic inequality: for $\Delta=\widehat\delta-\delta_0$, on the event 
 $
 \|\frac{2}{n}\widetilde X'\widetilde U\|_{\infty}\leq c^{-1} \lambda_1$, 
\begin{eqnarray*}
 \frac{1}{n}\|\widetilde X\Delta\|_2^2 & \leq & \lambda_1 \Bigg ( \| \delta_0\|_1 - \| \delta_0 + \Delta\|_1 +\left|\frac{2}{n} \Delta \widetilde X'\widetilde U \right| \Bigg ) + 2 \left |\frac{1}{{n}}(\widetilde X\Delta)'(\widetilde X\beta_0) \right |\\
&\leq &  \lambda_1 \Bigg ( \| \delta_0\|_1 - \| \delta_0 + \Delta\|_1 + c^{-1}  \lambda \| \Delta\|_1  \Bigg ) + 2 \left \|\frac{1}{\sqrt{n}}\widetilde X\Delta \right \|_2 \left \|\frac{1}{\sqrt{n}}\widetilde X\beta_0 \right \|_2,
\end{eqnarray*}
or, equivalently,
$$
\frac{1}{n}\|\widetilde X\Delta\|_2^2 
\left(1 -\frac{ 2 \left \|\frac{1}{\sqrt{n}} \widetilde X\beta_0 \right \|_2}
{\left \|\frac{1}{\sqrt{n}}\widetilde X\Delta \right \|_2 } \right)
 \leq \lambda_1 \Big ( \| \delta_0\|_1 - \| \delta_0 + \Delta\|_1 + c^{-1}  \lambda \| \Delta\|_1  \Big ).$$ 
If $ \left \|\frac{1}{\sqrt{n}}\widetilde X\Delta \right \|_2 \leq 4 \left \|\frac{1}{\sqrt{n}} \widetilde X\beta_0 \right \|_2 $,
then we are done. Otherwise we have that
$$
\frac{1}{n}\|\widetilde X\Delta\|_2^2  \leq  2 \lambda_1 \Big ( \| \delta_0\|_1 - \| \delta_0 + \Delta\|_1 + c^{-1}  \| \Delta\|_1  \Big ).
$$
Thus  $\Delta \in \mathcal{R}(c, \delta_0, \lambda_1, \lambda_2)$ and hence by the definition of the design-impact factor
$$
\frac{1}{n}\|\widetilde X\Delta\|_2^2  \leq  2\lambda_1 \frac{ \frac{1}{\sqrt{n}}\|\widetilde X\Delta\|_2 }{\iota(c, \delta_0, \lambda_1, \lambda_2)} \implies \frac{1}{\sqrt{n}}\|\widetilde X\Delta\|_2  \leq   \frac{ 2\lambda_1 }{\iota(c, \delta_0, \lambda_1, \lambda_2)}.
$$


Combining the two cases yields the claim.

\textbf{Step 3}. Here we bound
$
\frac{2}{n}\|\textsf{D}_{\text{ridge}}(\lambda_2)\|_2^2.$
We have 
$$
\frac{2}{n}\|\textsf{D}_{\text{ridge}}(\lambda_2)\|_2^2  \leq \frac{4}{n}\|\K_{\lambda_2}X\beta_0\|_2^2+\frac{4}{n}\|\PR_{\lambda_2}U\|_2^2.$$

By \cite{hsu2014random}'s exponential inequality for deviation of quadratic form of sub-Gaussian vectors the following  bound applies with probability $1 -\epsilon$:
\begin{eqnarray*}
\frac{4}{n}\|\PR_{\lambda_2}U\|_2^2 &\leq & \frac{4\sigma_u^2}{n}[\tr(\PR_{\lambda_2}^2) + 2 \sqrt{ \tr(\PR_{\lambda_2}^4)  \log (1/\epsilon)} +  2\|\PR^2_{\lambda_2}\| \log(1/\epsilon)],\\
& \leq &  \frac{4\sigma_u^2}{n}[\tr(\PR_{\lambda_2}^2) + 2 \sqrt{ \tr(\PR_{\lambda_2}^2) \|\PR^2_{\lambda_2}\|  \log (1/\epsilon)} +  2\|\PR^2_{\lambda_2}\| \log(1/\epsilon)],\\
& \leq & \frac{4\sigma_u^2}{n}\left [\sqrt{\tr(\PR_{\lambda_2}^2)} +  \sqrt{2} \sqrt{\|\PR^2_{\lambda_2}}\| \sqrt{\log(1/\epsilon)} \right]^2 = B_3,
\end{eqnarray*}
where the second inequality holds by Von Neumann's theorem ({\cite{horn2012matrix}), and the last inequality is elementary.

  Furthermore, note that $\K_{\lambda_2}X=\lambda_2X(S+\lambda_2I)^{-1}$. Hence
 $$
 B_4(\beta_0) = \frac{4}{n}\|\K_{\lambda_2}X\beta_0\|_2^2=4\lambda_2^2\beta_0'(S+\lambda_2I)^{-1}S(S+\lambda_2I)^{-1}\beta_0=4\beta_0'V_{\lambda_2}\beta_0.  \quad \scriptstyle \blacksquare
 $$

\footnotesize

\bibliographystyle{ims}
\bibliography{liaoBib_newest}

\end{document}